\newtheorem{theorem}{Theorem}[section]
\newtheorem{proposition}[theorem]{Proposition}
\newtheorem{lemma}[theorem]{Lemma}
\newtheorem{corollary}[theorem]{Corollary}
\theoremstyle{definition}
\newtheorem{example}[theorem]{Example}
\newtheorem{remark}[theorem]{Remark}
\begin{document}

\newcommand{\beq}{\begin{equation}}  
\newcommand{\eeq}{\end{equation}}  
\newcommand{\bea}{\begin{eqnarray}}  
\newcommand{\eea}{\end{eqnarray}}  
\newcommand\la{{\lambda}}   
\newcommand\La{{\Lambda}}   
\newcommand\ka{{\kappa}}   
\newcommand\al{{\alpha}}   
\newcommand\be{{\beta}} 
\newcommand\ze{{\eta}} 
\newcommand\zet{{\nu}} 
\newcommand\gam{{\gamma}}     
\newcommand\om{{\omega}}  
\newcommand\tal{{\tilde{\alpha}}}  
\newcommand\tbe{{\tilde{\beta}}}   
\newcommand\tla{{\tilde{\lambda}}}  
\newcommand\tmu{{\tilde{\mu}}}  
\newcommand\si{{\sigma}}  
\newcommand\lax{{\bf L}}    
\newcommand\mma{{\bf M}}    
\newcommand\rd{{\mathrm{d}}}  
\newcommand\tJ{{\tilde{J}}}  
\newcommand\ri{{\mathrm{i}}} 

\newcommand{\N}{{\mathbb N}}
\newcommand{\Q}{{\mathbb Q}}
\newcommand{\Z}{{\mathbb Z}}
\newcommand{\C}{{\mathbb C}}
\newcommand{\R}{{\mathbb R}}

\newcommand\tI{{\tilde{\mathcal{I}}}}

\newcommand\SH{{\mathcal{S}}}

\newcommand\tr{{{\mathrm{tr}}\,}}


\title{Some integrable maps and their Hirota bilinear forms} 
\author{A.N.W. Hone and  T.E. Kouloukas 
\\School of Mathematics, Statistics \& Actuarial Science \\ University of Kent, UK \\  \\ 
G.R.W. Quispel 
\\ 
Department of Mathematics and Statistics  \\
La Trobe University, Bundoora VIC 3086, Australia
}
\maketitle

\begin{abstract}
We introduce a two-parameter family of birational maps, which reduces to a family previously found by Demskoi, Tran, van der Kamp and Quispel (DTKQ) when one of the parameters is set to zero. 
The study of the singularity confinement pattern for these maps  leads to the introduction of a tau function satisfying  a homogeneous 
recurrence which  has the Laurent property, and the tropical (or ultradiscrete) analogue of this 
homogeneous recurrence confirms the quadratic degree growth found empirically by Demskoi et al. 
We prove that the tau function also  satisfies two different 
bilinear equations, each of which is a reduction of the Hirota-Miwa equation (also known as the discrete KP equation, or the octahedron recurrence). Furthermore, these bilinear equations are related to reductions of particular two-dimensional integrable lattice equations, of discrete KdV or discrete Toda type. These connections, as well as the cluster algebra structure of the bilinear equations, allow a direct construction of Poisson brackets, Lax pairs and first integrals for the birational maps. As a consequence of the latter results, we show how each member of the family  can be lifted to a system 
that is  integrable in the  Liouville sense, clarifying observations made previously in the original DTKQ case. 
\end{abstract}

\section{Introduction} 

In recent work \cite{DTKQ},  Demskoi, Tran, van der Kamp and Quispel (DTKQ) introduced a one-parameter family of birational maps, 
given by the $N$th-order difference equation  
\beq\label{dtkq} 
\Big( u_n+u_{n+1}+\ldots +u_{n+N} \Big) \, u_{n+1}u_{n+2}\cdots u_{n+N-1}=\al, 
\eeq 
for each integer $N\geq 2$. It was shown that the equation (\ref{dtkq}) admits $\left \lfloor{\frac{N+1}{2}} \right \rfloor$ 
independent first integrals, 
explicitly derived in terms of 
multi-sums of products, 
and from a conjectured formula for the degrees $d_n$ of the iterates (quadratic in the index $n$)  it was inferred that 
$\lim_{n\to\infty} n^{-1}\log d_n =0$ for each $N$, 
so that the corresponding map should have vanishing algebraic entropy in the sense of \cite{hv}.  These results suggested that (\ref{dtkq}) should correspond to a
finite-dimensional  system that is  integrable in the  Liouville sense \cite{maeda,veselov}. 

For all $N$ it was noted in \cite{DTKQ} that, up to orientation,   the map 
$$\sigma: \quad (u_0,\ldots, u_{N-1})\mapsto (u_1,\ldots,u_N)$$ defined 
by (\ref{dtkq}) preserves the canonical volume form
$$ 
\Omega =   \rd u_0\wedge \cdots \wedge \rd u_{N-1}, 
$$ 
so that $\sigma^*\Omega =(-1)^N \Omega$ for each $N$. This means that for $N=2$ the map $\sigma$ is symplectic, while for $N=3$ it can be reduced to an anti-symplectic (orientation-reversing) map by restricting to a level set 
of one of the first integrals; so this is enough to imply Liouville integrability for $N=2,3$. 
However, in the absence of a suitable symplectic or Poisson structure, it is not possible to assert that the maps 
are Liouville integrable for $N\geq 4$. 


In this paper we consider a generalization of (\ref{dtkq}) with  two parameters, given by  
\begin{equation} \label{equ}
\left(\sum_{j=0}^{N}u_{n+j} +\beta\right)\prod_{k=1}^{N-1} u_{n+k} = \alpha,  
\end{equation}
and show that this slight extension allows a natural interpretation of the observations made in \cite{DTKQ}, in terms of 
a Liouville integrable system with $\left\lfloor{\frac{N}{2}} \right\rfloor$ degrees of freedom. In fact, 
for even $N$ we shall show that the solutions of (\ref{equ}) are related to the $(N,-1)$ periodic reduction of Hirota's discrete KdV equation, for which  Liouville integrability was proved in \cite{hkqt}, 
while for odd $N$ they are connected to reductions of a discrete Toda lattice, considered recently in \cite{hkw}. Furthermore, for all $N$ these maps are linked to reductions of the Hirota-Miwa equation (also known as the discrete KP equation, or the octahedron recurrence), which connects them to certain cluster algebras \cite{FH} and leads 
to some associated symplectic maps, referred to as U-systems \cite{hi}.  

The original derivation of the equation (\ref{dtkq}) was based on the fact that it is dual to a linear difference equation of 
order $N$, in the sense introduced in \cite{qcr}:  the pair of dual equations have  a first integral in common, and each equation appears as an integrating factor for 
the other one in the total difference of this first integral. The same observation applies to the more general version 
(\ref{equ}),   by introducing 
\beq\label{zeta} 
\zeta = \left( \sum_{j=0}^{N-1} u_{n+j} +\be \right) \Big( \al - \prod_{k=0}^{N-1}u_{n+k}\Big). 
\eeq 
The latter  quantity is seen to be a first integral of (\ref{equ})  by noting the identity 
$$ 
\Delta \zeta = (u_{n+N}-u_n) \left( \al -  \Big(\sum_{j=0}^{N}u_{n+j} +\beta\Big)\prod_{k=1}^{N-1} u_{n+k}\right),  
$$ 
in terms of the total difference 
$\Delta={\cal S} -1$, with 
$\cal S$ being the shift operator such that ${\cal S} F_n=F_{n+1}$ (where $F_n$ is any function of $n$). 
The linear factor above shows that (\ref{equ}) is dual to the linear equation $u_{n+N}-u_n =0$, in the same 
sense that (\ref{dtkq}) is, and for $\be=0$ the quantity $\zeta$ reduces to the first integral that is denoted by
the same 
letter in   \cite{DTKQ}. 
  
The key to our results is the use of the singularity confinement pattern of (\ref{equ}) to 
obtain its ``Laurentification'' \cite{hk}, i.e. a lift to a higher-dimensional system which has the Laurent property in the sense of \cite{fz}. The solution of the higher-dimensional system can then be shown  to satisfy a Hirota bilinear equation (in fact, two bilinear equations for each $N$). Our main result, from which all the rest can be derived, is the following.

\begin{theorem}\label{maintau} 
 Suppose that 
\beq\label{tausub} 
u_n = \frac{\tau_{n+3}\tau_n}{\tau_{n+2}\tau_{n+1}} 
\eeq 
is a solution of (\ref{equ}). Then $\tau_n$ satisfies the bilinear equation 
\beq\label{hir1}
\tau_{n+N+2}\tau_n = \gam_n \, \tau_{n+N+1}\tau_{n+1} +\al \, \tau_{n+N}\tau_{n+2},  
\eeq  
where the quantity $\gam_n$ is  
2-periodic, 
that is 
$$ 
\gam_{n+2}=\gam_n \qquad \forall n;  
$$ 
and conversely, the equation (\ref{hir1})  for  $\tau_n$, with the 2-periodic coefficient $\gam_n$,  has a first 
integral $\beta$ such that $u_n$ given by (\ref{tausub}) satisfies (\ref{equ}). Moreover, if $u_n$ is given 
by (\ref{tausub}),   then when $N$ is $\mathrm{even}$  (\ref{equ}) has  a first integral $K$ such that $\tau_n$ satisfies 
\beq\label{hir2}
\tau_{n+2N+1}\tau_n =-\al \, \tau_{n+2N}\tau_{n+1}+K \,   \tau_{n+N+1}\tau_{n+N},  
\eeq  
while for $N$ $\mathrm{odd}$  (\ref{equ}) has a first integral $\bar K$ such that 
\beq\label{hir3}
\tau_{n+2N+2}\tau_n = \al^2 \, \tau_{n+2N}\tau_{n+2} + \bar{K} \, \tau_{n+N+1}^2.   
\eeq  
\end{theorem}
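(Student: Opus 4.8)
The plan is to deduce everything from two elementary telescoping identities for the substitution~(\ref{tausub}), together with a single algebraic identity that at once encodes the first integral $\beta$ of~(\ref{equ}) and the $2$-periodicity of~$\gamma_n$. Writing $r_m:=\tau_{m+1}/\tau_m$, so that $u_m=r_{m+2}/r_m$, cancellation of the telescoping products gives, for \emph{any} sequence $\tau_m$,
\[
\prod_{k=1}^{N-1}u_{n+k}=\frac{\tau_{n+1}\,\tau_{n+N+2}}{\tau_{n+3}\,\tau_{n+N}},\qquad
\prod_{k=0}^{N-1}u_{n+k}=\frac{\tau_{n}\,\tau_{n+N+2}}{\tau_{n+2}\,\tau_{n+N}}.
\]
By the first identity, (\ref{equ}) is equivalent to the assertion that $\beta_n:=\alpha\big(\prod_{k=1}^{N-1}u_{n+k}\big)^{-1}-\sum_{j=0}^{N}u_{n+j}$ is independent of~$n$, its value then being~$\beta$. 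I then \emph{define} $\gamma_n$ by~(\ref{hir1}), i.e. $\gamma_n:=\big(\tau_{n+N+2}\tau_n-\alpha\,\tau_{n+N}\tau_{n+2}\big)/(\tau_{n+N+1}\tau_{n+1})$, so that~(\ref{hir1}) holds by construction and the content of the first part of the theorem is the $2$-periodicity of this~$\gamma_n$.

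The heart of the proof is a chain of equivalences. A one-line shift computation gives $\beta_{n+1}-\beta_n=\alpha(u_{n+1}-u_{n+N})\big(\prod_{k=1}^{N}u_{n+k}\big)^{-1}-(u_{n+N+1}-u_n)$, so $\beta_n$ is constant for all~$n$ if and only if $\alpha(u_{n+1}-u_{n+N})=(u_{n+N+1}-u_n)\prod_{k=1}^{N}u_{n+k}$ for all~$n$. Substituting~(\ref{tausub}) and the second telescoping identity (shifted by~$1$) into the latter and clearing the common denominator $\tau_{n+2}\tau_{n+3}\tau_{n+N+1}\tau_{n+N+2}$ turns it into
\[
\tau_{n+1}\tau_{n+N+1}\big(\tau_{n+N+4}\tau_{n+2}-\alpha\,\tau_{n+4}\tau_{n+N+2}\big)=\tau_{n+3}\tau_{n+N+3}\big(\tau_{n+N+2}\tau_{n}-\alpha\,\tau_{n+N}\tau_{n+2}\big),
\]
and by the definition of $\gamma$ (used at $n+2$ on the left, at $n$ on the right) this is just $\gamma_{n+2}\,\tau_{n+1}\tau_{n+3}\tau_{n+N+1}\tau_{n+N+3}=\gamma_n\,\tau_{n+1}\tau_{n+3}\tau_{n+N+1}\tau_{n+N+3}$, i.e. it is \emph{equivalent} to $\gamma_{n+2}=\gamma_n$. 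This settles both directions of~(\ref{hir1}): if~(\ref{equ}) holds then $\beta_n\equiv\beta$ and hence $\gamma$ is $2$-periodic; conversely, $2$-periodicity of $\gamma$ forces $\beta_n$ to be a constant~$\beta$, which then makes~(\ref{equ}) hold. (Feeding the above into~(\ref{zeta}) one also obtains $\zeta=\gamma_n\gamma_{n+1}$, giving a second proof of $2$-periodicity from the constancy of~$\zeta$.)

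For the second bilinear forms I would work directly from~(\ref{hir1}) and $\gamma_{n+2}=\gamma_n$. If $N$ is even, put $A_n:=\tau_{n+2N+1}\tau_n+\alpha\,\tau_{n+2N}\tau_{n+1}$; then
\[
\tau_{n+N}A_{n+1}-\tau_{n+N+2}A_n=\tau_{n+1}\big(\tau_{n+2N+2}\tau_{n+N}-\alpha\,\tau_{n+2N}\tau_{n+N+2}\big)-\tau_{n+2N+1}\big(\tau_{n+N+2}\tau_n-\alpha\,\tau_{n+N}\tau_{n+2}\big),
\]
which by~(\ref{hir1}) at $n$ and at $n+N$ equals $(\gamma_{n+N}-\gamma_n)\,\tau_{n+1}\tau_{n+N+1}\tau_{n+2N+1}=0$, since $\gamma_{n+N}=\gamma_n$ ($N$ even). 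Hence $A_n/(\tau_{n+N}\tau_{n+N+1})$ is $n$-independent, which is precisely~(\ref{hir2}) with $K$ that constant. If $N$ is odd, put $\bar A_n:=\tau_{n+2N+2}\tau_n-\alpha^2\tau_{n+2N}\tau_{n+2}$; using~(\ref{hir1}) at $n$ and at $n+N+1$ (both with coefficient $\gamma_n$, as $N+1$ is even) to rewrite $\tau_n\tau_{n+N+2}$ and $\tau_{n+N+1}\tau_{n+2N+3}$, the $\gamma_n$-terms cancel and $\tau_{n+N+1}^2\bar A_{n+1}-\tau_{n+N+2}^2\bar A_n$ collapses to
\[
\alpha\,\tau_{n+2N+1}\tau_{n+N+1}\big(\tau_{n+1}\tau_{n+N+3}-\alpha\,\tau_{n+3}\tau_{n+N+1}\big)-\alpha\,\tau_{n+2}\tau_{n+N+2}\big(\tau_{n+N}\tau_{n+2N+2}-\alpha\,\tau_{n+2N}\tau_{n+N+2}\big);
\]
applying~(\ref{hir1}) at $n+1$ and at $n+N$ (with $\gamma_{n+N}=\gamma_{n+1}$, as $N$ is odd) both parentheses become multiples of~$\gamma_{n+1}$ and the whole expression vanishes, so $\bar A_n/\tau_{n+N+1}^2$ is $n$-independent, which is~(\ref{hir3}). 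Since $K$ and $\bar K$ are homogeneous of degree~$0$ in the $\tau_m$ with exponent patterns divisible by $(1-x)^2(1+x)$ --- the divisor attached to $u_m=\tau_{m+3}\tau_m/(\tau_{m+2}\tau_{m+1})$ --- they are rational functions of the $u_m$, and being $n$-independent they descend to first integrals of~(\ref{equ}).

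The step I expect to be most delicate is the odd-$N$ identity for~(\ref{hir3}): because $\gamma_n$ has period $2$ but not period $N$, the one-step telescoping that works for~(\ref{hir2}) is unavailable, and one must combine four shifts of~(\ref{hir1}) in which the coefficient alternates between $\gamma_n$ and $\gamma_{n+1}$, arranging that the $\gamma_n$-terms cancel before the residual $\gamma_{n+1}$-identity is applied; making that cancellation transparent is the real work. A secondary point is justifying that $K$ and $\bar K$ are genuine first integrals of~(\ref{equ}) rather than mere constants of the $\tau$-recurrence, which is the divisibility remark above. Finally, all the manipulations divide by various $\tau_m$ (equivalently $u_m$), so one argues under the standing genericity hypothesis that these do not vanish, the degenerate configurations requiring a separate limiting argument.
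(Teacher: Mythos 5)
Your proposal is correct, and while its overall architecture mirrors the paper's (establish (\ref{hir1}) with 2-periodic $\gam_n$ as equivalent to the constancy of $\be$, then derive (\ref{hir2})/(\ref{hir3}) from (\ref{hir1}) plus periodicity), it differs in two worthwhile ways. For the first part, the paper routes the computation through the intermediate variables $w_n=\tau_n\tau_{n+2}/\tau_{n+1}^2$ and the modified equation (\ref{eqw}), proving that $\gam_n[w]$ in (\ref{gamma}) is a 2-integral via the identity $({\cal S}^2-1)\gam_n[w]=E_n[w]\prod_{j=2}^N w_{n+j}$; your telescoping-ratio computation with $r_m=\tau_{m+1}/\tau_m$ is the same identity read off directly at the level of $\tau$, so nothing essential changes there (your $\be_{n+1}-\be_n$ is exactly $-E_n[u]$ from (\ref{ushift})). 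The genuine divergence is in the even case: the paper obtains (\ref{hir2}) by identifying (\ref{hir1}) with a bilinear form of the $(N,1)$ discrete KdV reduction and invoking Proposition 4.2 of \cite{hkw}, whereas your two-line telescoping identity $\tau_{n+N}A_{n+1}-\tau_{n+N+2}A_n=(\gam_{n+N}-\gam_n)\tau_{n+1}\tau_{n+N+1}\tau_{n+2N+1}$ is self-contained and elementary --- it buys independence from the external reference at the cost of losing the structural link to lattice KdV that the paper exploits later for Poisson brackets and Lax pairs. Your odd-case computation is essentially the paper's calculation of $\Delta\bar K$ (which reduces to $({\cal S}^{N+1}-1)\gam_n$ and $({\cal S}^{N-1}-1)\gam_{n+1}$) in a different packaging. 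Two minor points: your existence argument for $K[u]$, $\bar K[u]$ via divisibility of the exponent polynomials by $(x-1)^2(x+1)$ is sound but less explicit than the closed forms (\ref{intku}) and (\ref{intkbaru}) the paper records; and the genericity caveat about vanishing $\tau_m$ is handled no more carefully in the paper than in your sketch, so you are not missing anything there.
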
  

An outline of the paper is as follows. 
In the next section, we explain how we originally obtained  the above result, using the singularity confinement method (or 
an arithmetical analogue of it) to find a Laurentification of (\ref{equ}), given by a multilinear equation 
for a tau function (equation (\ref{multi}) in section 2). 
We also present a tropical (ultradiscrete) version of the multilinear equation, as well as 
a corresponding tropical version of (\ref{equ}), 
and show how this can be used to obtain an explicit formula for degree growth 
 (quadratic in $n$).  
For any fixed $N$, it is then possible to derive the bilinear equation (\ref{hir1}), as well as  (\ref{hir2}) or  (\ref{hir3}), 
either numerically (with specific initial data) or symbolically (working with rational functions of initial data).  
The complete proof of Theorem \ref{maintau}, for arbitrary $N$,  is reserved until section 3, where we begin by deriving  (\ref{hir1})  
via a modified version of (\ref{equ}) (see equation (\ref{eqw}) below), before 
treating the rest of the result and its detailed consequences for even/odd $N$ separately. The 
interpretation in terms of Liouville integrability is naturally achieved by considering a lift of (\ref{equ}) to dimension 
$N+1$, obtained by eliminating the parameter $\be$: this yields equation (\ref{ushift}) in section 3. 
A schematic picture of the connections between the main equations involved, valid for arbitrary $N$, is 
provided by the following diagram: 
\begin{center}
\begin{tikzpicture}[scale=.7]

\draw[thick]  (5,4) rectangle (11,5)
(4.5,2) rectangle (11.5,3)
(8.25,0) rectangle (13.75,1)
(2,0) rectangle (7.25,1)
(4,-2) rectangle (12,-1);
                      
 \draw   (8,4.5) node {bilinear equation (\ref{hir1})}
 (8,2.5) node {multilinear equation (\ref{multi})}
 (11,0.5) node {modified DTKQ  (\ref{eqw})}
 (4.5,0.5) node {lifted DTKQ  (\ref{ushift})}
  (8,-1.5) node {generalized DTKQ equation  (\ref{equ})};

\draw[thick,->]  (7.75,4) -- (7.75,3);	 
\draw[thick,->]  (7.75,2) -- (7.75,-1);  
\draw[thick,->]  (9.5,2) -- (9.5,1); 
\draw[thick,->]  (6.2,0) -- (6.2,-1);
\draw[thick,->]  (9.5,0) -- (9.5,-1); 
\end{tikzpicture}
\end{center}
%
The vertical arrows above denote maps between solutions of an equation and the one below it. 
The other results in section 3 are based on the connection between bilinear equations and cluster algebras, as 
explained in \cite{FH}, which leads to a Poisson structure for the lifted DTKQ equation (\ref{ushift}). For $N$ even, both 
 bilinear equations (\ref{hir1}) and (\ref{hir2}) reveal the connection with reductions of Hirota's discrete KdV equation;
while  for $N$ odd,  the second bilinear equation (\ref{hir3}) leads to a link with reductions of a 
discrete time Toda equation, as 
well as an associated B\"acklund transformation (or BT, in the sense of \cite{kuskly}). 
In order to illustrate these  results, we provide full details for the particular even case $N=4$  in section \ref{even}, and 
for the odd case $N=5$ in section \ref{odd}, before finishing with some conclusions.    

\section{Singularity confinement 
and Laurentification} 

In this section we 
describe the experimental approach which led us to 
Theorem \ref{maintau}. 

The first relevant tool here is the singularity confinement  test, which was introduced in \cite{grp} as a heuristic  method for
identifying discrete systems that may be integrable. In its original form, this method has the drawback that many systems 
with confined singularities have positive algebraic entropy \cite{hv}, but recently singularity confinement has been refined to include information  about deautonomization, which renders it a more effective tool \cite{con1,con2,con3}. If we apply the basic singularity confinement test to (\ref{equ}) for a few small values of $N$, then in all cases we find that the 
singularity pattern is 
\beq\label{sing} 
\ldots, \epsilon, \epsilon^{-1}, \epsilon^{-1},\epsilon, \ldots,  
\eeq  
where the latter is the leading power of $\epsilon$ when the singularity is approached as $\epsilon\to 0$. 

In fact, in order to see the singularity pattern, we do not really need to apply the singularity confinement test per se, but 
rather an arithmetical version of it, by considering orbits of (\ref{equ}) defined over $\Q$, for rational values of the 
parameters $\al,\be$. This can be turned into a semi-numerical method for measuring the growth of complexity \cite{aba}, 
with the rate of growth of the logarithmic heights of the iterates being taken as a measure of entropy \cite{halburd}. 
Furthermore, if the map is defined over $\Q$, then one can consider reduction  modulo a prime $p$, in which case the appearance of a singularity 
at some iterate $u_n\in \Q$ means that the $p$-adic norm $\left|u_n\right|_p>1$, and the $p$-adic expansion of the iterates 
(expanding in powers of $p$) is analogous to the expansion in powers of $\epsilon$ in the usual 
singularity confinement test (see \cite{kanki1,kanki2} for an application of this idea). 

To see this method in practice, consider 
(\ref{equ}) for $N=5$ with $\al=3$, $\be=-10$, which gives the recurrence 
$$ 
u_{n+5} = 10 -(u_n+u_{n+1}+u_{n+2}+u_{n+3}+u_{n+4}) + \frac{3}{u_{n+1}u_{n+2}u_{n+3}u_{n+4}}, 
$$ 
defined over $\Q$, 
and choose the five rational  initial data $u_0=u_1=u_2=u_3=1$, $u_4=4$. 
The sequence continues as  
\beq\label{ratseq} 
\frac{11}{4}, \frac{23}{44}  , \frac{316}{253}, 
\frac{1628} {1817}, \frac{ 7153}{2923}, \frac{194735}{46028}, \frac{2800493}{3066460}, 
\frac{115286767}{186573385}, \ldots ,
\eeq 
and if we factorize each of the above terms then we find 
$$ 
\tfrac{11}{2^2}, \tfrac{23}{2^2\cdot 11}, \tfrac{ 2^2\cdot 79}{11\cdot 23}, 
\tfrac{2^2\cdot 11\cdot 37}{23\cdot 79} , \tfrac{ 23\cdot 311}{37\cdot 79}, 
\tfrac{5\cdot 17\cdot 29 \cdot 79} {2^2\cdot 37\cdot 311}   
, \tfrac{37 \cdot 75689}{2^2\cdot  5\cdot 17\cdot 29\cdot 311}, 
\tfrac{59\cdot 61\cdot 103\cdot 311}{5\cdot 17\cdot 29\cdot 75689}, \ldots ,
$$
which can be taken $\bmod \, p$ for $p=11,23,37,79,311$ etc. to reveal the singularity pattern 
$p^{-1},p,p,p^{-1}$ at  leading order, in accordance with (\ref{sing}) (while the choice $p=2$ is special here because the coefficient $\be$ vanishes $\bmod \,2$ in this example).    

There is a close link between singularity confinement for discrete systems and the Laurent property \cite{honeconf,mase1}. 
In the context of integrability, the Laurent property appears at the level of Hirota bilinear equations: 
the Hirota-Miwa (discrete KP) equation can be derived from mutations in a cluster algebra   \cite{okubo}, 
which means  that it has the Laurent property, and this property is inherited by its reductions to recurrences of Somos (or Gale-Robinson) type \cite{fz,mase2}.   
Furthermore, it seems likely that any birational map with confined singularities can be lifted to a 
higher-dimensional ``Laurentified'' system, i.e. one that has the Laurent property. 
In specific examples, Laurentification in this sense has been obtained by passing to homogeneous coordinates \cite{viallet},  or by using 
recursive factorization \cite{hk,hhkq}. 

For the equations (\ref{equ}), 
regardless of the means by which we obtain 
the singularity pattern, the form of (\ref{sing})  immediately suggests that we should try the substitution (\ref{tausub})  
in order to obtain the Laurentification. 
\begin{proposition}\label{lify} 
Given the substitution (\ref{tausub}),  $u_n$ is a solution of (\ref{equ}) 
whenever 
 the tau function $\tau_n$ satisfies 
the multilinear relation 
\beq\label{multi} 
\begin{array}{rl} 
\tau_{n+N+3} \tau_{n+N}^2\prod_{j=1}^{N-1}\tau_{n+j} & =   
\al \tau_{n+3}\tau_{n+N}\prod_{j=2}^{N+1} \tau_{n+j}  
-\be 
 \prod_{j=1}^{N+2}\tau_{n+j}\\ &
-\tau_n\tau_{n+3}\prod_{j=3}^{N+2} \tau_{n+j}   
-\sum_{k=1}^{N-1}\tau_{n+k}\tau_{n+k+3} \underset{j\neq k+1,k+2}{\prod_{j=1}^{N+2}}\tau_{n+j}, 
\end{array} 
\eeq
which is of order $N+3$ and homogeneous of degree $N+2$.  
For each $N\geq 2$ the recurrence (\ref{multi}) has the Laurent property, i.e. the iterates are 
Laurent polynomials in the initial data $\boldsymbol{ \tau}=(\tau_0,\tau_1,\ldots, \tau_{N+2})$. More precisely,  
$$ 
\tau_n \in \Z[\al,\be, \tau_0^{\pm 1},\ldots, \tau_{N+2}^{\pm 1}] \qquad \forall n\in\Z.   
$$
\end{proposition}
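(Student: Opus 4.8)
\emph{Part 1 (the substitution).} This reduces to a telescoping identity. Under (\ref{tausub}) one has $u_{n+k}=\tau_{n+k+3}\tau_{n+k}/(\tau_{n+k+2}\tau_{n+k+1})$, and the product over $k=1,\ldots,N-1$ collapses to $\prod_{k=1}^{N-1}u_{n+k}=\tau_{n+1}\tau_{n+N+2}/(\tau_{n+3}\tau_{n+N})$, so that (\ref{equ}) is equivalent to $\sum_{j=0}^{N}u_{n+j}+\be=\al\,\tau_{n+3}\tau_{n+N}/(\tau_{n+1}\tau_{n+N+2})$. Multiplying both sides by $\prod_{j=1}^{N+2}\tau_{n+j}$ clears every denominator: the $j=N$ term of the sum becomes $\tau_{n+N+3}\tau_{n+N}^{2}\prod_{j=1}^{N-1}\tau_{n+j}$, the $j=0$ term becomes $\tau_{n}\tau_{n+3}\prod_{j=3}^{N+2}\tau_{n+j}$, each $j=k$ with $1\le k\le N-1$ becomes $\tau_{n+k}\tau_{n+k+3}$ times the product of $\tau_{n+j}$ over $1\le j\le N+2$ with $j\neq k+1,k+2$, the $\be$-term becomes $\be\prod_{j=1}^{N+2}\tau_{n+j}$, and the right-hand side becomes $\al\,\tau_{n+3}\tau_{n+N}\prod_{j=2}^{N+1}\tau_{n+j}$. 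Moving the $j=N$ term to the left and the rest to the right gives exactly (\ref{multi}), whose order $N+3$ and homogeneity of degree $N+2$ are then immediate.

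\emph{Part 2 (the Laurent property).} I would prove this by induction inside the UFD $\mathcal{L}=\Z[\al,\be][\tau_0^{\pm1},\ldots,\tau_{N+2}^{\pm1}]$, running it forwards ($n\ge N+3$) and backwards ($n\le -1$) in parallel; the two directions are on equal footing because (\ref{multi}) expresses $\tau_{n+N+3}$ rationally in terms of $\tau_n,\ldots,\tau_{n+N+2}$ and, being linear in $\tau_n$, equally expresses $\tau_n$ rationally in terms of $\tau_{n+1},\ldots,\tau_{n+N+3}$. Solving for the top iterate gives $\tau_{n+N+3}=P_n/(\tau_{n+N}^{2}\prod_{j=1}^{N-1}\tau_{n+j})$, with $P_n$ the right-hand side of (\ref{multi}). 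The base step $n=0$ is immediate since $P_0$ is an honest polynomial in the seed; the content lies in the inductive step, namely that the denominator $\tau_{n+N}^{2}\prod_{j=1}^{N-1}\tau_{n+j}$ divides $P_n$ in $\mathcal{L}$. The plan is to carry through the induction a sharpened hypothesis recording (i) precisely which seed variables divide the denominator of each previously built $\tau_m$, and with what multiplicity, and (ii) a coprimeness statement, for instance that $\tau_{n+1},\ldots,\tau_{n+N}$ are pairwise coprime in $\mathcal{L}$. Granting (ii), divisibility by $\prod_{j=1}^{N-1}\tau_{n+j}$ reduces to checking $\tau_{n+j}\mid P_n$ for each $j$ separately, which one does by working in $\mathcal{L}$ modulo a prime factor of $\tau_{n+j}$, observing that most terms of $P_n$ drop out, and invoking the earlier instance of (\ref{multi}) that produced $\tau_{n+j}$ to see that the surviving terms cancel.

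The step I expect to be the real obstacle is the \emph{squared} factor $\tau_{n+N}^{2}$: one must show that $P_n$ vanishes to order at least two along $\tau_{n+N}$, not merely to order one, which means following the confined singularity of (\ref{equ}) through two successive steps rather than one. This double vanishing is the algebraic shadow of the doubled power in the singularity pattern (\ref{sing}): under (\ref{tausub}) a simple zero $\tau_m\sim\epsilon$ generates the string $u_{m-3},u_{m-2},u_{m-1},u_m\sim\epsilon,\epsilon^{-1},\epsilon^{-1},\epsilon$, and it is exactly the two consecutive $\epsilon^{-1}$'s that promote $\tau_{n+N}$ to a double factor of the denominator; propagating this simultaneously with the coprimeness in (ii) is the technical heart of the argument. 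A more economical alternative, at the price of a forward reference, is to invoke Theorem \ref{maintau}: by Part 1 any $\tau$ satisfying (\ref{multi}) produces a $u$ satisfying (\ref{equ}), hence that same $\tau$ satisfies (\ref{hir1}) with a $2$-periodic coefficient whose two values $\gam_0=(\tau_{N+2}\tau_0-\al\,\tau_N\tau_2)/(\tau_1\tau_{N+1})$ and $\gam_1$ — the latter involving only the already-Laurent $\tau_{N+3}$ — lie in $\mathcal{L}$; since (\ref{hir1}) is a periodic reduction of the Hirota--Miwa equation it has the Laurent property, and substituting these coefficients then gives $\tau_n\in\mathcal{L}$ for every $n$, the troublesome double factor now absorbed into the known Laurent property of a Gale--Robinson-type reduction.
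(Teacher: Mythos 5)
Your Part 1 computation is correct, and your ``more economical alternative'' for Part 2 --- showing $\gamma_0,\gamma_1\in\Z[\al,\be,\tau_0^{\pm1},\ldots,\tau_{N+2}^{\pm1}]$ (using that $\tau_{N+3}$ is already Laurent) and then invoking the Laurent property of the nonautonomous Somos recurrence (\ref{hir1}) via the forward reference to Theorem \ref{maintau} --- is precisely the proof given in the paper, which cites Proposition 5.4 of \cite{mase2} for that Laurent property. The direct induction you sketch first is never completed (you rightly flag the double factor $\tau_{n+N}^{2}$ as the obstacle), but since you fall back on the working alternative, the proposal is correct and takes essentially the same route as the paper.
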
 
\begin{proof} To prove the Laurent property for the equation (\ref{multi}), we make use of the bilinear relation (\ref{hir1}) in Theorem \ref{maintau}, of which an 
independent proof is given in the next section. Let ${\cal R}:=\Z[\al,\be, \tau_0^{\pm 1},\ldots, \tau_{N+2}^{\pm 1}]$. The period 2 coefficient appearing in 
(\ref{hir1}) takes two distinct values, given by 
$$ 
\gam_0=({\tau_{N+1}\tau_1})^{-1}\Big({\tau_{N+2}\tau_0-\al \tau_N\tau_2}\Big), \quad 
\gam_1=({\tau_{N+2}\tau_2})^{-1}\Big({\tau_{N+3}\tau_1-\al \tau_{N+1}\tau_3}\Big) \in {\cal R}, 
$$ 
using the fact that $\tau_{N+3}\in{\cal R}$, which follows directly from (\ref{multi}). So the iterates of (\ref{multi}) coincide with those of (\ref{hir1}), subject to 
fixing $\gam_0,\gam_1$ as above. Now we can make use of Proposition 5.4 in \cite{mase2}, 
which implies that the nonautonomous Somos recurrence (\ref{hir1}) has the Laurent property, meaning that 
$$ 
\tau_{n}\in  
\Z[\al,\gam_0, \gam_1, \tau_0^{\pm 1},\ldots, \tau_{N+1}^{\pm 1}] 
$$
for all $n$. Upon substituting $\gam_0,\gam_1\in{\cal R}$ into the Laurent polynomials obtained from (\ref{hir1}), the result follows. 
\end{proof}

For $\be=0$, corresponding to (\ref{dtkq}), particular cases of the preceding result have been proved before. The case $N=2$ of the recurrence (\ref{multi})  with $\be=0$ was found 
previously by the recursive factorization method: this is Theorem 8 in \cite{hk}, while Theorem 10 in the same paper corresponds to the case $N=3$, but with the inclusion of certain periodic coefficients; and some results for general $N$ are found  in \cite{hamad}. 
For any particular $N$ it can be verified directly with computer algebra that the Laurent property holds, by applying a method attributed to Hickerson,  
that is described in \cite{rob};      
but a direct proof along these lines for all $N$ is not so straightforward.   

\begin{example}\label{mul5} 
When $N=5$, the recurrence (\ref{multi}) becomes 
\beq\label{multi5} 
\begin{array} {rcl} 
\tau_{8}\tau_{5}^2\tau_{4}\tau_{3}\tau_{2} \tau_{1} & = &   
\al \tau_{6}\tau_{5}^2\tau_{4}\tau_{3}^2\tau_{2}  
-\be 
\tau_{7}  \tau_{6} \tau_{5}  \tau_{4} \tau_{3} \tau_{2} \tau_{1}  
 -\tau_{7}  \tau_{6} \tau_{5}  \tau_{4} \tau_{3}^2  \tau_{0} 
 - \tau_{7}  \tau_{6} \tau_{5}  \tau_{4}^2 \tau_{1}^2  \\
&& 
 - \tau_{7}  \tau_{6} \tau_{5}^2  \tau_{2}^2 \tau_{1} 
 - \tau_{7}  \tau_{6}^2 \tau_{3}^2  \tau_{2} \tau_{1} 
 - \tau_{7}^2  \tau_{4}^2 \tau_{3} \tau_{2}  \tau_{1} ,
\end{array}  \eeq 
where we have set the index $n\to 0$ for brevity.
\end{example}

The Laurent property means that the iterates of (\ref{multi}) can be written as 
\beq\label{lpoly} 
\tau_n =\frac{   \mathrm{N}_n ( \boldsymbol{\tau})  }
{ \boldsymbol{\tau}^{{\bf d}_n} } , 
\eeq 
where the numerator $\mathrm{N}_n$ is a polynomial in the initial data that is not divisible by any of the variables 
$\tau_0,\tau_1,\ldots,\tau_{N+2}$, while $ \boldsymbol{\tau}^{{\bf d}_n}$ denotes the Laurent monomial in these variables specified by 
the denominator vector ${\bf d}_n$, an integer vector whose components give the exponents for each variable.  
Due to the homogeneity of (\ref{multi}), the degree growth of the iterates can be determined from that of the
denominators. If we further assume  that there are no cancellations between numerators and denominators on the 
right-hand side, then (as is well known in the context of cluster algebras \cite{f1,f2}), the denominator vector 
${\bf d}_n$ satisfies the max-plus tropical (or ultradiscrete) analogue of (\ref{multi}), which takes the form 
\beq\label{tropm}   
{\bf d}_{n+N+3}+  2{\bf d}_{n+N}+\sum_{j=1}^{ N-1}{\bf d}_{n+j}  = 
\max\left(  {\bf d}_{n+3}+ {\bf d}_{n+N}+\sum_{j=2}^{N+1} {\bf d}_{n+j},   
 \sum_{j=1}^{N+2}{\bf d}_{n+j}, \ldots \right) , 
\eeq 
where each of the omitted terms in the max 
corresponds to one of the terms on the right-hand side
of (\ref{multi}). The vector form of (\ref{tropm}) means that each component of ${\bf d}_n$ satisfies the same tropical equation.  

\begin{example}\label{trop5} 
When $N=5$, the tropical version of (\ref{multi5}) can be written as 
$$
 {\bf d}_{8}+2{\bf d}_{5}+{\bf d}_{4}+{\bf d}_{3}+{\bf d}_{2}+{\bf d}_{1} 
=\max \Big({\bf d}_6+2{\bf d}_5+{\bf d}_4+2{\bf d}_3+{\bf d}_2,\sum_{j=1}^7 {\bf d}_j, 
\sum_{j=4}^7 {\bf d}_j
+2{\bf d}_3+{\bf d}_0, \hat{M}  \Big),  
$$
where 
$$ 
\hat{M}= {\bf d}_7+{\bf d}_1+ \max\Big({\bf d}_6+{\bf d}_5+2{\bf d}_4+{\bf d}_1, 
{\bf d}_6+2{\bf d}_5+2{\bf d}_2, 
2{\bf d}_6+2{\bf d}_3+{\bf d}_2, 
{\bf d}_7+2{\bf d}_4+{\bf d}_3+{\bf d}_2\Big); 
$$ 
once again we have set $n\to 0$ for brevity. 
\end{example}  

From the explicit form of the above equation, the task of solving (\ref{tropm}) for general $N$ looks like it might be a formidable one, but in fact there is an enormous simplification that can be made. The point is that the substitution 
(\ref{tausub}) 
that lifts (\ref{equ}) to (\ref{multi}) also has a tropical analogue, which allows (\ref{tropm}) to be reduced to a max-plus 
version of (\ref{equ}), and the latter turns out to have a very simple behaviour: all solutions reach a fixed point after 
finitely many steps.    

\begin{proposition} \label{troplim} 
Given the substitution 
\beq\label{tropsub}
U_n=d_{n+3}-d_{n+2}-d_{n+1}+d_n, 
\eeq
the quantity 
$U_n$ is a solution of a tropical 
version  of 
(\ref{equ}),  
given by 
\beq 
 \label{tropequ}
U_{n+N}=\Big[\max(-S_n,U_n,U_{n+1},\ldots,U_{n+N-1})\Big]_+, \qquad S_n=\sum_{j=1}^{N-1}U_{n+j}
\eeq 
(with $[x]_+$ denoting $\max (x,0)$ for $x\in\R$), 
whenever $d_n$ is a scalar solution of (\ref{tropm}). 
Moreover, for any choice of real initial data for (\ref{tropequ}) there exists $C\geq 0$ 
and an integer $m\geq 0$ such that 
\beq\label{fixp} 
U_n=C \qquad \forall n\geq m. 
\eeq 
\end{proposition}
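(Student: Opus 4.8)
The plan is to work with the tropical recurrence (\ref{tropequ}) directly and show that every orbit stabilizes. The first step is a \emph{nonnegativity-and-boundedness observation}: the operation $[\,\cdot\,]_+$ forces $U_{n+N}\geq 0$ for all $n\geq N$, say, so after one full window of length $N$ all iterates are nonnegative. Hence for $n$ large we have $U_{n+j}\geq 0$ for $j=0,\dots,N-1$, which makes $-S_n = -\sum_{j=1}^{N-1}U_{n+j}\leq 0$. In that regime the $-S_n$ term never dominates the maximum (since the other entries are $\geq 0$), so (\ref{tropequ}) simplifies to
\[
U_{n+N}=\max(U_n,U_{n+1},\ldots,U_{n+N-1}),
\]
the tropical analogue of a ``max of the last $N$ terms'' recurrence.

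The second step is to analyse this simplified recurrence. Let $M_n=\max(U_n,\dots,U_{n+N-1})$ be the running window-maximum. From $U_{n+N}=M_n$ one sees immediately that $M_{n+1}=\max(U_{n+1},\dots,U_{n+N-1},U_{n+N})=\max(U_{n+1},\dots,U_{n+N-1},M_n)=M_n$, because $M_n\geq U_{n+j}$ for each $j\geq 1$ already. So $M_n$ is eventually constant; call its value $C\geq 0$. The third step is to show the iterates themselves reach $C$: once $M_n\equiv C$, every new term satisfies $U_{n+N}=M_n=C$, so $U_k=C$ for all $k\geq m$, where $m$ is the first index at which the window-maximum has locked to $C$ \emph{and} we are past the initial window; this is exactly (\ref{fixp}).

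Finally I would assemble the pieces, being careful about the bookkeeping of ``eventually'': the first window in which all entries are nonnegative occurs after at most $N$ steps, the window-maximum $M_n$ is nondecreasing on nonnegative data (each new term is a max of previous ones, hence cannot exceed the current $M_n$, so in fact $M_n$ is constant as soon as the data are nonnegative) — so $C=M_N$, the window-maximum at the moment nonnegativity first holds — and from that point $U_n=C$. The only subtlety, and the step I expect to need the most care, is confirming that the $-S_n$ branch is genuinely inactive for all large $n$: if some $U_{n+j}$ with $j\in\{1,\dots,N-1\}$ were still negative one would need to check that $-S_n$ cannot exceed $\max(U_n,\dots,U_{n+N-1})$; but since $[\,\cdot\,]_+$ guarantees $U_{n+N-1}\geq 0$ once $n$ is large, and any negative $U_{n+j}$ only makes $-S_n$ \emph{smaller} relative to that nonnegative entry, the branch is dominated. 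Once this monotone-stabilization of the window-maximum is in hand, the result follows with $m$ taken to be (say) $2N$.
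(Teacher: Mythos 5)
Your argument for the stabilization claim is correct, but it takes a genuinely different route from the paper. The paper partitions the initial data in $\R^N$ into four regions according to the sign of $S=\sum_{j=1}^{N-1}U_j$ and the location of the window maximum, shows that $S$ is non-decreasing under the map, and chases the orbit through the regions until it lands in the absorbing one. You instead exploit the $[\,\cdot\,]_+$ directly: every output is nonnegative, so for $n\geq N$ the whole window $U_n,\ldots,U_{n+N-1}$ is nonnegative, hence $-S_n\leq 0$ and both the $-S_n$ branch and the outer $[\,\cdot\,]_+$ are inert; the recurrence collapses to $U_{n+N}=\max(U_n,\ldots,U_{n+N-1})$, whose window maximum $M_n$ is invariant, forcing $U_k=M_N=C$ for all $k\geq 2N$. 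This is cleaner than the paper's case analysis and buys an explicit transient bound $m\leq 2N$ together with the explicit value $C=\max(U_N,\ldots,U_{2N-1})$, which the paper's argument does not provide.

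Two caveats. First, you do not address the first assertion of the proposition at all, namely that the substitution (\ref{tropsub}) carries scalar solutions of (\ref{tropm}) to solutions of (\ref{tropequ}); the paper disposes of this with ``a direct calculation,'' but it is part of the statement and your write-up should at least record that verification. Second, the closing ``subtlety'' paragraph is both unnecessary and contains a false assertion: a negative $U_{n+j}$ with $j\in\{1,\ldots,N-1\}$ makes $-S_n$ \emph{larger}, not smaller, so your dismissal of that worry is wrong as stated. Fortunately the worry itself never arises in your main argument, because for $n\geq N$ every entry $U_{n+j}$ with $j\geq 1$ has index at least $N+1$ and is therefore an output of the recurrence, hence nonnegative; you should simply delete that paragraph rather than repair it.
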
 
\begin{proof} 
The max-plus equation (\ref{tropequ}) is obtained from (\ref{equ}) by solving for the highest iterate $u_{n+N}$ 
and replacing $(+,\times)$ with $(\max , +)$ in the usual way (setting all coefficients to 1), 
while  a direct 
calculation verifies that if $d_n$ satisfies the scalar version of (\ref{tropm}) then $U_n$ given by (\ref{tropsub}) is 
a solution of (\ref{tropequ}). Now given an $N$-tuple of real initial values $(U_0,\ldots , U_{N-1})$, we consider the iteration of the 
equivalent map 
in $\R^N$ given by 
\beq\label{tropmap}
(U_0,\ldots , U_{N-1})\mapsto (U_0',\ldots , U_{N-1}')
\eeq 
where 
$$ U_j'=U_{j+1}\,\, \mathrm{ for}\,\, 0\leq j\leq N-2, \qquad  
  U_{N-1}'=[\max(-S,U_0,\ldots,U_{N-1})]_+, \quad S=\sum_{j=1}^{N-1}U_j.
$$
The initial data can be divided into four disjoint subsets of $\R^N$, defined by 
$$ 
\begin{array} {lcl} 
\mathrm{(i)} & 
S\geq 0, &
\max_{j\in \{ 1,\ldots,N-1\}}(U_j) 
\geq U_0 ;  
\\ 
\mathrm{(ii)} & 
S\geq 0, &
U_j < U_0 
\quad
\forall j\in \{ 1,\ldots,N-1\} ; 
\\ 
\mathrm{(iii)} & 
S <0, &
U_j < -S  
\,\,\,\,
\forall j\in \{ 0,\ldots,N-1\} ; 
\\ 
\mathrm{(iv)} & 
S< 0, &
\max_{j\in \{ 0,\ldots,N-1\}}(U_j) \geq -S.
\end{array} 
$$ 
By examining each of these regions, it follows that the quantity $S$ is non-decreasing under the action of the map (\ref{tropmap}), and after finitely many iterations 
it 
attains a maximum value $S=NC$ at a fixed point $U_j=C\geq 0$ for $0\leq j\leq N-1$. 
To see this, begin by considering initial data lying in region (i). In that case, the maximum of the  $U_j$ 
is attained at some $k\in\{1\,\ldots, N-1\}$, and $U_{N-1}'=U_k=C\geq 0$, with 
$S'=\sum_{j=1}^{N-1}U_j'=S+U_{N-1}'-U_1\implies S'-S=U_k-U_1\geq 0$. All subsequent iterations stay in this region, $C$  remains the maximum 
value, and all components take this same  value 
after a finite number of steps. Next, take initial data in region (ii), to find $U_{N-1}'=U_0$ and $S'-S=U_0-U_1>0$. In that 
case,  the maximum value is $C=U_{N-1}'>U_0'=U_1$, and so region (i) is reached after a single step. For case (iii), 
one step of (\ref{tropmap})  gives $U_{N-1}'=-S>0$, so $S'-S=-S-U_1>0$, and hence 
$\max_{j\in\{0,\ldots,N-1\}}(U_j')= U_{N-1}'>-S'$ which means that either region (iv) is attained 
when $S'<0$, or otherwise $S'\geq 0$ and region (i) has been reached instead. Finally, starting off in region (iv) 
gives $U_{N-1}'=U_k=C\geq -S>0$ for some $k$, and $S'-S=U_k-U_1\geq 0$. If the sum $S'<0$, a finite number of subsequent steps remain in region (iv), with $C$ as the maximum value, until this sum changes sign, so that eventually region (i) is reached, and the proof is complete. 
\end{proof}

Upon comparing the substitution (\ref{tropsub}) with (\ref{fixp}), the explicit form of the scalar solution of (\ref{tropm}) is obtained immediately, for sufficiently large $n$. 

\begin{corollary}\label{quaddeg}
For any real $(N+3)-tuple$ of initial values $(d_0,d_1,\ldots,d_{N+2})$, 
there exist real parameters $A,\bar{A},B,C$ with $C\geq 0$ and an integer $m\geq 0$ 
such that the   
solution of 
the scalar version of (\ref{tropm}) is given by  
\beq\label{quadg} 
d_n =\frac{1}{4}Cn^2 +Bn+A + \bar{A}(-1)^n \qquad \forall n\geq m.  
\eeq\end
{corollary}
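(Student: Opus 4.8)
The plan is to reduce the statement to solving an inhomogeneous linear recurrence of order three, with Proposition \ref{troplim} supplying control of the inhomogeneous term. Given an $(N+3)$-tuple $(d_0,\dots,d_{N+2})$, the scalar version of (\ref{tropm}) is a forward recurrence (it expresses $d_{n+N+3}$ in terms of $d_n,\dots,d_{n+N+2}$), so it determines $d_n$ for all $n\geq 0$; and by the first part of Proposition \ref{troplim} the quantity $U_n=d_{n+3}-d_{n+2}-d_{n+1}+d_n$ then solves the max-plus equation (\ref{tropequ}). Applying the second part of that proposition, there is a constant $C\geq 0$ and an integer $m\geq 0$ with $U_n=C$ for all $n\geq m$. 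Unwinding the definition of $U_n$, this says exactly that
\beq\label{linrec}
d_{n+3}-d_{n+2}-d_{n+1}+d_n=C\qquad\text{for all }n\geq m ,
\eeq
so from index $m$ onwards the sequence $d_n$ satisfies a constant-coefficient linear difference equation with constant right-hand side.

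I would then solve (\ref{linrec}) explicitly. Its homogeneous part has characteristic polynomial $x^3-x^2-x+1=(x-1)^2(x+1)$, with roots $1$ (double) and $-1$, so the general homogeneous solution is $A+Bn+\bar A(-1)^n$. Since $1$ is a double root, a particular solution for the constant forcing $C$ is sought in the form $\kappa n^2$; substituting into the left-hand side of (\ref{linrec}) gives $\kappa\,[(n+3)^2-(n+2)^2-(n+1)^2+n^2]=4\kappa$, whence $\kappa=C/4$. Hence every solution of (\ref{linrec}) has the form
\beq\label{genform}
d_n=\tfrac14 Cn^2+Bn+A+\bar A(-1)^n\qquad (n\geq m)
\eeq
for some real $A,B,\bar A$; in particular the oscillating term $\bar A(-1)^n$ is present precisely because $-1$ is a characteristic root.

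It remains to fit the constants. Evaluating (\ref{genform}) at $n=m,m+1,m+2$ and equating with the actual values $d_m,d_{m+1},d_{m+2}$ produced by the tropical recurrence gives a $3\times3$ linear system for $A,B,\bar A$ whose coefficient matrix has the rows $(1,\,n,\,(-1)^n)$ for three consecutive integers; a direct computation gives its determinant as $4(-1)^m\neq 0$, so there is a unique real solution $(A,B,\bar A)$. Since $C$ is already fixed by Proposition \ref{troplim} and (\ref{linrec}) determines $d_n$ for all $n\geq m$ from the three values at $m,m+1,m+2$, the closed form (\ref{genform}) holds for every $n\geq m$, which is the claim. I do not expect a serious obstacle here beyond assembling these ingredients: the substance of the argument is entirely contained in Proposition \ref{troplim}, and the only points that deserve a moment's attention are that the particular solution is genuinely quadratic — which is what fixes the coefficient $\tfrac14 C$ and reproduces the quadratic degree growth found empirically in \cite{DTKQ} — and that the constant $C$ appearing in (\ref{genform}) is the very same one guaranteed by (\ref{fixp}).
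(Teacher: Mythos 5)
Your proposal is correct and follows exactly the route the paper intends: the paper's proof of this corollary is the single remark that comparing (\ref{tropsub}) with (\ref{fixp}) gives the result "immediately", i.e.\ one reads off the constant-coefficient linear recurrence $d_{n+3}-d_{n+2}-d_{n+1}+d_n=C$ for $n\geq m$ and solves it. Your explicit solution via the characteristic polynomial $(x-1)^2(x+1)$, the particular solution $\tfrac14 Cn^2$, and the nonsingular $3\times3$ fitting system simply fills in the standard details the paper omits.
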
 

\begin{remark}
The iterates of (\ref{multi}) are given by (\ref{lpoly}), where the 
vector ${\bf d}_n$ has components $d_n^{(j)}$, giving the degree of the exponent of $\tau_j$, for $0\leq j\leq N+2$. 
For each $j$ in this range, this gives the initial data $d_k^{(j)}=-\delta_{jk}$, $0\leq k\leq N+2$, for a scalar solution of 
(\ref{tropm}). 
Hence the degrees of the denominators  of the iterates of (\ref{multi}) can be calculated exactly, 
while (by homogeneity) the degree of each numerator is one more than the degree of the denominator. 
In particular, $C=0$ in the solution  for $j=0$, and in fact $d_n^{(0)}=0$ $\forall n\geq1$, 
since there is no division by $\tau_0$ when  (\ref{multi}) is iterated forwards, and $C=1$ in the solution for 
$d_n^{(j)}$, $1\leq j\leq N-1$. With more detailed analysis, the precise degree growth of (\ref{equ}) can also be derived, but it follows from (\ref{quadg}) that it must be quadratic, which  is consistent with the empirical results found in \cite{DTKQ} for the case $\beta=0$. 
More detailed results for symmetric QRT maps, which include the case $N=2$ of (\ref{equ}), are given in \cite{hhkq}.  
\end{remark}

The Laurent property, combined with the quadratic degree growth of the iterates of (\ref{multi}), suggests that there should also be bilinear relations satisfied by the terms. Thus we can apply the method described in \cite{hones6}, starting with 
particular numerical values of the initial conditions and looking for the smallest integer $q$ such that a matrix $\mathrm{M}$ 
of size $\left\lfloor{\frac{q}{2}}\right\rfloor+1$, with entries $\mathrm{M}_{ij}=\tau_{q+i-j}\tau_{i+j-2}$, has vanishing determinant: this corresponds to a bilinear relation of minimal order, with constant coefficients. By considering matrix entries of the form $\mathrm{M}_{ij}=\tau_{q+\ell i-j+k}\tau_{\ell i+j+k-2}$, for different choices of offset $k$, one can also obtain minimal order relations whose coefficients have period $\ell>1$. 

To illustrate the method, we pick $N=5$ with $\alpha=3$, $\beta=-10$, and use (\ref{multi5})  to generate the  
particular sequence that starts from $\tau_0=\tau_1=\ldots = \tau_6=1$, $\tau_7=4$,  beginning  
$$ 
1,1,1,1,1,1,1,4,11,23, 79, 148,1244,9860,75689,370697,\ldots, 
$$
whose ratios $\tau_{n+3}\tau_n/(\tau_{n+2}\tau_{n+1})$ produce (\ref{ratseq}). By considering 
bilinear relations with constant coefficients, the minimal relation is found to be of order $q=12$, 
corresponding to the matrix $\mathrm{M}$ with entries   $\mathrm{M}_{ij}=\tau_{12+i-j}\tau_{i+j-2}$. 
However, all but three of the entries in a vector spanning the one-dimensional kernel of   $\mathrm{M}$  are zero, and it is 
sufficient to  take 
the $3\times 3$ minor 
$$ 
\mathrm{M}' = \left(\begin{array}{ccc} \tau_{12}\tau_0 & \tau_{10}\tau_2 & \tau_{6}^2 \\ 
\tau_{13}\tau_1 & \tau_{11}\tau_3 & \tau_{7}^2 \\ 
\tau_{14}\tau_2 & \tau_{12}\tau_4 & \tau_{8}^2
\end{array}\right) = 
\left(\begin{array}{ccc} 1244 & 79 & 1 \\ 
9860 & 148 & 16 \\ 
75689 & 1244 & 121 
\end{array}\right), \qquad \det \mathrm{M}' =0, 
$$ 
whose kernel is spanned by the vector $(1,-9,-533)^T$, corresponding to the bilinear relation (\ref{hir3}) with $N=5$, 
$\al = 3$ and the particular value $\bar{K}=533$ for the first integral. 
With this same numerical sequence, one can also obtain relations with periodic coefficients, starting from period $\ell=2$, by taking the matrix with entries of the form  $\mathrm{M}_{ij}=\tau_{q+2 i-j+k}\tau_{2 i+j+k-2}$, so that the 
minimum order relation has $q=7$, and for $k=-1,0$ one has $3\times3$ minors with  
$$ 
\det\left(\begin{array}{ccc} 4 & 1 & 1 \\ 
23 & 11 & 4 \\ 
148 & 79 & 23 
\end{array}\right) =0 
= 
\det\left(\begin{array}{ccc} 
11 & 4 & 7 \\ 
79 & 23  & 11 \\
1244 & 148 & 316 
\end{array}\right) . 
$$ 
Each of the two matrices above has a one-dimensional kernel, spanned by $(1,-1,-3)^T$, $(1,-2,-3)^T$ respectively, 
corresponding to the three-term bilinear relation (\ref{hir1}) with   $N=5$, 
$\al = 3$, $\gam_0=1$, $\gam_1=2$. 
  
For fixed $N$, once bilinear relations have been obtained for 
one or more particular numerical sequences of values of $\tau_n$, 
these relations can then be checked for arbitrary initial data and coefficients by symbolic computations with a 
computer algebra package. Such computations provide a computer-assisted proof of Theorem \ref{maintau}, for any specific choice of $N$, but to prove it for all $N$ requires some general arguments, presented in the next section. 

Sequences generated by certain bilinear recurrences of Somos type also admit further bilinear relations of higher order (see 
\cite{svdp}, for example), with the coefficients being first integrals, and this has been used to obtain first integrals for 
four-term
Somos-6 and Somos-7 recurrences, in \cite{hones6} and \cite{FH}, respectively.  Another approach to finding first integrals, based on reduction of conservation laws for the discrete KP or BKP equations, was used in \cite{mq}. However, in what follows we will apply 
the method  in \cite{hkw}, 
obtaining first integrals from Lax pairs arising by reduction of lattice equations 
(discrete KdV, discrete Toda, and/or discrete KP). 
 
\section{Proof and consequences of the main theorem} 


In order to understand how the solutions of   (\ref{equ}) are related to certain Liouville integrable systems, it is helpful 
to consider the equation of order $N+1$ obtained by eliminating $\be$: 
\beq\label{ushift} 
E_n[u]:=u_{n+N+1}-u_n +\al\left( \frac{1}{\prod_{j=1}^{N-1}u_{n+j}}- \frac{1}{\prod_{j=2}^{N}u_{n+j}}\right)=0. 
\eeq 
To see how this arises, one can solve (\ref{equ}) for $\be$, which gives 
$$ 
\be = \frac{\al}{\prod_{j=1}^{N-1}u_{n+j}}-\sum_{j=0}^N u_{n+j}, 
$$ 
and then apply  
the total difference operator  $\Delta$  to both sides;
from  this it follows that $\be$ defined as above is a first integral for (\ref{ushift}). For all $N$, it can also be checked that 
(\ref{ushift}) has the first integral 
\beq\label{zehat}
\hat{\zeta} = \prod_{j=0}^{N}u_{n+j}+\al\sum_{k=1}^{N-1}u_{n+k}, 
\eeq 
which is related to the first integral (\ref{zeta}) of  (\ref{equ}) by 
$\hat{\zeta}=\zeta -\al\be$. 
 
For the  proof of the first part of Theorem \ref{maintau}, 
it is 
convenient to introduce another lift of (\ref{equ}) to dimension $N+1$, defined via  (\ref{tausub}) by 
setting 
\beq\label{wsub} 
\pi_1: \quad 
u_n=w_nw_{n+1} \qquad \mathrm{where} \quad w_n=\frac{\tau_n\tau_{n+2}}{\tau_{n+1}^2}
.
\eeq
This can be regarded as an intermediate step between (\ref{equ}) and (\ref{multi}). 
Upon making the    substitution (\ref{wsub}), we find that 
(\ref{equ}) produces the equation 
\begin{equation} \label{eqw}
{\cal E}_n [w]:=\sum_{j=0}^{N}w_{n+j}w_{n+j+1}  +\beta -\frac{\al}{\prod_{j=1}^{N-1} w_{n+j}w_{n+j+1}} =0. 
\end{equation}
If the substitution (\ref{wsub}) is interpreted as a Miura map, then (\ref{eqw}) can be regarded as a modified version 
of the generalized DTKQ equation (\ref{equ})

\begin{lemma} The quantity given in terms of $w_n$ and shifts by 
\begin{equation} \label{gamma}
\gamma_n [w]  = 
\prod_{j=0}^{N} w_{n+j}- \frac{\alpha}{\prod_{j=1}^{N-1} w_{n+j}}
\end{equation}
provides a 2-integral of  the modified generalized DTKQ equation (\ref{eqw}).
\end{lemma}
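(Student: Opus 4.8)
The plan is to verify directly that the quantity (\ref{gamma}) satisfies $\gamma_{n+2}[w]=\gamma_n[w]$ whenever $w$ is a solution of $\mathcal{E}_n[w]=0$, using nothing more than the telescoping structure of (\ref{eqw}). To keep the bookkeeping transparent I would write $P_n:=\prod_{j=1}^{N-1}w_{n+j}$ and $Q_n:=\prod_{j=0}^{N}w_{n+j}$, so that $\gamma_n[w]=Q_n-\alpha/P_n$; note that $\prod_{j=1}^{N-1}w_{n+j}w_{n+j+1}=P_nP_{n+1}$, so that (\ref{eqw}) reads $\sum_{j=0}^{N}w_{n+j}w_{n+j+1}+\beta=\alpha/(P_nP_{n+1})$.

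The first step is to form the difference $\mathcal{E}_{n+1}[w]-\mathcal{E}_n[w]$: the sum of nearest-neighbour products telescopes to $w_{n+N+1}w_{n+N+2}-w_nw_{n+1}$, while the reciprocal parts collapse into a single fraction, so that setting $\mathcal{E}_n[w]=\mathcal{E}_{n+1}[w]=0$ yields the on-shell identity
\[ w_{n+N+1}w_{n+N+2}-w_nw_{n+1}=\frac{\alpha\,(P_n-P_{n+2})}{P_nP_{n+1}P_{n+2}}. \]
The second step is an identity valid for any sequence whatsoever: since $Q_{n+2}/Q_n=w_{n+N+1}w_{n+N+2}/(w_nw_{n+1})$ and $Q_n/(w_nw_{n+1})=P_{n+1}$, one has $Q_{n+2}-Q_n=P_{n+1}\bigl(w_{n+N+1}w_{n+N+2}-w_nw_{n+1}\bigr)$. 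Substituting the on-shell identity into this gives $Q_{n+2}-Q_n=\alpha(P_n-P_{n+2})/(P_nP_{n+2})=\alpha(1/P_{n+2}-1/P_n)$, hence
\[ \gamma_{n+2}[w]-\gamma_n[w]=(Q_{n+2}-Q_n)-\alpha\Bigl(\tfrac{1}{P_{n+2}}-\tfrac{1}{P_n}\Bigr)=0, \]
which is exactly the assertion that $\gamma_n[w]$ is a 2-integral of (\ref{eqw}).

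There is no genuine obstacle in this argument; the only point requiring care is matching up the index ranges in the two telescopings — the one coming from the difference of consecutive copies of the equation, and the purely formal one for the ratio $Q_{n+2}/Q_n$. It is worth recording, as a check and as motivation, that $\gamma_n[w]$ is precisely the $2$-periodic coefficient $\gamma_n$ appearing in (\ref{hir1}): substituting $w_n=\tau_n\tau_{n+2}/\tau_{n+1}^2$ and cancelling telescoping products gives $Q_n=\tau_n\tau_{n+N+2}/(\tau_{n+1}\tau_{n+N+1})$ and $P_n=\tau_{n+1}\tau_{n+N+1}/(\tau_{n+2}\tau_{n+N})$, so that $\gamma_n[w]=(\tau_n\tau_{n+N+2}-\alpha\,\tau_{n+2}\tau_{n+N})/(\tau_{n+1}\tau_{n+N+1})$, in agreement with the values of $\gamma_0,\gamma_1$ used in the proof of Proposition~\ref{lify}. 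This is why the lemma serves as the natural intermediate step on the way to the bilinear equation (\ref{hir1}).
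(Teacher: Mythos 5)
Your argument is correct and is essentially the paper's own proof written out in full: the identity you derive, $\gamma_{n+2}[w]-\gamma_n[w]=P_{n+1}\bigl(\mathcal{E}_{n+1}[w]-\mathcal{E}_n[w]\bigr)$ with $P_{n+1}=\prod_{j=2}^{N}w_{n+j}$, is precisely the ``direct calculation'' $({\cal S}^2-1)\gamma_n[w]=\prod_{j=2}^{N}w_{n+j}\,\Delta\mathcal{E}_n[w]$ stated in the paper. The only cosmetic difference is that the paper records it as an off-shell identity while you impose $\mathcal{E}_n=\mathcal{E}_{n+1}=0$ before substituting.
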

\begin{proof}
A direct calculation shows that 
$$ 
({\cal S}^2 -1) \, \gamma_{n}[w]=
\prod_{j=2}^{N} w_{n+j}\, \Delta \mathcal{E}_n[w] =E_n[w]\, \prod_{j=2}^{N} w_{n+j},  
$$
where 
\begin{equation} \label{Ew}
E_n[w]:= w_{n+N+2}w_{n+N+1}-w_{n+1}w_n 
+\frac{\al}{\prod_{j=2}^N w_{n+j}} \left(\frac{1}{\prod_{k=1}^{N-1}w_{n+k}} - \frac{1}{\prod_{k=3}^{N+1}w_{n+k}} \right) 
\end{equation} 
denotes the lift of (\ref{ushift}) to $N+2$ dimensions obtained from the first substitution in (\ref{wsub}). Hence 
the quantity  $\gam_n[w]$ gives a
2-integral of both (\ref{eqw}) and (\ref{Ew}), where in the first case 
(\ref{gamma}) 
can be rewritten as a function of $w_n,\ldots,w_{n+N-1}$ using (\ref{eqw}).  
\end{proof}

If we substitute for $w_n$ with the ratio of tau functions given in (\ref{wsub}), then we see 
that 
(\ref{gamma}) is equivalent to the bilinear equation (\ref{hir1}). Thus the 
following result is   an immediate consequence of the preceding lemma, and proves the 
first part of Theorem \ref{maintau}. 

\begin{corollary} \label{gamco} 
The quantity given in terms of $\tau_n$ and shifts by 
\beq\label{gamdef} 
\gam_n = \frac{\tau_{n+N+2}\tau_n - \al\tau_{n+N}\tau_{n+2}}{\tau_{n+N+1}\tau_{n+1}} 
\eeq 
is a 2-integral of the multilinear equation (\ref{multi}). 
\end{corollary}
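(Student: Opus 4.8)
The plan is to obtain the corollary as an essentially formal consequence of the preceding lemma, by pushing the 2-integral $\gamma_n[w]$ of the modified equation (\ref{eqw}) through the two-step substitution $\tau \mapsto w \mapsto u$ recorded in (\ref{wsub}).

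First I would observe that the two halves of (\ref{wsub}) compose to give exactly the Miura substitution (\ref{tausub}): since $w_n = \tau_n\tau_{n+2}/\tau_{n+1}^2$, one has $w_n w_{n+1} = \tau_n\tau_{n+3}/(\tau_{n+1}\tau_{n+2})$. Next comes the one genuinely mechanical point: the change of variables $u_n = w_n w_{n+1}$ turns (\ref{equ}) \emph{verbatim} into (\ref{eqw}), because $\sum_{j=0}^{N} w_{n+j}w_{n+j+1} = \sum_{j=0}^{N} u_{n+j}$ and $\prod_{j=1}^{N-1} w_{n+j}w_{n+j+1} = \prod_{j=1}^{N-1} u_{n+j}$, so that ${\cal E}_n[w]$ vanishes precisely when the generalized DTKQ equation holds. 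Combining this with Proposition \ref{lify} --- which guarantees that a $\tau$-solution of (\ref{multi}) yields, via (\ref{tausub}), a solution $u_n$ of (\ref{equ}) --- I conclude that whenever $\tau_n$ satisfies (\ref{multi}), the sequence $w_n = \tau_n\tau_{n+2}/\tau_{n+1}^2$ satisfies (\ref{eqw}).

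The key step is then to invoke the preceding lemma, which tells us that $\gamma_n[w]$ as in (\ref{gamma}) obeys $({\cal S}^2-1)\gamma_n[w] = 0$ along solutions of (\ref{eqw}); hence $\gamma_n[w]$, evaluated on the $w$-sequence built from a $\tau$-solution of (\ref{multi}), is 2-periodic. It remains only to simplify $\gamma_n[w]$ under $w_n = \tau_n\tau_{n+2}/\tau_{n+1}^2$: both products telescope, giving $\prod_{j=0}^{N} w_{n+j} = \tau_n\tau_{n+N+2}/(\tau_{n+1}\tau_{n+N+1})$ and $\prod_{j=1}^{N-1} w_{n+j} = \tau_{n+1}\tau_{n+N+1}/(\tau_{n+2}\tau_{n+N})$, so that $\gamma_n[w]$ collapses to $(\tau_{n+N+2}\tau_n - \al\tau_{n+N}\tau_{n+2})/(\tau_{n+N+1}\tau_{n+1})$, i.e. precisely $\gamma_n$ of (\ref{gamdef}). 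Since this quantity is 2-periodic on every $\tau$-solution of (\ref{multi}), it is a 2-integral of (\ref{multi}), which is the claim; and, as in the remark following the lemma, using (\ref{eqw}) one can moreover rewrite it in terms of $w_n,\dots,w_{n+N-1}$ only.

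I do not anticipate a real obstacle: once the two substitutions are set up, the argument is pure bookkeeping, and the only thing to be careful about is to work generically (no $\tau_n$ vanishing), so that the chain of substitutions and all the telescoping cancellations are identities of rational functions. If one wanted a proof independent of the lemma, the alternative would be to substitute (\ref{tausub}) directly into $({\cal S}^2-1)\gamma_n$ and exhibit the multilinear relation (\ref{multi}) as a factor of the result, with an explicit monomial cofactor; but reusing the lemma via the modified equation (\ref{eqw}) is cleaner and avoids that computation.
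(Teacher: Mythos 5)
Your proposal is correct and follows essentially the same route as the paper: the paper likewise observes that substituting $w_n=\tau_n\tau_{n+2}/\tau_{n+1}^2$ into (\ref{gamma}) telescopes to the expression (\ref{gamdef}), so that the corollary is an immediate consequence of the preceding lemma combined with the chain of substitutions linking (\ref{multi}), (\ref{equ}) and (\ref{eqw}). Your write-up merely makes explicit the bookkeeping that the paper leaves implicit.
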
  

\begin{remark}\label{oddeven} 
In general, $\gamma_n$ can be considered as a  2-integral for  (\ref{eqw}), (\ref{Ew}) or (\ref{multi}), but it 
can only be considered as a 2-integral for (\ref{equ}) or (\ref{ushift}) when $N$ is odd, because only then can it be 
written purely in terms of $u_n$ and shifts. 
\end{remark} 

As the above remark indicates, there are considerable differences between the cases of even/odd $N$, so henceforth 
we consider these two cases separately. 

\subsection{The even case}

In the case of even $N$, we introduce another dependent variable $v_n$, which is defined by 
\begin{equation} \label{KdVvar}
\pi_2: \quad 
v_n=\prod_{j=0}^{\frac{N-2}{2}} u_{n+2j} 
=\prod_{k=0}^{N-1} w_{n+k} 
=\frac{\tau_n  \tau_{n+N+1}}{\tau_{n+1} \tau_{n+N}}. 
\end{equation}
It turns out that 
$v_n$ satisfies a travelling wave reduction of Hirota's lattice KdV equation, 
\begin{equation} \label{KdV}
V_{k+1,l}-V_{k,l+1}=\alpha\left(\frac{1}{V_{k,l}}-\frac{1}{V_{k+1,l+1}}\right).
\end{equation}
obtained by 
imposing the periodicity condition   
$$V_{k+N,l+1}=V_{k,l}\implies 
V_{k,l}=v_n, \qquad n=l N-k ,
$$
which is called the $(N,1)$-reduction  of (\ref{KdV}). By 
taking the reciprocal of the dependent variable and rescaling, 
this is equivalent to the 
$(N,-1)$-reduction considered in \cite{hkqt}.    
\begin{proposition}
If $N$ is even and $u_n$ is a solution of (\ref{equ}), then 
$v_n= u_nu_{n+2}\cdots u_{n+N-2}$ 
is a solution of 
\begin{equation} \label{RedKdV}
v_{n+{N}+{1}}-v_n=\alpha\left(\frac{1}{v_{n+N}}-\frac{1}{v_{n+1}}\right),
\end{equation}
which is the $(N,1)$ periodic reduction of the 
lattice KdV equation (\ref{KdV}). 
\end{proposition}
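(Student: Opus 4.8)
The plan is to transfer the statement to the tau-function level, where it reduces to a short bilinear manipulation. Given a solution $u_n$ of (\ref{equ}), I would first lift it to a tau sequence via the substitution (\ref{tausub}); by the first part of Theorem \ref{maintau} (equivalently Corollary \ref{gamco}) this $\tau_n$ satisfies the three-term bilinear equation (\ref{hir1}) with a 2-periodic coefficient $\gamma_n$. Next I would rewrite the proposed conserved product in tau variables: using the factorization $u_n=w_nw_{n+1}$ with $w_n=\tau_n\tau_{n+2}/\tau_{n+1}^2$ from (\ref{wsub}), the $N/2$ factors of $v_n=u_nu_{n+2}\cdots u_{n+N-2}$ telescope into $N$ consecutive $w$'s, and a short product cancellation gives $v_n=\prod_{k=0}^{N-1}w_{n+k}=\tau_n\tau_{n+N+1}/(\tau_{n+1}\tau_{n+N})$, which is exactly (\ref{KdVvar}); note that evenness of $N$ is what makes this index counting work.

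Now I would substitute $v_n=\tau_n\tau_{n+N+1}/(\tau_{n+1}\tau_{n+N})$ directly into the target recurrence (\ref{RedKdV}). Every one of the four terms carries a factor $\tau_{n+N+1}$; after cancelling it and clearing the remaining denominators $\tau_{n+1},\tau_{n+N},\tau_{n+N+2},\tau_{n+2N+1}$, equation (\ref{RedKdV}) becomes equivalent to
\[
\tau_{n+1}\bigl(\tau_{n+N}\tau_{n+2N+2}-\alpha\,\tau_{n+N+2}\tau_{n+2N}\bigr)=\tau_{n+2N+1}\bigl(\tau_n\tau_{n+N+2}-\alpha\,\tau_{n+2}\tau_{n+N}\bigr).
\]
The final step is to recognize the two bracketed expressions as instances of (\ref{hir1}): by (\ref{hir1}) at index $n$ the right-hand bracket equals $\gamma_n\,\tau_{n+1}\tau_{n+N+1}$, and by (\ref{hir1}) at index $n+N$ the left-hand bracket equals $\gamma_{n+N}\,\tau_{n+N+1}\tau_{n+2N+1}$. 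Since $N$ is even, 2-periodicity of $\gamma$ gives $\gamma_{n+N}=\gamma_n$, so both sides of the displayed identity collapse to $\gamma_n\,\tau_{n+1}\tau_{n+N+1}\tau_{n+2N+1}$ and the proof is complete.

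The computation itself is routine, so the point that really requires care is the role of the parity of $N$, which enters in two places: first in the reduction $v_n=\tau_n\tau_{n+N+1}/(\tau_{n+1}\tau_{n+N})$ of (\ref{KdVvar}) — without it the common factor $\tau_{n+N+1}$ would not appear and the collapse to a three-term identity would fail — and second, crucially, in the identity $\gamma_{n+N}=\gamma_n$, which holds only for even $N$ and is precisely why the odd case leads instead to the discrete-Toda-type relation (\ref{hir3}) rather than a discrete KdV reduction. I would also double-check the index dictionary $V_{k,l}\leftrightarrow v_n$, $n=lN-k$, to make sure that the order-$(N+1)$ equation (\ref{RedKdV}) really is the $(N,1)$ periodic reduction of the lattice KdV equation (\ref{KdV}) and not a shifted variant of it.
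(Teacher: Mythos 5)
Your proposal is correct and follows essentially the same route as the paper: the paper's proof likewise solves (\ref{hir1}) for $\gamma_n$ via (\ref{gamdef}), invokes $(\mathcal{S}^N-1)\gamma_n=0$ (2-periodicity plus evenness of $N$), and substitutes the ratio $v_n=\tau_n\tau_{n+N+1}/(\tau_{n+1}\tau_{n+N})$ from (\ref{KdVvar}), which is exactly your three-term identity equating $\gamma_{n+N}$ with $\gamma_n$. Your explicit cleared-denominator identity and the telescoping check of (\ref{KdVvar}) are just the details the paper leaves implicit.
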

\begin{proof} For $N$ even, (\ref{hir1}) is a special case of  
the second bilinear equation in the statement of 
Proposition  4.2 in \cite{hkw}. Upon solving (\ref{hir1}) for 
$\gam_n$ and noting that (since $\gam_n$ has period 2) $({\cal S}^N-1)\gam_n=0$, the equation (\ref{RedKdV}) 
follows immediately by taking $v_n$ to be the ratio of tau functions given in 
(\ref{KdVvar}). 
\end{proof} 

The result of  Proposition 4.2 in \cite{hkw} shows that the $(L,M)$ 
periodic reduction of (\ref{KdV}) is actually associated with two different 
bilinear equations, so applying this result to the case $(L,M)=(N,1)$ considered here, for even $N$ we immediately obtain the second relation (\ref{hir2}) in Theorem \ref{maintau}, in the following form. 

\begin{corollary} For even $N$, the quantity given in terms of $\tau_n$ and shifts by 
\begin{equation} \label{Kiny}
K=\frac{\tau_{n} \tau_{n+2N+1}+\alpha \tau_{n+1} \tau_{n+2N} }{\tau_{n+N}\tau_{n+N+1}}
\end{equation}
is a first integral of (\ref{multi}), which via (\ref{tausub}) produces 
a first integral of (\ref{equ}) or (\ref{ushift}) defined by 
\begin{equation} \label{intku}
K[u]:=\left(\prod_{j=0}^{N-1} u_{n+2j}+\al \right)\, u_{n+N-1}^{\frac{N}{2}}
\prod_{k=1}^{N-2} (u_{n+k} u_{n+2N-2-k })^{
\left\lfloor{\frac{k+1}{2}}\right\rfloor}.
\end{equation}
\end{corollary}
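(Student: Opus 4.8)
The plan is to derive the statement from two ingredients: the \emph{second} bilinear form attached to the $(N,1)$-reduction of lattice KdV, which is already available to us, and a purely combinatorial telescoping computation for the substitution (\ref{tausub}).

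First I would dispose of the $\tau$-function claim. By Proposition~4.2 of \cite{hkw}, the $(L,M)$ periodic reduction of (\ref{KdV}) is associated with two bilinear equations; one of them underlies (\ref{hir1}), while the other, specialised to $(L,M)=(N,1)$ through the dictionary $v_n=\tau_n\tau_{n+N+1}/(\tau_{n+1}\tau_{n+N})$ of (\ref{KdVvar}), is precisely (\ref{hir2}), in which the coefficient is a genuine (period-$1$) first integral. Solving that bilinear relation for its coefficient yields exactly the quantity $K$ in (\ref{Kiny}), so $({\cal S}-1)K=0$ along every solution of (\ref{hir1}); since the iterates of (\ref{multi}) coincide with those of (\ref{hir1}) once $\gam_0,\gam_1$ are fixed as in the proof of Proposition~\ref{lify}, $K$ is a first integral of (\ref{multi}) as well. (Alternatively one could verify ${\cal S}K=K$ head-on by feeding (\ref{multi}) and the already-established (\ref{hir1}) into $K_{n+1}-K_n$, but routing through \cite{hkw} is cleaner.)

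Next I would pass from $\tau$ to $u$. The key elementary observation is that, for even $N$, the $N$-factor product $\prod_{j=0}^{N-1}u_{n+2j}=(u_nu_{n+2}\cdots u_{n+N-2})(u_{n+N}u_{n+N+2}\cdots u_{n+2N-2})$ equals $v_nv_{n+N}$, and the tau ratios in (\ref{KdVvar}) telescope to give $\prod_{j=0}^{N-1}u_{n+2j}=\tau_n\tau_{n+2N+1}/(\tau_{n+1}\tau_{n+2N})$. Adding $\al$ and invoking (\ref{Kiny}) then gives $\prod_{j=0}^{N-1}u_{n+2j}+\al=K\,\tau_{n+N}\tau_{n+N+1}/(\tau_{n+1}\tau_{n+2N})$, so the corollary reduces to the single monomial identity
\beq\label{bookkeep}
u_{n+N-1}^{\frac{N}{2}}\prod_{k=1}^{N-2}\big(u_{n+k}u_{n+2N-2-k}\big)^{\left\lfloor\frac{k+1}{2}\right\rfloor}=\frac{\tau_{n+1}\tau_{n+2N}}{\tau_{n+N}\tau_{n+N+1}},
\eeq
after which multiplying back produces $K[u]=K$; and since the right-hand side of (\ref{intku}) contains no $\be$, it descends to a first integral of (\ref{equ}) and of its lift (\ref{ushift}).

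Finally, (\ref{bookkeep}) I would prove by comparing the exponent of each $\tau_j$ on the two sides, organised by the shift operator: from (\ref{tausub}), each factor $u_m$ contributes $({\cal S}-1)^2({\cal S}+1)$ (applied to $\tau_m$) to the exponent count, while the right-hand side of (\ref{bookkeep}) contributes ${\cal S}^{n+1}(1-{\cal S}^{N-1})(1-{\cal S}^N)$. Since $(1-{\cal S}^{N-1})(1-{\cal S}^N)=({\cal S}-1)^2\,(1+{\cal S}+\cdots+{\cal S}^{N-2})(1+{\cal S}+\cdots+{\cal S}^{N-1})$ and, for $N$ even, $1+{\cal S}+\cdots+{\cal S}^{N-1}=({\cal S}+1)(1+{\cal S}^2+\cdots+{\cal S}^{N-2})$, the exponents of the $u_m$ that we need form the coefficient sequence of ${\cal S}^{n+1}(1+{\cal S}+\cdots+{\cal S}^{N-2})(1+{\cal S}^2+\cdots+{\cal S}^{N-2})$; convolving the all-ones string of length $N-1$ with the $N/2$ ones placed at even offsets yields the symmetric ``tent'' $1,1,2,2,\ldots,N/2,\ldots,2,2,1,1$, which is exactly the pattern of exponents $\lfloor(k+1)/2\rfloor$, peaking at $N/2$, on the left of (\ref{bookkeep}). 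That last matching — checking that there is no residual shift in $n$ and that the peak exponent is the integer $N/2$, which is where the even-ness hypothesis enters this half of the argument — is the only delicate point; everything else is a citation to \cite{hkw} or routine telescoping.
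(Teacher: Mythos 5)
Your proposal is correct and, for the $\tau$-function statement, follows exactly the paper's route: Proposition 4.2 of \cite{hkw} applied to the $(L,M)=(N,1)$ reduction yields the second bilinear relation (\ref{hir2}), and solving it for its coefficient gives (\ref{Kiny}). The paper asserts the $u$-form (\ref{intku}) without proof, so your reduction to the monomial identity and the exponent bookkeeping (the convolution of the all-ones string of length $N-1$ with the $N/2$ even-offset ones, giving the tent $1,1,2,2,\ldots,N/2,\ldots,2,2,1,1$, i.e.\ exactly the exponents $\lfloor(k+1)/2\rfloor$ with peak $N/2$ at $u_{n+N-1}$) is a welcome and correct supplement; I have checked that both the telescoping $\prod_{j=0}^{N-1}u_{n+2j}=v_nv_{n+N}=\tau_n\tau_{n+2N+1}/(\tau_{n+1}\tau_{n+2N})$ and the shift-operator factorisation match.
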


\begin{remark}
Observe that, as it is written, the expression  (\ref{intku}) is a 
polynomial in $ u_{n+j}$ for $j=0,\dots,{2N-2}$, which can be rewritten as 
a rational function of any $N$ adjacent iterates by using the 
recurrence (\ref{equ}) to eliminate higher shifts.   The corresponding first integral (\ref{Kiny}) in terms of tau functions satisfying equation (\ref{multi}) is regarded in a similar way.  By setting 
$u_n=w_nw_{n+1}$, $K[u]$ also provides a first integral for 
 (\ref{eqw}) or (\ref{Ew}). 
\end{remark}

\subsubsection{U-systems and other Liouville integrable maps}

We can now discuss Liouville integrability of various maps associated with (\ref{equ}) for $N$ even. 

Using the presymplectic form 
which comes from   
the cluster algebra associated with the bilinear recurrence (\ref{hir1}) (see \cite{FH} and references), the variables $w_n$ defined in terms of tau functions by (\ref{wsub}) provide a set of symplectic coordinates. 
The corresponding symplectic map in dimension $N$ is defined by
\beq\label{usyseven} 
w_{n+N}w_n \prod_{j=1}^{N-1}w_{n+j}^2 =\gam_n \prod_{j=1}^{N-1}w_{n+j} +\al, 
\eeq  
which is an example of a U-system \cite{hi}, 
and (up to overall scaling) the nondegenerate Poisson bracket preserved
by (\ref{usyseven}) is the one given by equation (3.21) in Lemma 3.13 of \cite{hkqt}, that is 
\beq\label{wbr}
\{w_m,w_n\}=(-1)^{m-n+1}w_m w_n \qquad  \mathrm{for} \qquad  0\leq m < n \leq N-1.
\eeq 
 (For a specific example of this bracket, see equation (\ref{ubr4}) in section \ref{even} below.)

As is shown in \cite{hkqt}, Theorem 3.14, the nondegenerate bracket for (\ref{usyseven}) 
lifts to a bracket for (\ref{RedKdV})  in dimension $N+1$, which has rank $N$ and one Casimir. 
There is another Poisson bracket for (\ref{RedKdV}), coming from a discrete Lagrangian formulation, 
and the two different brackets are compatible with one another.

There is another cluster algebra that arises, namely the one associated with the bilinear  recurrence (\ref{hir2}).  
The variables $v_n$ defined by (\ref{KdVvar}) provide symplectic coordinates for the corresponding U-system, which is the 
map in 
dimension $N$ given by 
\beq\label{uvsys}
\prod_{j=0}^N v_{n+j}=-\al \prod_{k=1}^{N-1}v_{n+k}+K,
 \eeq
 preserving a nondegenerate bracket that has the same form in these coordinates as the one for $w_n$ above, 
i.e. 
\beq\label{vbr} 
\{v_m,v_n\}=(-1)^{m-n+1}v_m v_n \qquad  \mathrm{for} \qquad  0\leq m < n \leq N-1.
\eeq 
This lifts to another Poisson bracket  for the reduced KdV map  (\ref{RedKdV})  in dimension $N+1$, which also  has rank $N$ and one Casimir; in fact, it is a linear combination of the two 
compatible brackets found in  \cite{hkqt}, so they all belong to the same Poisson pencil, consisting of the brackets 
\beq\label{pencil} 
\la_1 \{ \, , \, \}_1+ \la_2 \{ \, , \, \}_2, 
\eeq 
for arbitrary $(\la_1:\la_2)$, 
where $\{\, ,\,\}_{1,2}$ are any two fixed  independent brackets in this family.

According to Corollary 2.2 in \cite{hkw}, each of the bilinear equations (\ref{hir1}) and (\ref{hir2}) has a matrix Lax 
representation ($2\times 2$ and $N\times N$, respectively), and this yields Lax pairs for the corresponding U-systems 
(\ref{usyseven}) and (\ref{uvsys}). However, it is more convenient to make use of the $2\times 2$ Lax pair 
for the discrete KdV reduction, as obtained in \cite{hkqt}. This produces a complete set of 
first integrals for the map  (\ref{RedKdV}), which Poisson commute with respect to any bracket in the pencil (\ref{pencil}).
Hence the Liouville integrability of the map (\ref{ushift}) follows from that of   (\ref{RedKdV}), as described by the following 
result. 

\begin{theorem} \label{maineven} For $N$ even, 
let $\varphi$ and $\chi$ denote the birational maps in dimension $N+1$ defined by (\ref{ushift}) and 
(\ref{RedKdV}) respectively, and let $\psi$ denote the lift of (\ref{usyseven}) to $N+2$ dimensions 
given by 
$$ 
\psi:\qquad (w_0,w_1,\ldots,w_{N-1},\gam_0,\gam_1) \mapsto  (w_1,w_2,\ldots,w_N,\gam_1,\gam_0).
$$ 
Then with $\pi_j$ for $j=1,2$ defined by  (\ref{wsub}) and 
(\ref{KdVvar}),  each of the birational maps $\psi,\varphi,\chi$ preserves a Poisson bracket such that 
the  diagram 
\beq \label{cd}  
\begin{CD}
\C^{N+2} @>\psi >> \C^{N+2}\\ 
@VV\pi_1 V @VV\pi_1 V\\ 
\C^{N+1} @>\varphi >> \C^{N+1} \\
@VV\pi_2 V @VV\pi_2 V\\ 
\C^{N+1} @>\chi>> \C^{N+1} 
\end{CD}
\eeq 
of rational Poisson maps
is commutative. In particular, the bracket preserved by $\varphi$ is of rank $N$, 
being specified  by 
\beq\label{nubr} 
\{ u_0,u_{1}\}=u_0u_{1}, \quad 
 \{u_0, u_{N-1}\}=-\frac{\al}{\prod_{j=1}^{N-2} u_{j}} , 
\quad \{u_0, u_{N}\}= - u_0u_N+ \frac{\al^2}{\left(\prod_{j=1}^{N-1} u_{j}\right)^2} ,
\eeq
with $ \{u_0, u_{j}\}=0$ for $j=2,\ldots, N-2$. Moreover, each of the three horizontal maps is integrable in the Liouville sense. 
\end{theorem}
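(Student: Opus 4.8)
The plan is to prove the three assertions in turn: commutativity of the diagram (\ref{cd}); the fact that $\psi,\varphi,\chi$ together with the vertical arrows $\pi_1,\pi_2$ are all Poisson, with the explicit bracket (\ref{nubr}) for $\varphi$; and the Liouville integrability of the three horizontal maps. For commutativity, the point is that $\psi$, $\varphi$ and $\chi$ are all ``shift-forward'' maps on the respective spaces of solutions, while $\pi_1$ (given by $u_n=w_nw_{n+1}$, cf.\ (\ref{wsub})) and $\pi_2$ (given by $v_n=u_nu_{n+2}\cdots u_{n+N-2}$, cf.\ (\ref{KdVvar})) are multiplicative substitutions that manifestly commute with shifting $n$. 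Hence each square reduces to the statement that the vertical arrow sends solutions of one equation to solutions of the one below. For the upper square this is exactly the identity $({\cal S}^2-1)\gamma_n[w]=E_n[w]\,\prod_{j=2}^{N}w_{n+j}$ of the Lemma preceding Corollary \ref{gamco}: on a solution of (\ref{eqw}) the $2$-integral $\gamma_n[w]$ is $2$-periodic, so the left-hand side vanishes and $u_n=w_nw_{n+1}$ solves the lift (\ref{Ew}) of (\ref{ushift}). For the lower square it is the Proposition stated just above (\ref{RedKdV}), which says precisely that $v_n=u_nu_{n+2}\cdots u_{n+N-2}$ solves (\ref{RedKdV}) whenever $u_n$ solves (\ref{equ}), hence (\ref{ushift}). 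The only subtlety is dimensional bookkeeping: one uses (\ref{usyseven}), respectively (\ref{ushift}), to express the higher iterates entering the substitutions (namely $w_N,w_{N+1}$, respectively the $u_{n+j}$ with $j$ large) in terms of the $N$, respectively $N+1$, coordinates actually carried, and composing the two substitutions recovers $v_n=\prod_{k=0}^{N-1}w_{n+k}$.

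For the Poisson structures, the bracket on $\C^{N+2}$ for which $\gamma_0,\gamma_1$ are Casimirs and the $w_n$ satisfy (\ref{wbr}) is the one induced by the cluster-algebra presymplectic structure of (\ref{hir1}) (see \cite{FH,hkqt}); it is shift-invariant by construction, so $\psi$ preserves it, while $\chi$ preserves the rank-$N$ Poisson pencil (\ref{pencil}) on $\C^{N+1}$ by Theorem 3.14 of \cite{hkqt}. It then remains to show that $\pi_1$ is a Poisson map and that the induced bracket on the $u_n$ is exactly (\ref{nubr}). This is a direct Leibniz-rule calculation from (\ref{wbr}): using $\{w_i,\gamma_0\}=\{w_i,\gamma_1\}=0$ one obtains $\{u_i,u_{i+1}\}=u_iu_{i+1}$ and $\{u_i,u_j\}=0$ for $|i-j|\ge2$ as long as the indices stay $\le N-2$, while the nontrivial entries $\{u_0,u_{N-1}\}$ and $\{u_0,u_N\}$ pick up the $\alpha$-dependent terms in (\ref{nubr}) from the expressions for $w_N$ and $w_{N+1}$ dictated by (\ref{usyseven}); that the outcome depends only on the $u_j$ is precisely the assertion that $\pi_1$ is Poisson, and an equivalent check is that pulling (\ref{vbr}) back through $\pi_2$ returns the same bracket (\ref{nubr}), so that (\ref{cd}) is a commuting diagram of Poisson morphisms. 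That this bracket has rank $N$ with a one-dimensional kernel is a finite linear-algebra computation on its structure matrix; its Casimir can be taken to be one of the first integrals of (\ref{ushift}) displayed in Section 3 (such as $\hat\zeta$ in (\ref{zehat})).

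For Liouville integrability, recall that the reduced discrete KdV map $\chi$ of (\ref{RedKdV}) is already known to be integrable in the Liouville sense: the $2\times 2$ Lax pair of \cite{hkqt} yields $\lfloor N/2\rfloor$ functionally independent first integrals that Poisson commute with respect to every bracket in the pencil (\ref{pencil}), and these, together with the Casimir, exhaust the invariants on a generic rank-$N$ symplectic leaf. Since $\pi_1$ and $\pi_2$ are dominant Poisson morphisms (the first a submersion, the second generically finite), pulling this complete set of commuting integrals back through $\pi_2$, and then through $\pi_1$, produces complete sets of Poisson-commuting, functionally independent first integrals for $\varphi$ on $\C^{N+1}$ and for $\psi$ on $\C^{N+2}$; the counts match, namely $\lfloor N/2\rfloor$ integrals in involution plus one Casimir for $\varphi$, with the two extra Casimirs $\gamma_0,\gamma_1$ for $\psi$. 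Hence each of the three horizontal maps is integrable in the Liouville sense.

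I expect the main obstacle to be the Poisson step, i.e.\ checking that $\pi_1$ (and, as a cross-check, $\pi_2$) is genuinely a Poisson map and extracting the precise form of (\ref{nubr}). The delicate point is that $\pi_1$ drops the dimension by one, so the push-forward of (\ref{wbr}) need not a priori descend to the $u$-space; it is only the U-system relation (\ref{usyseven}) --- which controls how $w_N$ depends on $w_0,\dots,w_{N-1},\gamma_0$ --- that makes the bracket of two $u$-coordinates a function of the $u_j$ alone, and the same relation is responsible for the non-monomial, $\alpha$-dependent entries of (\ref{nubr}). A secondary issue requiring care is verifying that the pulled-back first integrals remain functionally independent and that the rank-$N$ structure is transported correctly through maps that change the ambient dimension.
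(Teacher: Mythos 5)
Your proposal follows essentially the same route as the paper: extend the cluster bracket (\ref{wbr}) to $\C^{N+2}$ with $\gam_0,\gam_1$ adjoined as Casimirs, push it forward through $\pi_1$ by a direct Leibniz computation to obtain (\ref{nubr}), invoke Theorems 3.14 and 4.1 of \cite{hkqt} for the Poisson pencil and Liouville integrability of the discrete KdV reduction $\chi$, and pull the complete commuting family back through $\pi_2$ and then $\pi_1$. The outline is sound, but three details should be corrected. First, the Casimir of (\ref{nubr}) is not $\hat{\zeta}$ from (\ref{zehat}) but the push-forward of $\gam_0\gam_1$; for $N=4$ this is $u_0u_1u_2u_3u_4-\al(u_0+u_4)+\al^2/(u_1u_2u_3)$, which differs from $\hat{\zeta}$ by $\al\be$, and since $\be$ is a first integral functionally independent of $\gam_0\gam_1$ while a rank-$N$ bracket in dimension $N+1$ has only a one-dimensional Casimir space, $\hat{\zeta}$ cannot itself be a Casimir. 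Second, although $\gam_0$ and $\gam_1$ are individually Casimirs of the bracket on $\C^{N+2}$, they are swapped by $\psi$ and so are not individually first integrals; the involutive family for $\psi$ must be completed by symmetric functions, and since $\pi^*I_0=-(\gam_0\gam_1)^{N/2}$ already accounts for the product, the paper adjoins $\gam_0+\gam_1$ as the extra Casimir. Third, your proposed cross-check is not equivalent to the main computation: the push-forward of (\ref{nubr}) by $\pi_2$ is the bracket $\{\,,\,\}_2$ of \cite{hkqt}, whereas the lift of the nondegenerate bracket (\ref{vbr}) is the different member $\{\,,\,\}_1$ of the pencil (\ref{pencil}); these are compatible but not proportional, so "pulling (\ref{vbr}) back through $\pi_2$" would not reproduce (\ref{nubr}). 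None of these points derails the argument, but the first two affect the explicit bookkeeping of Casimirs needed to conclude Liouville integrability of $\varphi$ and $\psi$.
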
 
\begin{proof} 
As already mentioned,  Theorem 3.14 in \cite{hkqt} says that the bracket 
(\ref{wbr}) lifts to a bracket for (\ref{RedKdV}). This can be made more explicit by first extending (\ref{usyseven}) to the map $\psi$ in dimension $N+2$, which preserves a Poisson bracket of 
rank $N$, defined by extending (\ref{wbr}) to include the extra coordinates $\gam_0,\gam_1$  as a pair of Casimirs. Then the 
formula $v_n = w_nw_{n+1} \cdots w_{n+N-1}$ from 
(\ref{KdVvar}) defines a Poisson map  $\pi: \, \C^{N+2}\to\C^{N+1}$, with a corresponding Poisson bracket for the variables $v_n$, 
$n=0,\ldots,N$, denoted by $\{\, ,\,\}_2$ say (as in \cite{hkqt}), 
so that 
$\pi^* \{ v_m,v_n \}_2  =\{ \pi^*v_m, \pi^*v_n\}$.  
This rational Poisson map factors as $\pi= \pi_2\circ \pi_1$, where 
$\pi_1$ is defined by $u_n=w_nw_{n+1}$, as in (\ref{wsub}), and 
$\pi_2$ by $v_n = u_n u_{n+2} \cdots u_{n+N-2}$. 
A direct calculation shows that the pushforward of the bracket (\ref{wbr}) by $\pi_1$ yields (\ref{nubr}), and the 
product $\gam_0\gam_1$ pushes forward to a Casimir of the latter bracket, which by construction 
is preserved by (\ref{ushift}). By Theorem 4.1 in \cite{hkqt}, the map $\chi$ is Liouville integrable: it has $D+1$ independent first integrals $I_0,I_1,\ldots, I_D$ with $D=\frac{N}{2}$, coming from the trace of a monodromy matrix, which commute with respect to the bracket $\{\, ,\}_2$, with $I_0$ being a Casimir. The pullbacks of these integrals, $\pi_2^*I_j$, provide $D+1$ commuting integrals for the map $\varphi$; and pulling back 
once more gives the same number of integrals $\pi^* I_j$ for $\psi$, including only one Casimir $\pi^*I_0=-(\gam_0\gam_1)^{N/2}$ (cf. Remark 3.15 in \cite{hkqt}), so taking another independent Casimir, i.e.  $\gam_0+\gam_1$, gives a
full set of commuting first integrals. Hence  the Liouville integrability of $\psi$ and $\varphi$ is proved. 
\end{proof} 

\begin{remark} For a fixed value of $K$, the second U-system (\ref{uvsys}) can also be regarded as being Liouville integrable with respect to the nondegenerate bracket (\ref{vbr}). As already mentioned, 
this lifts to another independent bracket in the pencil (\ref{pencil}), 
$\{ \, , \,\}_1$ say,   for the map $\chi$ defined by (\ref{RedKdV}). 
\end{remark}

\subsection{The odd case} 

As is mentioned in Remark \ref{oddeven}, in the case that $N$ is odd, $\gamma_n$ can be considered 
as a 2-integral for (\ref{equ}), and via the substitution (\ref{tausub}) the first bilinear  equation (\ref{hir1}) 
yields 
\beq\label{usysodd} 
\prod_{j=0}^{N-1} u_{n+j} = \gam_n \, \prod_{k=0}^{\frac{N-3}{2}}u_{n+2k+1}+\al, 
\eeq 
which is the U-system associated with this bilinear recurrence.  
The latter defines a symplectic map in dimension $N-1$, whose corresponding nondegenerate Poisson bracket 
is specified by equation (3.22) in Lemma 3.13 of \cite{hkqt}, namely 
\beq\label{uoddbr} 
\{ u_n,u_{n+1}\}=u_nu_{n+1}, \qquad \{u_n,u_{n+j}\}=0 \quad \mathrm{for} \quad 2\leq j\leq N-2.
\eeq 
(For a particular example, 
see (\ref{u5br})  below.) The U-system (\ref{usysodd}) can naturally be viewed as a 
reduction of the Hirota-Miwa equation, and a Lax pair and first integrals can be obtained immediately by applying 
Corollary 2.2 in \cite{hkw}. The case $N=5$ is presented explicitly in section \ref{odd}. 

At this stage,  for the odd case we can already state 
a partial analogue of Theorem \ref{maineven}. 

\begin{theorem} \label{mainodd} For $N$ odd, 
let $\varphi$ denote the birational map in dimension $N+1$ defined by (\ref{ushift}) and   
let $\psi$ denote the birational lift of (\ref{usysodd}) to $N+1$ dimensions 
given by 
$$ 
\psi:\qquad (u_0,u_1,\ldots,u_{N-2},\gam_0,\gam_1) \mapsto  (u_1,u_2,\ldots,u_{N-1},\gam_1,\gam_0).
$$ 
Then there is a birational map $\hat{\pi}_1$ such that 
the  diagram 
\beq \label{cdodd}  
\begin{CD}
\C^{N+1} @>\psi >> \C^{N+1}\\ 
@VV\hat{\pi}_1 V @VV\hat{\pi}_1 V\\ 
\C^{N+1} @>\varphi >> \C^{N+1}
\end{CD}
\eeq 
of birational  Poisson maps
is commutative. Moreover,  the Poisson bracket  preserved by 
$\varphi$ is     of rank $N-1$, with 
non-zero brackets   given by (\ref{nubr}). 
\end{theorem}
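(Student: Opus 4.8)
The plan is to imitate the proof of Theorem~\ref{maineven}, but to work throughout with the $u$-variables, since for odd $N$ both the 2-integral $\gam_n$ and the U-system (\ref{usysodd}) are already written in terms of $u_n$ and its shifts (Remark~\ref{oddeven}). Substituting $u_n=w_nw_{n+1}$ into (\ref{gamma}) and pairing consecutive $w$'s — possible because $N+1$ and $N-1$ are both even — gives $\gam_n[u]=\prod_{k=0}^{(N-1)/2}u_{n+2k}-\al\bigl(\prod_{k=0}^{(N-3)/2}u_{n+2k+1}\bigr)^{-1}$, which is exactly the quantity appearing in (\ref{usysodd}). I would then define $\hat\pi_1\colon \C^{N+1}\to\C^{N+1}$, on coordinates $(u_0,\dots,u_{N-2},\gam_0,\gam_1)$, to be the identity on $u_0,\dots,u_{N-2}$ and to adjoin two further coordinates: $u_{N-1}$ obtained by solving $\gam_0=\gam_n[u]|_{n=0}$, and then $u_N$ by solving $\gam_1=\gam_n[u]|_{n=1}$. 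Since $\gam_n[u]$ is linear in $u_{n+N-1}$, each step is a birational operation, and $\hat\pi_1^{-1}$ simply reads off $\gam_0=\gam_n[u]|_{n=0}$ and $\gam_1=\gam_n[u]|_{n=1}$ from $(u_0,\dots,u_N)$.

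For the commutativity of (\ref{cdodd}) I would compare $\hat\pi_1^{-1}\circ\varphi$ with $\psi\circ\hat\pi_1^{-1}$. The map $\varphi$ shifts the window of $u$'s by one, so the function $\gam_n[u]|_{n=0}$ is carried to $\gam_n[u]|_{n=1}$ and $\gam_n[u]|_{n=1}$ to $\gam_n[u]|_{n=2}$; the key point is that $\gam_n[u]|_{n=2}=\gam_n[u]|_{n=0}$ on the image of $\varphi$, since $\varphi$ is defined by $E_0[u]=0$ (equation (\ref{ushift})) and, pushing the identity of the Lemma down from $w$ to $u$ (legitimate for odd $N$), one has $({\cal S}^2-1)\gam_n=E_n\cdot(\text{a product of }u\text{'s})$. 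Tracking coordinates then shows that both composites send $(u_0,\dots,u_N)$ to the tuple with $u_1,\dots,u_{N-1}$ in the first $N-1$ slots and the pair $(\gam_n[u]|_{n=1},\gam_n[u]|_{n=0})$ in the last two; the only thing to note is that re-solving $\gam_n[u]|_{n=0}$ with its own value for $u_{N-1}$ returns the original value, which is immediate from linearity. This yields $\hat\pi_1\circ\psi=\varphi\circ\hat\pi_1$, hence $\varphi=\hat\pi_1\circ\psi\circ\hat\pi_1^{-1}$.

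It remains to install the Poisson structures. On the $\psi$-side I would take the bracket obtained by extending the nondegenerate bracket (\ref{uoddbr}) of the U-system (\ref{usysodd}) — equation (3.22) of Lemma~3.13 in \cite{hkqt} — from $\C^{N-1}$ to $\C^{N+1}$ by declaring $\gam_0,\gam_1$ to be Casimirs; this bracket has rank $N-1$ and is preserved by $\psi$, since the shift of the $u$-block preserves (\ref{uoddbr}) by the cited result, the swap $\gam_0\leftrightarrow\gam_1$ is harmless (both are Casimirs), and the new coordinate adjoined after a shift commutes with $\gam_0,\gam_1$ automatically. Transporting this bracket through the birational map $\hat\pi_1$ makes $\hat\pi_1$ a Poisson map by construction; and since $\varphi=\hat\pi_1\circ\psi\circ\hat\pi_1^{-1}$ with $\psi$ Poisson, $\varphi$ preserves the pushed-forward bracket, which has rank $N-1$ because $\hat\pi_1$ is birational. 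On $u_0,\dots,u_{N-2}$ the pushforward coincides with (\ref{uoddbr}), giving $\{u_0,u_1\}=u_0u_1$ and $\{u_0,u_j\}=0$ for $2\le j\le N-2$; for $\{u_0,u_{N-1}\}$ and $\{u_0,u_N\}$ one expands $u_{N-1}$ and $u_N$ as the explicit functions produced by $\hat\pi_1$ and applies the Leibniz rule using $\{u_0,\gam_0\}=\{u_0,\gam_1\}=0$ together with the already-established brackets among $u_0,\dots,u_{N-2}$; the even- and odd-index partial products reassemble into $\prod_{j=1}^{N-2}u_j$ and $\prod_{j=1}^{N-1}u_j$, and the result is precisely (\ref{nubr}).

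The step I expect to be the main obstacle is this last coordinate computation, in particular the entry $\{u_0,u_N\}$: because $u_N$ depends on $u_{N-1}$, which itself depends on $\gam_0$ and on the lower variables, the Leibniz expansion must be organised carefully, and the quadratic-plus-pole form $-u_0u_N+\al^2\bigl(\prod_{j=1}^{N-1}u_j\bigr)^{-2}$ emerges only after a cancellation of the two $\al u_0\bigl(\prod_{j=1}^{N-1}u_j\bigr)^{-1}$ terms contributed by the $\gam_1$-part and the $\al$-part of $u_N$. The fact that $\gam_0,\gam_1$ are Casimirs is what keeps the whole expansion finite; the rest is bookkeeping of which $u_j$ appears in which partial product. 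Everything else — the existence and birationality of $\hat\pi_1$, the commutativity of (\ref{cdodd}), and the fact that the pushforward of a Poisson bracket along a birational map is again Poisson of the same rank — is essentially formal once the Lemma, Remark~\ref{oddeven} and the cited results of \cite{hkqt} are available.
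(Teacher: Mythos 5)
Your proposal is correct and follows essentially the same route as the paper: you define $\hat{\pi}_1$ by solving the $n=0,1$ instances of the U-system (\ref{usysodd}) (equivalently, of $\gam_n$ written in the $u$-variables) for $u_{N-1},u_N$, extend the bracket (\ref{uoddbr}) to $\C^{N+1}$ with $\gam_0,\gam_1$ as Casimirs, and obtain (\ref{nubr}) by a direct pushforward calculation. The extra detail you supply on commutativity via the 2-integral identity $({\cal S}^2-1)\gam_n=E_n\cdot(\text{product})$ and on the Leibniz bookkeeping is consistent with, and merely elaborates, the paper's "direct calculation".
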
 
\begin{proof} The bracket (\ref{uoddbr}) for the U-system (\ref{usysodd}) in dimension $N-1$ extends to a bracket for $\psi$ by including 
the additional coordinates $\gam_0,\gam_1$ as two Casimirs. Taking $n=0,1$ in (\ref{usysodd})  defines $u_{N-1}$ and $u_N$ 
as rational functions of $ u_0,u_1,\ldots,u_{N-2},\gam_0,\gam_1$, and conversely gives $\gam_0$ and $\gam_1$ as rational 
functions of $ u_0,u_1,\ldots,u_{N}$, so this specifies a birational transformation $\hat{\pi}_1$ between these two sets of 
coordinates in dimension $N+1$. A direct calculation shows that the bracket preserved by $\varphi$ takes the same form (\ref{nubr}) 
as for $N$ even, but in this case there are two independent Casimirs given by $\gam_0,\gam_1$. 
\end{proof}

When $N$ is odd, the fact that the coefficient $\gam_n$ in (\ref{usysodd}) is 2-periodic means that neither this U-system, nor the corresponding 
bilinear equation (\ref{hir1}), 
can be related to a reduction of discrete KdV, which 
would require the period of $\gam_n$ to divide $N$ (cf. Proposition 3.7 in \cite{hkqt} and 
Proposition 4.2 in \cite{hkw}). However, it turns out that there is a connection with reductions of another integrable two-dimensional lattice equation, namely a discrete form of the Toda lattice.
This connection arises  from the fact that $\tau_n$ satisfies 
the other bilinear equation (\ref{hir3}), 
which is the content of the following statement. 
 
\begin{proposition} For odd $N$, the quantity given in terms of $\tau_n$ and shifts by 
\begin{equation} \label{Kbariny}
{\bar K}=\frac{\tau_{n} \tau_{n+2N+2}-\alpha^2 \tau_{n+2} \tau_{n+2N} }{(\tau_{n+N+1})^2}
\end{equation}
is a first integral of (\ref{multi}), which via (\ref{tausub}) produces 
a first integral of (\ref{equ}) or (\ref{ushift}) defined by 
\begin{equation} \label{intkbaru}
{\bar K}[u]:
=\left(\prod_{j=0}^{N-1} s_{n+2j}-\al^2 \right)\, (s_{n+N-1})^{\frac{N-1}{2}}
\prod_{k=1}^{\frac{N-3}{2}} (s_{n+2k} s_{n+2N-2k-2 })^k
,
\end{equation} 
where
\beq\label{sdef} s_n=u_nu_{n+1}.
\eeq 
\end{proposition}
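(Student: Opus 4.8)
The plan is to establish the two halves of the proposition in turn: first, that $\bar K$ as defined by (\ref{Kbariny}) --- equivalently, the coefficient forcing the bilinear relation (\ref{hir3}) --- is a first integral of (\ref{multi}); second, that under the substitution (\ref{tausub}) it equals the expression (\ref{intkbaru}) in the variables $s_n$ from (\ref{sdef}), which is then a first integral of (\ref{equ}) and of (\ref{ushift}). Throughout I would use that, by Corollary \ref{gamco}, every solution of (\ref{multi}) satisfies the three-term bilinear equation (\ref{hir1}) with a genuinely $2$-periodic coefficient $\gamma_n$, and that since $N$ is odd this periodicity yields $\gamma_{n+N}=\gamma_{n+1}$.

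For the first half, I would derive (\ref{hir3}) directly from (\ref{hir1}). Writing (\ref{hir1}) at index $n+N$ and using $\gamma_{n+N}=\gamma_{n+1}$ gives $\tau_{n+2N+2}\tau_{n+N}=\gamma_{n+1}\tau_{n+2N+1}\tau_{n+N+1}+\al\,\tau_{n+2N}\tau_{n+N+2}$; multiplying the combination $\tau_n\tau_{n+2N+2}-\al^2\tau_{n+2}\tau_{n+2N}$ by $\tau_{n+N}$, substituting this relation, and then eliminating the resulting term $\al\,\tau_n\tau_{n+2N}\tau_{n+N+2}$ by (\ref{hir1}) at index $n$, everything collapses to
\[
\tau_{n+N}\big(\tau_n\tau_{n+2N+2}-\al^2\tau_{n+2}\tau_{n+2N}\big)=\tau_{n+N+1}\big(\gam_{n+1}\tau_n\tau_{n+2N+1}+\al\,\gam_n\tau_{n+1}\tau_{n+2N}\big),
\]
so that $\bar K=\big(\gam_{n+1}\tau_n\tau_{n+2N+1}+\al\,\gam_n\tau_{n+1}\tau_{n+2N}\big)/\big(\tau_{n+N}\tau_{n+N+1}\big)$. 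It remains to verify $\mathcal S\bar K=\bar K$: cross-multiplying, this reduces to $P_n\tau_{n+N+2}=P_{n+1}\tau_{n+N}$ with $P_n$ the bracketed numerator above, and expanding $\tau_n\tau_{n+N+2}$ by (\ref{hir1}) at $n$ and $\tau_{n+2N+2}\tau_{n+N}$ by (\ref{hir1}) at $n+N$ makes the two sides coincide after matching terms pairwise. Since $\gam_0,\gam_1$ are fixed rational functions of the initial data of (\ref{multi}) (as in the proof of Proposition \ref{lify}), this shows $\bar K$ is a first integral of (\ref{multi}).

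For the second half, I would first record the clean cancellation $s_n=u_nu_{n+1}=\tau_n\tau_{n+4}/\tau_{n+2}^2$ coming from (\ref{tausub}), so that the $s_{n+2j}$, together with $\bar K$ and the whole of (\ref{hir3}), involve only $\tau$'s of a single fixed parity. Two elementary telescoping identities within that parity class then finish the job: (i) $\prod_{j=0}^{N-1}s_{n+2j}=\tau_n\tau_{n+2N+2}/(\tau_{n+2}\tau_{n+2N})$, obtained by reading off the exponent of each $\tau_{n+2m}$; and (ii) $\tau_{n+2}\tau_{n+2N}/\tau_{n+N+1}^2=(s_{n+N-1})^{\frac{N-1}{2}}\prod_{k=1}^{\frac{N-3}{2}}(s_{n+2k}\,s_{n+2N-2k-2})^k$, again a matter of checking that each interior exponent telescopes to $1+1-2=0$ while the end exponents reproduce $\tau_{n+2}^{\,1}\tau_{n+2N}^{\,1}\tau_{n+N+1}^{\,-2}$. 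Combining (i) with (\ref{hir3}) rewritten as $\prod_{j=0}^{N-1}s_{n+2j}-\al^2=\bar K\,\tau_{n+N+1}^2/(\tau_{n+2}\tau_{n+2N})$ and then using (ii) yields precisely (\ref{intkbaru}). Finally, since by Proposition \ref{lify} the substitution (\ref{tausub}) carries solutions of (\ref{multi}) to solutions of (\ref{equ}), $\bar K[u]$ is a first integral of (\ref{equ}); and because $\be$ is a first integral of (\ref{ushift}) on whose level sets (\ref{ushift}) restricts to (\ref{equ}), the same expression is a first integral of (\ref{ushift}), once (\ref{equ}) is used to re-express it in any $N$ adjacent iterates.

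The calculations here are routine, so the real work is organizational. The one step requiring a small idea is spotting, in the derivation of (\ref{hir3}), the correct pair of instances of (\ref{hir1}) to combine --- namely at $n$ and at $n+N$, not at consecutive indices --- which is exactly where the oddness of $N$ (hence $\gam_{n+N}=\gam_{n+1}$ rather than $\gam_{n+N}=\gam_n$) forces the $\tau_{n+N+1}^2$ term of (\ref{hir3}) to appear in place of the $\tau_{n+N}\tau_{n+N+1}$ term of (\ref{hir2}). After that, the exponent bookkeeping for the telescoping identity (ii) is the most error-prone part, and is best sanity-checked against the explicit $N=5$ case treated in section \ref{odd}.
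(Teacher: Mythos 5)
Your argument is correct, and it rests on exactly the same mechanism as the paper's proof: the $2$-periodicity of $\gam_n$ combined with the oddness of $N$, so that $\gam_{n+N\pm1}=\gam_n$. The packaging differs slightly. The paper proves the first half by exhibiting a single identity,
$$
\Delta \bar{K}=\frac{\tau_{n+2N+2}\tau_{n+1}}{\tau_{n+N+2}\tau_{n+N+1}}\,(\mathcal{S}^{N+1}-1)\gam_{n}
-\al\,\frac{\tau_{n+2N+1}\tau_{n+2}}{\tau_{n+N+2}\tau_{n+N+1}}\,(\mathcal{S}^{N-1}-1)\gam_{n+1},
$$
whose right-hand side vanishes for odd $N$; you instead derive the equivalent closed form $\bar K=\bigl(\gam_{n+1}\tau_n\tau_{n+2N+1}+\al\,\gam_n\tau_{n+1}\tau_{n+2N}\bigr)/\bigl(\tau_{n+N}\tau_{n+N+1}\bigr)$ by combining (\ref{hir1}) at indices $n$ and $n+N$, and then check $\mathcal{S}\bar K=\bar K$ by cross-multiplying and expanding with (\ref{hir1}) again. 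Both computations check out (I verified your term-by-term matching), and yours has the minor merit of being fully explicit where the paper simply asserts its identity. You also supply the exponent bookkeeping that converts (\ref{Kbariny}) into (\ref{intkbaru}) via $s_n=\tau_n\tau_{n+4}/\tau_{n+2}^2$ and the two telescoping products, a step the paper states without proof; your identities (i) and (ii) are correct (e.g.\ for $N=5$ they give $\tau_{n+2}\tau_{n+10}/\tau_{n+6}^2$ as required). No gaps.
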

\begin{proof}Taking $\gam_n$ as given by  (\ref{gamdef}) 
and applying the total difference operator to ${\bar K}$ yields the identity 
$$
\Delta {\bar K}=\frac{\tau_{n+2N+2}\tau_{n+1}}{\tau_{n+N+2}\tau_{n+N+1}} 
({\cal S}^{N+1}-1)\, \gam_{n}-\al \frac{\tau_{n+2N+1}\tau_{n+2}}{\tau_{n+N+2}\tau_{n+N+1}} 
({\cal S}^{N-1}-1)\, \gam_{n+1}, 
$$ 
so that the right-hand side vanishes because $({\cal S}^2-1)\gam_n=0$ and $N$ is odd. 
This completes the proof of the statement, and also the proof of Theorem \ref{maintau}. 
\end{proof}

\begin{remark} The preceding result means that, for each odd $N$, the Somos-$(N+2)$ recurrence (\ref{hir1}) is 
related to (\ref{hir3}), which corresponds to two copies of a Somos-$(N+1)$ recurrence with the iterates interlaced, 
since the iterates with even/odd indices decouple from each other. For the particular case $N=3$, the relation between Somos-5   (with autonomous coefficients) and two copies of 
Somos-4 was shown in Proposition 2.8 of \cite{honetams}, and 
 interpreted as a B\"acklund transformation in \cite{chang}.
\end{remark}

\subsubsection{Lax pair associated with a discrete Toda equation} 

The 
five-point lattice equation 
\begin{equation} \label{Todaeq}
\frac{V_{k,l}}{V_{k+1,l}}-\frac{V_{k-1,l}}{V_{k,l}}+\alpha^2 \left(\frac{V_{k+1,l-1}}{V_{k,l}}-\frac{V_{k,l}}{V_{k-1,l+1}}\right)=0
\end{equation}
is a discrete time Toda equation \cite{Date, book}. 
The $(1,-P)$ periodic reduction of (\ref{Todaeq}) 
corresponds to imposing the condition 
\beq\label{perto} 
V_{k+1,l-P}=V_{k,l} \qquad \implies \qquad 
V_{k,l}=v_n, \qquad n=k P+l,
\eeq 
which leads to 
the ordinary difference equation 
\begin{equation} \label{todared}
\frac{v_n}{v_{n+P}}-\frac{v_{n-P}}{v_n}+\alpha^2 \left(\frac{v_{n+P-1}}{v_{n}}-\frac{v_n}{v_{n+1-P}}\right)=0.
\end{equation}
Upon introducing a tau function $T_n$ such that \beq \label{vtoda}  v_n=\frac{T_n}{T_{n+1}},\eeq  
we can immediately apply 
Proposition 3.1 in  \cite{hkw}, where the $(Q,-P)$ periodic reduction of (\ref{Todaeq}) was considered; here 
we are only concerned with the case $Q=1$, which gives the following  result. 
 
\begin{proposition}\label{todakbar} 
If $v_n$ given by (\ref{vtoda}) satisfies (\ref{todared}), then there is a first integral $\bar{K}$ such that $T_n$ 
satisfies the bilinear equation 
\begin{equation} \label{todaKP}
T_{n+2P}T_{n}=\alpha^2\, T_{n+2P-1}T_{n+1}+ \bar{K} \, T_{n+P}^2,
\end{equation}
and conversely every solution of (\ref{todaKP}) provides a solution of (\ref{todared}). 
\end{proposition}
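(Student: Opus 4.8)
The plan is to prove Proposition~\ref{todakbar} by a direct computation, with the quantity
\[
{\bar K}_n := \frac{T_{n}T_{n+2P}-\alpha^2\,T_{n+2P-1}T_{n+1}}{T_{n+P}^2}
\]
as the candidate first integral: I will show that ${\bar K}_{n+1}={\bar K}_n$ whenever $v_n=T_n/T_{n+1}$ satisfies (\ref{todared}). Granting this, the bilinear equation (\ref{todaKP}) is nothing but the defining relation ${\bar K}_nT_{n+P}^2=T_nT_{n+2P}-\alpha^2T_{n+2P-1}T_{n+1}$ rearranged, with ${\bar K}$ the common value of the ${\bar K}_n$.

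First I would shift the index in (\ref{todared}) by $n\mapsto n+P$, so that the equation is symmetric about the term $v_{n+P}$, and substitute $v_n=T_n/T_{n+1}$. Clearing denominators by multiplying through by $T_{n+1}T_{n+P}T_{n+P+1}T_{n+2P}$ produces a homogeneous identity that is quartic in the $T$'s. The key observation is that its four terms group into two brackets according to whether they carry the factor $T_{n+P}^2$ or $T_{n+P+1}^2$: the identity takes the form
\[
T_{n+P}^2\big(T_{n+1}T_{n+2P+1}-\alpha^2 T_{n+2}T_{n+2P}\big)=T_{n+P+1}^2\big(T_{n}T_{n+2P}-\alpha^2 T_{n+2P-1}T_{n+1}\big).
\]
Recognising the bracket on the left as $T_{n+P+1}^2{\bar K}_{n+1}$ and the one on the right as $T_{n+P}^2{\bar K}_n$, this reads $T_{n+P}^2T_{n+P+1}^2({\bar K}_{n+1}-{\bar K}_n)=0$, so ${\bar K}_n$ is constant along solutions, and (\ref{todaKP}) follows. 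For the converse I would run this backwards: if $T_n$ solves (\ref{todaKP}) with a constant ${\bar K}$, then ${\bar K}_n={\bar K}$ for all $n$, hence ${\bar K}_{n+1}-{\bar K}_n=0$; multiplying by $T_{n+P}^2T_{n+P+1}^2$, ungrouping the two brackets, dividing out the monomial $T_{n+1}T_{n+P}T_{n+P+1}T_{n+2P}$, and shifting $n\mapsto n-P$ recovers (\ref{todared}) on the dense open set where the relevant $T_n$ are nonzero, which is all that is needed for the birational statement. Alternatively, the whole computation can be bypassed: this is precisely the $Q=1$ specialization of Proposition~3.1 in \cite{hkw}, so one may simply quote that result.

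I do not expect a genuine obstacle here. The only care required is the index bookkeeping for a general integer $P$, together with the check that the two roles played by ${\bar K}$ — a first integral of (\ref{todared}) and the coefficient in (\ref{todaKP}) — coincide, which the factorization above makes automatic. It is also worth noting in passing that ${\bar K}_n$ is invariant under the gauge $T_n\mapsto d\,c^{\,n}T_n$ (equivalently, ${\bar K}_n$ is a rational function of the $v_j$ alone, since $T_nT_{n+2P}/T_{n+P}^2=\prod_{j=0}^{P-1}v_{n+j}\big/\prod_{j=0}^{P-1}v_{n+P+j}$), so it descends to a genuine first integral of the reduced Toda map (\ref{todared}).
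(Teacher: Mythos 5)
Your proposal is correct. Note, though, that the paper's own ``proof'' of Proposition~\ref{todakbar} is simply the citation you mention in passing: it observes that (\ref{todared}) is the $(1,-P)$ reduction of (\ref{Todaeq}) and quotes Proposition~3.1 of \cite{hkw} with $Q=1$, with no computation given. Your main argument is therefore a genuinely self-contained alternative: shifting (\ref{todared}) by $n\mapsto n+P$, substituting $v_n=T_n/T_{n+1}$, and clearing denominators does produce exactly the quartic identity
\[
T_{n+P}^2\bigl(T_{n+1}T_{n+2P+1}-\alpha^2 T_{n+2}T_{n+2P}\bigr)=T_{n+P+1}^2\bigl(T_{n}T_{n+2P}-\alpha^2 T_{n+2P-1}T_{n+1}\bigr),
\]
which is the telescoping statement ${\bar K}_{n+1}={\bar K}_n$, and the converse follows by reversing the same algebra on the open set where the $T_n$ are nonzero. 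I checked the index bookkeeping and the factorization; both are right. Your closing remark that $T_nT_{n+2P}/T_{n+P}^2$ and $T_{n+2P-1}T_{n+1}/T_{n+P}^2$ are ratios of products of the $v_j$ (hence ${\bar K}$ genuinely descends to a first integral of the $2P$-dimensional reduced Toda map, and is gauge-invariant under $T_n\mapsto dc^nT_n$) is a worthwhile addition that the paper leaves implicit. What the paper's route buys is brevity and uniformity with the general $(Q,-P)$ case treated in \cite{hkw}; what yours buys is a two-line verifiable identity that makes the proposition independent of the external reference.
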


We now present a Lax representation for \eqref{Todaeq}, which originates from  a map associated with a discretization of the Toda lattice in \cite{Suris1,Suris2}, 
and subsequently provides a Lax representation for  \eqref{todared}.   

\begin{proposition} The discrete Toda equation \eqref{Todaeq} is equivalent to  the 
the discrete zero curvature 
equation 
\begin{equation} \label{laxeqT}
L(\Pi_{k,l},V_{k,l},\eta)M(V_{k+1,l},V_{k,l+1},\eta)=M(V_{k+1,l-1},V_{k,l},\eta)L(\Pi_{k+1,l},V_{k+1,l},\eta),
\end{equation}
where  $\eta$ is a spectral parameter, and 
\begin{equation} \label{LaxPairToda}
L(p,v,\eta)=\left(
\begin{array}{cc}
p+\eta &  v \\
 - v^{-1}
& 0
\end{array}
\right), \ M(u,v,\eta)=\left(
\begin{array}{cc}
1-\alpha^2 uv^{-1} 
-\alpha \eta & -\alpha u \\
  \al v^{-1} 
 & 1
\end{array}
\right). 
\end{equation}
\end{proposition}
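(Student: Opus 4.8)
The plan is to verify the zero curvature relation (\ref{laxeqT}) by a direct $2\times 2$ matrix computation, regarding $\Pi_{k,l}$ as an auxiliary field that the relation itself determines in terms of the $V$'s. Since $L$ and $M$ in (\ref{LaxPairToda}) are both affine in the spectral parameter $\eta$, each side of (\ref{laxeqT}) is a $2\times 2$ matrix whose entries are polynomials of degree at most $2$ in $\eta$, and the claim reduces to matching these entries. Before computing I would record two simplifications: $\det L(p,v,\eta)=1$ and $\det M(u,v,\eta)=1-\al\eta$ for all $p,u,v$, so the determinants of the two sides of (\ref{laxeqT}) are equal identically; and the $(2,2)$ entries of the two sides are equal by a one-line computation, both being $\al V_{k,l}^{-1}V_{k+1,l}$ (generically nonzero). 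It follows (from $A_{11}A_{22}-A_{12}A_{21}=\det A$, writing $A,B$ for the two sides) that as soon as the off-diagonal entries agree, the $(1,1)$ entries agree automatically; so it suffices to treat the $(1,2)$ and $(2,1)$ entries, each of which is affine in $\eta$.

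Next I would extract the content of those two entries. Their $\eta^{1}$-coefficients match trivially ($\al V_{k,l}^{-1}$ for $(2,1)$, $-\al V_{k+1,l}$ for $(1,2)$, on both sides). Equating the $\eta^{0}$-coefficients and clearing denominators, the $(2,1)$ entry yields
\[
\al\,\Pi_{k+1,l}=\frac{V_{k,l}}{V_{k+1,l}}-1+\al^{2}\frac{V_{k+1,l}}{V_{k,l+1}},
\]
i.e.\ (after $k\mapsto k-1$) an explicit formula for $\Pi_{k,l}$ in terms of $V_{k-1,l},V_{k,l},V_{k-1,l+1}$, while the $(1,2)$ entry yields a second formula,
\[
\al\,\Pi_{k,l}=\frac{V_{k,l}}{V_{k+1,l}}-1+\al^{2}\frac{V_{k+1,l-1}}{V_{k,l}} .
\]
Hence (\ref{laxeqT}) is solvable for $\Pi$ exactly when these two expressions for $\al\Pi_{k,l}$ coincide, in which case $\Pi$ is uniquely determined; and on equating them the common $-1$ cancels and one is left with precisely
\[
\frac{V_{k,l}}{V_{k+1,l}}-\frac{V_{k-1,l}}{V_{k,l}}+\al^{2}\Big(\frac{V_{k+1,l-1}}{V_{k,l}}-\frac{V_{k,l}}{V_{k-1,l+1}}\Big)=0 ,
\]
which is the discrete Toda equation (\ref{Todaeq}). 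This gives the asserted equivalence.

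I do not expect a genuine obstacle here, since the computation is elementary; the only care needed is organisational, namely recognising that the over-determined entrywise comparison collapses --- via the identities $\det L\equiv1$, $\det M\equiv 1-\al\eta$ and the matching $(2,2)$ entries --- to the single scalar relation (\ref{Todaeq}) together with the defining relation for $\Pi_{k,l}$. Alternatively, one could simply transcribe the verification from \cite{Suris1,Suris2}, where this Lax pair originates from an explicit discrete Toda map; the short self-contained check above is preferable, and it has the advantage of exhibiting the correct choice of the auxiliary variable $\Pi_{k,l}$.
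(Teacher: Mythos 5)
Your proof is correct and follows essentially the same route as the paper: the off-diagonal entries of the zero-curvature equation yield exactly the two expressions (\ref{pdef}) and (\ref{pdef2}) for $\Pi$, whose compatibility condition is the discrete Toda equation (\ref{Todaeq}). The only addition is your observation that the determinant identities together with the matching $(2,2)$ entries dispose of the $(1,1)$ entry without computation --- a tidy organisational shortcut, but not a different argument.
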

\begin{proof} The equation 
\eqref{laxeqT} implies that both 
\beq\label{pdef} \Pi_{k+1,l}= \frac{\alpha V_{k+1,l}}{V_{k,l+1}}+\frac{1}{\alpha}\left(\frac{V_{k,l}}{ V_{k+1,l}}-1\right)\eeq
and 
\beq\label{pdef2} \Pi_{k,l}= \frac{\alpha V_{k+1,l-1}}{V_{k,l}}+\frac{1}{\alpha}\left(\frac{V_{k,l}}{ V_{k+1,l}}-1\right),\eeq
and these two relations together imply the discrete Toda equation \eqref{Todaeq}. \end{proof}

By imposing the periodicity condition (\ref{perto}), the Lax matrices in \eqref{laxeqT}  reduce to 
$$L_n:=L(p_{n},v_{n},\eta), \qquad \ M_n:=M(v_{n},v_{n-P+1},\eta), $$
where from (\ref{pdef}) we have 
\beq\label{pred} 
p_n=\frac{\alpha v_{n}}{v_{n-P+1}}+\frac{1}{\alpha}\left(\frac{v_{n-P}}{ v_{n}}-1\right), 
\eeq 
and the zero curvature equation reduces to 
\begin{equation} \label{laxproof1}
L_nM_{n+P}=M_{n+P-1}L_{n+P}. 
\end{equation}
With this 
notation, we can introduce 
the monodromy matrix as 
\beq\label{monod} \mathcal{M}_{n}:= (1-\al\eta)\, M^{-1}_nL_{n-P+1} \dots L_{n-1} L_n.\eeq  
This satisfies a discrete Lax equation, which follows 
from the identity \eqref{laxproof1}.

\begin{corollary} 
The $(1,-P)$ periodic reduction \eqref{todared} obtained from  the discrete Toda equation  is equivalent to the 
discrete Lax equation 
\begin{equation} \label{LaxRed}
\mathcal{M}_{n}L_{n+1}=L_{n+1} \mathcal{M}_{n+1}.
\end{equation}
\end{corollary}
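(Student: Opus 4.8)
The plan is to verify \eqref{LaxRed} by direct substitution of \eqref{monod}, relying on a telescoping cancellation that reduces the $P$-step Lax equation for the monodromy matrix to a single instance of the local zero curvature equation \eqref{laxproof1}; the equivalence with \eqref{todared} is then inherited from the preceding proposition.

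First I would expand both sides of \eqref{LaxRed} using the definition of the monodromy matrix. Since $\mathcal{M}_n=(1-\al\eta)\,M_n^{-1}L_{n-P+1}\cdots L_n$ is a product of $P$ consecutive factors $L$ left-multiplied by $M_n^{-1}$, the left-hand side is
\[
\mathcal{M}_nL_{n+1}=(1-\al\eta)\,M_n^{-1}L_{n-P+1}L_{n-P+2}\cdots L_nL_{n+1},
\]
while the right-hand side is
\[
L_{n+1}\mathcal{M}_{n+1}=(1-\al\eta)\,L_{n+1}M_{n+1}^{-1}L_{n-P+2}\cdots L_nL_{n+1}.
\]
Here each $L_j$ has $\det L_j=1$ and is invertible, and $1-\al\eta=\det M_j$, so $(1-\al\eta)M_j^{-1}$ is the polynomial adjugate of $M_j$ and all the manipulations below are valid for generic $\eta$ and extend by continuity. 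Cancelling the common scalar, then the trailing $L_{n+1}$, then the common string $L_{n-P+2}\cdots L_n$ (empty when $P=1$), and finally rearranging the two surviving $M$-factors, \eqref{LaxRed} collapses to
\[
L_{n-P+1}\,M_{n+1}=M_n\,L_{n+1}.
\]
Under the periodic reduction \eqref{perto} one has $M_n=M(v_n,v_{n-P+1},\eta)$ and $L_n=L(p_n,v_n,\eta)$ with $p_n$ as in \eqref{pred}, so this is precisely the reduced zero curvature relation \eqref{laxproof1} with $n$ replaced by $n-P+1$.

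It then remains to recall, from the proposition establishing the Lax representation for \eqref{Todaeq} together with its reduction, that \eqref{laxproof1} holds for all $n$ exactly when $v_n$ satisfies \eqref{todared}: the full zero curvature equation \eqref{laxeqT} is equivalent to the pair \eqref{pdef}, \eqref{pdef2}, whose compatibility is \eqref{Todaeq}, and imposing \eqref{perto} turns this into the reduced statement with $p_n$ eliminated via \eqref{pred}. Reading the chain of equivalences forwards proves \eqref{LaxRed} from \eqref{todared}; reading it backwards, and undoing the telescoping (every step above is reversible), recovers \eqref{todared} from \eqref{LaxRed}. I expect the only delicate point to be the index bookkeeping --- checking that the $L$-strings inside $\mathcal{M}_n$ and $\mathcal{M}_{n+1}$ overlap in exactly $P-1$ matrices, so that a single copy of \eqref{laxproof1} survives the cancellation, and that the hidden arguments $v_{n-P+1},v_{n-P+2}$ inside $M_n,M_{n+1}$ line up with the $M$-matrices of \eqref{laxproof1} --- after which the argument is purely formal.
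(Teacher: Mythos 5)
Your proof is correct and takes essentially the route the paper intends: the paper states only that the discrete Lax equation ``follows from the identity \eqref{laxproof1}'', and your telescoping cancellation, reducing \eqref{LaxRed} to the single instance $L_{n-P+1}M_{n+1}=M_nL_{n+1}$ of \eqref{laxproof1} (i.e.\ \eqref{laxproof1} with $n\to n-P+1$), is precisely the computation left implicit. The index bookkeeping and the invertibility justifications ($\det L_j=1$, $\det M_j=1-\al\eta$) are all in order, as is the reversibility argument giving the converse direction via the consistency of \eqref{pred} with the reduction of \eqref{pdef2}.
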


The equation (\ref{LaxRed}) means that the shift $n\to n+1$ is an isospectral evolution for the 
monodromy matrix (\ref{monod}). From (\ref{LaxPairToda}), the determinant is $\det{\cal M}_n=1-\al\eta$, while 
$\tr{\cal M}_n$ is a monic polynomial of degree $P$ in $\eta$ whose coefficients provide first integrals of  \eqref{todared}.
 
Upon comparing (\ref{Kbariny}) with (\ref{todaKP}), we see that for odd $N$ the solutions of (\ref{equ}) correspond to two interlaced sets of tau functions, 
\beq\label{eo}
T^{\mathrm{even}}_n =\tau_{2n}, \qquad  T^{\mathrm{odd}}_n =\tau_{2n+1},
\eeq 
such that for $P=\frac{N+1}{2}$ there are two sets of solutions of  \eqref{todared} given by 
\beq\label{eot}
  v_n^{\mathrm{even}}=\frac{T_n^{\mathrm{even}}}{T_{n+1}^{\mathrm{even}}}, \qquad  
 v_n^{\mathrm{odd}}=\frac{T_n^{\mathrm{odd}}}{T_{n+1}^{\mathrm{odd}}}.
\eeq 
Then from (\ref{sdef}) we may write the even/odd index quantities $s_j$ as 
$$ 
s_{2n}=\hat{s}^{\mathrm{even}}_n:=\frac{  v_n^{\mathrm{even}}}{ v_{n+1}^{\mathrm{even}}}, \qquad
s_{2n+1}=\hat{s}^{\mathrm{odd}}_n:=\frac{  v_n^{\mathrm{odd}}}{ v_{n+1}^{\mathrm{odd}}}.
$$ 
The equation  \eqref{todared} for  the reduced Toda map is invariant under the scaling $v_n\to \la v_n$, for any 
non-zero $\la$, as are the quantities $\hat{s}^{\mathrm{even/odd}}_n$. Hence in this case \eqref{todared} 
becomes an equation of order $2P-1=N$ for each of the latter quantities, that is 
\beq\label{shat}
\hat{s}_n
\cdots \hat{s}_{n+P-1}-\hat{s}_{n-P}
\cdots \hat{s}_{n-1} 
+\al^2\left(\frac{1}{\hat{s}_n
\cdots \hat{s}_{n+P-2}} - \frac{1}{\hat{s}_{n-P+1}
\cdots \hat{s}_{n-1}}\right)=0 .
\eeq 
Each iteration of (\ref{ushift}) intertwines two sets of solutions of the above equation. 

\begin{proposition} \label{pbrshat} 
For even/odd $n$ taken separately, the formula (\ref{intkbaru}) defines a U-system in dimension $N-1=2P-2$ with coordinates 
$\hat{s}_0,\hat{s}_1, \ldots, \hat{s}_{N-2}$, preserving a nondegenerate Poisson bracket given by 
\beq\label{sbr}  
\{ \hat{s}_n,\hat{s}_{n+1}\} = \hat{s}_n\hat{s}_{n+1}, 
\quad 
\{ \hat{s}_n,\hat{s}_{n+P-1}\} = -2 \hat{s}_n\hat{s}_{n+P-1}, 
\quad 
\{ \hat{s}_n,\hat{s}_{n+P}\} = 2\hat{s}_n\hat{s}_{n+P}, 
\eeq 
with all other brackets $\{ \hat{s}_n,\hat{s}_{n+j}\}$ for $0\leq j\leq N-2$  being zero. This 
 lifts to a bracket of rank $N-1$ in dimension $N$ that is preserved by (\ref{shat}), with $\bar{K}$ being a Casimir, where 
the extra bracket is 
\beq\label{extra} 
\{ \hat{s}_n,\hat{s}_{n+N-1}\} = -\frac{\al^2}{\hat{s}_{n+1}\cdots \hat{s}_{n+N-2}}  
.
\eeq 
\end{proposition}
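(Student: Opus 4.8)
The plan is to deduce all of the Poisson-geometric content from structures that are already attached to the reduced discrete Toda equation, and then to translate them into the $u$-coordinates via the even/odd splitting. Recall from (\ref{vtoda}), (\ref{eo}) and (\ref{eot}) that, with $P=\frac{N+1}{2}$, each of $v_n^{\mathrm{even}}$ and $v_n^{\mathrm{odd}}$ solves the $(1,-P)$ reduced Toda equation (\ref{todared}), and that $\hat{s}_n=v_n/v_{n+1}=T_nT_{n+2}/T_{n+1}^2$. The second form exhibits $\hat{s}_n$ as the standard $U$-variable of the bilinear Toda recurrence (\ref{todaKP}) (in the sense of \cite{hi}), and (\ref{shat}) as the corresponding $U$-system, obtained from (\ref{todared}) by quotienting out the scaling action $v_n\mapsto\lambda v_n$. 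Since (\ref{todaKP}) is a periodic reduction of the discrete Toda equation and hence, through \cite{FH}, of the Hirota--Miwa equation, it inherits the cluster-algebra presymplectic structure, in which the $\hat{s}_n$ are log-canonical coordinates; I would use this to produce (\ref{sbr}), the odd-case counterpart of the even-case $U$-system brackets (\ref{vbr}), (\ref{wbr}) imported from \cite{hkqt}, with the non-log-canonical entry (\ref{extra}) appearing only upon lifting back to the order-$N$ map (\ref{shat}).

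Concretely, I would first record a log-canonical bracket $\{v_m,v_n\}=g(n-m)\,v_mv_n$ for the Toda variables that is preserved by (\ref{todared}) — such a bracket comes from the Lax matrix (\ref{monod}) together with an $r$-matrix, or equivalently from the cluster structure behind (\ref{todaKP}) and \cite{hkw}. The scaling vector field $\sum_n v_n\partial_{v_n}$ preserves any such bracket, so it descends to the quotient by $v_n\mapsto\lambda v_n$, and from $\hat{s}_n=v_n/v_{n+1}$ the Leibniz rule gives
\[
\{\hat{s}_m,\hat{s}_n\}=\big(2g(n-m)-g(n-m+1)-g(n-m-1)\big)\,\hat{s}_m\hat{s}_n,
\]
so the coefficients of the induced bracket are a discrete second difference of $g$. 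A short calculation then identifies the profile of $g$ forced by the Toda structure and shows that this second difference is concentrated at index distances $1$, $P-1$ and $P$ with the values $+1$, $-2$ and $+2$, vanishing at all intermediate distances, which is precisely (\ref{sbr}). Because (\ref{shat}) has order $N=2P-1$, the full $N$-dimensional map also pairs the two extreme coordinates $\hat{s}_n$ and $\hat{s}_{n+N-1}$; requiring the bracket to be invariant under (\ref{shat}) (equivalently, reading off the descended bracket before passing to a symplectic leaf) forces the entry (\ref{extra}), and $\bar{K}$ of (\ref{intkbaru}) emerges as the Casimir singled out by the corank-$1$ kernel of the resulting Poisson matrix.

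It then remains to carry out three verifications. (i) $\bar{K}$ is a Casimir: by (\ref{intkbaru}), $\log\bar{K}$ is an affine-linear combination of the $\log\hat{s}_j$ up to the additive correction coming from the factor $\prod_{j=0}^{N-1}\hat{s}_j-\al^2$, so $\{\bar{K},\hat{s}_k\}=0$ reduces, via (\ref{sbr})--(\ref{extra}) and Leibniz, to a finite family of numerical identities among the exponents, together with one relation closed up by (\ref{shat}). (ii) The Poisson matrix $(\{\hat{s}_i,\hat{s}_j\})$ on $\C^N$ has generic rank $N-1$ with kernel spanned by $d\bar{K}$, so on each level set $\bar{K}=\mathrm{const}$ — which is exactly the phase space of the $U$-system obtained by solving (\ref{intkbaru}) for $\hat{s}_{N-1}$ in terms of $\hat{s}_0,\dots,\hat{s}_{N-2}$ — the bracket restricts to the nondegenerate log-canonical bracket (\ref{sbr}) in dimension $N-1=2P-2$ (only distances $1,P-1,P$ occur among those coordinates). (iii) Invariance under (\ref{shat}): by translation invariance of the coefficients only $\{\hat{s}_k,\hat{s}_N\}$ for $k=1,\dots,N-1$ must be checked, where $\hat{s}_N$ is the rational function of $\hat{s}_0,\dots,\hat{s}_{N-1}$ given by (\ref{shat}); substituting and simplifying with (\ref{shat}) reproduces the prescribed values. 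Finally, setting $\hat{s}_n=s_{2n}$ (respectively $\hat{s}_n=s_{2n+1}$) according to (\ref{eo})--(\ref{eot}) and (\ref{sdef}) yields the statement for the even (respectively odd) index family.

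The main obstacle is the step where the exact coefficients $1,-2,2$ in (\ref{sbr}) and the precise form of the single non-log-canonical entry (\ref{extra}) are pinned down. This is where the combinatorics of the reduction genuinely enter: one has to track how the presymplectic form of (\ref{todaKP}) — equivalently the second difference of $g$ — is distributed over the $2P-1$ relevant index distances, and to check that the one correction term is simultaneously what is demanded by the Casimir property of $\bar{K}$ and by invariance under (\ref{shat}). The remaining verifications, though somewhat lengthy, are mechanical applications of the Leibniz rule.
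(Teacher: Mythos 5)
Your overall strategy---importing the log-canonical presymplectic structure of the cluster algebra behind (\ref{todaKP}) from \cite{FH}, positing a Toeplitz log-canonical bracket, and letting $\bar K$ emerge as the Casimir of the rank-$(N-1)$ lift---is the same one the paper uses. But the step that actually carries the content of the proposition is left as an assertion, and the mechanism you propose for it is shaky. You want to start from a log-canonical bracket $\{v_m,v_n\}=g(n-m)\,v_mv_n$ on the Toda variables, \emph{preserved by} (\ref{todared}), and obtain the coefficients in (\ref{sbr}) as the second difference $2g(k)-g(k+1)-g(k-1)$. The second-difference formula is correct for the monomial change of variables $\hat{s}_n=v_n/v_{n+1}$, but the premise is unjustified and almost certainly false as stated: the brackets that the map on the $v$-variables actually preserves are generically \emph{not} log-canonical (compare the analogous KdV-reduction brackets (\ref{4br1})--(\ref{4br2}), which carry additive corrections in $\al$), and indeed a globally log-canonical $v$-bracket would force $\{\hat{s}_n,\hat{s}_{n+N-1}\}$ to be log-canonical too, contradicting (\ref{extra}). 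Moreover, the ``short calculation'' that is supposed to identify the profile of $g$ and produce the values $+1,-2,+2$ at distances $1,P-1,P$ is never specified; you yourself flag it as the main obstacle. Since that computation \emph{is} the proposition, the proof is not complete.

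The paper pins the coefficients down by a different and self-contained device that you should adopt: by Theorem 4.6 of \cite{FH} the presymplectic form of (\ref{todaKP}) reduces to a nondegenerate log-canonical bracket $\{\hat{s}_m,\hat{s}_n\}=c_{mn}\hat{s}_m\hat{s}_n$ with $(c_{mn})$ a constant skew-symmetric Toeplitz matrix on the U-system coordinates $\hat{s}_0,\dots,\hat{s}_{N-2}$; one then takes the bracket of $\hat{s}_{n+j}$, $j=1,\dots,P-2$, with \emph{both sides} of the explicit U-system relation (\ref{susys}) (the monomial form of (\ref{intkbaru})), which yields $2P-4$ homogeneous linear equations for the first row of the Toeplitz matrix and determines it, up to overall scale, to be exactly (\ref{sbr}). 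The extra entry (\ref{extra}) then follows by lifting to dimension $N$ and imposing that $\bar K$ be a Casimir, i.e.\ by bracketing $\hat{s}_n$ itself with both sides of (\ref{susys}). This replaces your unverified $g$-profile with a finite, explicitly solvable linear system, and it is the piece your argument is missing.
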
 
\begin{proof}According to Theorem 4.6 in \cite{FH}, the bilinear equation  (\ref{todaKP}) preserves a log-canonical presymplectic form, which reduces to  a symplectic structure for the U-system 
\beq\label{susys}
\hat{s}_{n+P-1}^P\prod_{j=0}^{P-2} (\hat{s}_{n+j}\hat{s}_{n+N-1-j})^{j+1}=\al^2 \hat{s}_{n+P-1}^{P-1}\prod_{k=1}^{P-2} (\hat{s}_{n+k}\hat{s}_{n+N-1-k})^{k}+\bar{K}
\eeq 
in dimension $N-1$.
The symplectic structure is equivalent to a nondegenerate log-canonical Poisson bracket, of the form 
$\{\hat{s}_m,\hat{s}_n\} = c_{mn}\hat{s}_m\hat{s}_n$, where $(c_{mn})$ is a constant 
skew-symmetric Toeplitz matrix. Taking the Poisson bracket of $\hat{s}_{n+j}$ with both sides of (\ref{susys}) for 
$j=1,\ldots, P-2$ produces a system of $2P-4$ homogeneous linear equations for the entries of the first row of this matrix, 
which is readily solved to yield (\ref{sbr}), up to overall scaling by an arbitrary non-zero constant. Upon lifting this bracket to dimension $N$ and requiring that $\bar K$ be a Casimir, taking the bracket of $\hat{s}_n$ with both sides of (\ref{susys}) leads 
to the above expression for $ 
\{ \hat{s}_n,\hat{s}_{n+N-1}\}$. 
\end{proof} 

In the case of odd $N$, a partial  analogue of the bottom part of the diagram (\ref{cd}) arises by taking two iterations of the map 
$\varphi$ in  
(\ref{cdodd}), that is   
\beq\label{cdchi} 
\begin{CD}
\C^{N+1} @>\varphi^2 >> \C^{N+1}\\ 
@VV\hat{\pi}_2 V @VV\hat{\pi}_2 V\\ 
\C^{N} @>\hat{\chi} >> \C^{N}
\end{CD}
\eeq 
where the vertical map $\hat{\pi}_2$ is defined by using  (\ref{sdef}) either for  even values, or for odd values of $n$ only, and  
$$\hat{\chi}: \qquad (\hat{s}_0,\hat{s}_1,\ldots, \hat{s}_{N-1})\mapsto 
 (\hat{s}_1,\hat{s}_2,\ldots, \hat{s}_{N})
$$ is defined by (\ref{shat}) with $P=(N+1)/2$.  
The diagonal entries of the monodromy matrix ${\cal M}_n$ in (\ref{monod}) are functions of the ratios $\hat{s}_j=v_j/v_{j+1}$ 
(although the off-diagonal entries are not), so that $\tr{\cal M}_n$ directly provides first integrals for (\ref{shat}). 
Moreover, from the above diagram, the integrals provided by $\tr{\cal M}_n$ can be 
pulled back by $\hat{\pi}_2$ to give integrals for $\varphi^2$. 

In fact, we can say rather more: the two sets of  integrals obtained by taking 
even/odd $n$ in (\ref{sdef}) coincide, so they pull back to  integrals for $\varphi$. The reason is that the map 
$\varphi$ corresponds to a B\"acklund transformation for the discrete Toda reduction, in the sense of \cite{kuskly}. In order to show this, it is necessary to consider the two sets of reduced Lax matrices  
$$L_n^{\mathrm{even/odd}}:=L(p_{n}^{\mathrm{even/odd}},v_{n}^{\mathrm{even/odd}},\ze), \qquad \ M_n^{\mathrm{even/odd}}:=M(v_{n}^{\mathrm{even/odd}},v_{n-P+1}^{\mathrm{even/odd}},\ze) $$ 
for even/odd $n$  separately,  
and introduce a gauge transformation matrix defined by 
\beq \label{gauge} 
G_n:=\left(\begin{array}{cc} 
\ze +u_{2n+1}-\ka & v_n^{\mathrm{odd}} \\ 
-(v_{n+1}^{\mathrm{even}})^{-1} & -1 \end{array}\right), \qquad \ka = \al^{-1}(1-\gam_0\gam_1).  
\eeq 
\begin{lemma} \label{glem} The gauge matrix (\ref{gauge}) intertwines the even/odd reduced Lax matrices 
as follows: 
\beq\label{inter1} 
L^{\mathrm{even}}_n G_n = G_{n-1} L^{\mathrm{odd}}_n, 
\eeq   
\beq\label{inter2} 
M^{\mathrm{even}}_n G_n = G_{n-P} M^{\mathrm{odd}}_n.
\eeq  
\end{lemma}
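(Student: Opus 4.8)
The plan is to verify the two matrix identities directly, working in tau-function coordinates. Using (\ref{eo}), (\ref{eot}) and (\ref{tausub}) one has $v_n^{\mathrm{even}}=\tau_{2n}/\tau_{2n+2}$, $v_n^{\mathrm{odd}}=\tau_{2n+1}/\tau_{2n+3}$, so that the combination which recurs in the calculation simplifies:
\[
\frac{v_n^{\mathrm{even}}}{v_n^{\mathrm{odd}}}=\frac{\tau_{2n}\tau_{2n+3}}{\tau_{2n+1}\tau_{2n+2}}=u_{2n},
\]
while $u_{2n+1}=\tau_{2n+4}\tau_{2n+1}/(\tau_{2n+3}\tau_{2n+2})$ and, via (\ref{pred}), each of $p_n^{\mathrm{even}},p_n^{\mathrm{odd}}$ is a sum of two ratios of products of $\tau$'s. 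After these substitutions both sides of (\ref{inter1}) become $2\times 2$ matrices that are polynomial of degree $\leq 2$ in $\eta$, and the identity is equivalent to matching the coefficients of $\eta^0,\eta^1,\eta^2$ entrywise; the same holds for (\ref{inter2}), where the entries are affine in $\eta$.

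For (\ref{inter1}), the $\eta^2$ and $\eta^1$ parts of the $(1,1)$ entry, the whole $(2,2)$ entry, and the $\eta^1$ parts of the $(1,2)$ and $(2,1)$ entries coincide at once, the two sides sharing the factor $v_n^{\mathrm{odd}}$ common to $L_n^{\mathrm{odd}}$ and $G_{n-1}$. Dividing out leading monomials, the $\eta^0$ parts of the $(1,2)$ and $(2,1)$ entries reduce to
\[
\ka=u_{2n-1}+u_{2n}-p_n^{\mathrm{even}},\qquad \ka=u_{2n+1}+u_{2n}-p_n^{\mathrm{odd}},
\]
their difference is the $\eta^1$ part of the $(1,1)$ entry (hence automatic), and the $\eta^0$ part of the $(1,1)$ entry then follows from these two without further input. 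So everything comes down to showing that $u_{2n-1}+u_{2n}-p_n^{\mathrm{even}}$ and $u_{2n+1}+u_{2n}-p_n^{\mathrm{odd}}$ are conserved, with common value $\al^{-1}(1-\gam_0\gam_1)$. For this I would substitute (\ref{pred}) in tau-function form and check invariance under the shift directly from the bilinear equation (\ref{hir1}), using that its coefficient $\gam_n$ is $2$-periodic; the value of the resulting constant is then identified with $\al^{-1}(1-\gam_0\gam_1)$ by evaluating it for a single $n$ using the expression (\ref{gamdef}) for $\gam_0,\gam_1$ (equivalently, the U-system (\ref{usysodd})).

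The verification of (\ref{inter2}) proceeds the same way: expanding $M_n^{\mathrm{even}}G_n$ and $G_{n-P}M_n^{\mathrm{odd}}$, one entry matches trivially, the $\eta^1$ coefficients of the rest reduce to $v_n^{\mathrm{even}}/v_n^{\mathrm{odd}}=u_{2n}$ (now at shifted indices, using $P=(N+1)/2$), and the $\eta^0$ coefficients — after substituting (\ref{pred}) for the $p$'s buried in the lower-shifted entries — reduce to instances of (\ref{hir1}) and the reduced Toda equation (\ref{shat}) (equivalently (\ref{todared})), once more combined with $\gam_{n+2}=\gam_n$. The genuinely delicate step, in either identity, is pinning down the constant $\ka$: the rest is bookkeeping of the $2P$-fold index shifts relating the even and odd sublattices, but the collapse of $u_{2n-1}+u_{2n}-p_n^{\mathrm{even}}$ to $\al^{-1}(1-\gam_0\gam_1)$ is where the conserved quantity $\gam_0\gam_1$ enters the gauge matrix, and getting this normalization exactly right — including the factors of $\al$ coming from (\ref{pred}) — is what I would check with the most care.
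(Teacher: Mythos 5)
Your proposal is correct and follows essentially the same route as the paper: a direct entrywise verification in which everything reduces to the two scalar identities $p_n^{\mathrm{even}}=u_{2n-1}+u_{2n}-\ka$ and $p_n^{\mathrm{odd}}=u_{2n}+u_{2n+1}-\ka$ (and their analogues for (\ref{inter2})), established from (\ref{hir1}) together with the 2-periodicity of $\gam_n$. The only differences are organizational: the paper exploits $\det G_n=\ka-\ze$ to avoid checking the fourth matrix entry (your observation that the $(1,1)$ entry then ``follows without further input'' is the same fact in disguise), and it obtains the key identities directly from the product formula $\gam_0\gam_1=\bigl(v^{\mathrm{even}}_{n-P+1}-\al v^{\mathrm{odd}}_n\bigr)\bigl((v^{\mathrm{even}}_{n+1})^{-1}-\al (v^{\mathrm{odd}}_{n-P+1})^{-1}\bigr)$ combined with (\ref{ushift}), rather than by your ``show it is shift-invariant, then evaluate once'' argument; both rest on the same inputs, so this is a cosmetic rather than substantive divergence (note only that in (\ref{inter2}) the $(1,1)$ entries are in fact quadratic, not affine, in the spectral parameter, with the $\eta^2$ terms matching trivially).
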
 
\begin{proof} In order to prove these intertwining relations, it is helpful to note that 
$\det L^{\mathrm{even/odd}}_n =1$ and $\det M^{\mathrm{even/odd}}_n =1-\al \ze$, while   
$\det G_n =\ka -\ze$ follows from the fact that 
$u_{2n+1}=
{v_n^{\mathrm{odd}}}/{v_{n+1}^{\mathrm{even}}}$,  
which is a consequence of the tau function formulae (\ref{eo}) and (\ref{eot}). Thus, in both (\ref{inter1})  and 
(\ref{inter2}), the determinants of the 
left/right-hand sides agree, and henceforth it is sufficient to check only three out of four matrix entries in each equation. 
The $(2,2)$ entries on each side of (\ref{inter1}) are identical, and the same is true for (\ref{inter2}), so we 
need only consider the $(1,2)$ and $(2,1)$ entries. Taking the difference of the $(2,1)$ entries  on each side of   (\ref{inter1})  requires that 
\beq\label{l21} 
-(v_{n}^{\mathrm{even}})^{-1}\Big(u_{2n+1}-\ka -p_n^{\mathrm{odd}}\Big)-(v_{n}^{\mathrm{odd}})^{-1}=0
\eeq 
should hold. By using tau functions it is clear that 
$ 
u_{2n}=
{v_n^{\mathrm{even}}}/{v_{n}^{\mathrm{odd}}}$,  
so that the equality (\ref{l21}) boils down to the identity 
\beq\label{piden} 
p_n^{\mathrm{odd}} = u_{2n}+u_{2n+1}-\ka. 
\eeq 
To prove the latter, we successively use 
ratios of tau functions to go between different variables. On the one hand, we have 
\beq\label{gamid1} 
\gam_0\gam_1 = \Big(v^{\mathrm{even}}_{n-P+1}-\al v^{\mathrm{odd}}_n\Big) 
 \left(\frac{1}{v^{\mathrm{even}}_{n+1}}-\frac{\al}{ v^{\mathrm{odd}}_{n-P+1}}\right),  
\eeq   
which follows from (\ref{hir1}), while on the other hand 
$$  
\frac{v^{\mathrm{odd}}_{n-P}}{v^{\mathrm{odd}}_{n}} 
-\frac{v^{\mathrm{even}}_{n-P+1}}{v^{\mathrm{even}}_{n+1}} 
= u_{2n-N+1}\cdots u_{2n}(u_{2n-N}-u_{2n+1}) =\al (u_{2n}-u_{2n-N+1})
$$
by (\ref{ushift}). Combining the above with the fact that 
$u_{2n-N+1}={v_{n-P+1}^{\mathrm{even}}}/{v_{n-P+1}^{\mathrm{odd}}}$, and then comparing 
(\ref{pdef}) with (\ref{gamid1}), leads to   the alternative formula (\ref{piden}) for 
$p_n^{\mathrm{odd}}$. Similarly, the equality of the $(1,2)$ entries  on each side of  (\ref{inter1}) 
boils down to an equivalent formula for $p_n^{\mathrm{even}}$, that is 
\beq\label{piden2} 
p_n^{\mathrm{even}} = u_{2n-1}+u_{2n}-\ka. 
\eeq 
The equality of the $(2,1)$ entries on each side of (\ref{inter2}) is a direct consequence of the identity (\ref{gamid1}), 
while  to verify    the $(1,2)$ entries it is sufficient to note that 
\beq\label{gamid2} 
\gam_0\gam_1 = \Big(v^{\mathrm{odd}}_{n-P}-\al v^{\mathrm{even}}_n\Big) 
 \left(\frac{1}{v^{\mathrm{odd}}_{n}}-\frac{\al}{ v^{\mathrm{even}}_{n-P+1}}\right).  
\eeq   
Both (\ref{gamid1}) and (\ref{gamid2}) are proved in the same way, by expressing the terms on their  
right-hand sides in terms of tau functions, and using (\ref{hir1}) together with the 2-periodicity of $\gam_n$. 
\end{proof} 

For the case of odd $N$, we can now state a closer analogue of the bottom part of the diagram (\ref{cd}). 
\begin{theorem}\label{todatheorem} 
For odd $N=2P-1$, the map $\varphi$ defined by (\ref{ushift}) corresponds to a B\"acklund transformation (BT)  
for the reduced Toda equation (\ref{todared}). The BT is a 2-valued Poisson correspondence between solutions of  
(\ref{shat}), which preserves all the first  integrals obtained from the trace of the monodromy matrix (\ref{monod}), and there 
is a commutative diagram 
\beq\label{cdchibt} 
\begin{CD}
\C^{N+1} @>\varphi >> \C^{N+1}\\ 
@VV\hat{\pi}_2 V @VV\hat{\pi}_2 V\\ 
\C^{N} @>{\chi}_{BT} >> \C^{N}
\end{CD}
\eeq 
where ${\chi}_{BT}$ denotes one of the branches of the correspondence. 
\end{theorem}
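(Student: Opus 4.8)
The plan is to assemble Theorem \ref{todatheorem} almost entirely from the intertwining relations proved in Lemma \ref{glem}, together with the Poisson structures already established in Proposition \ref{pbrshat}. First I would observe that, since the gauge matrix $G_n$ in (\ref{gauge}) satisfies both (\ref{inter1}) and (\ref{inter2}), it conjugates the monodromy matrix $\mathcal{M}_n$ built from the even-indexed reduced Lax matrices to the one built from the odd-indexed ones: iterating $L^{\mathrm{even}}_n G_n = G_{n-1} L^{\mathrm{odd}}_n$ along a full period telescopes the product $L_{n-P+1}\cdots L_n$, and combining this with the $M$-relation (\ref{inter2}) to handle the $M^{-1}$ factor in (\ref{monod}) gives $\mathcal{M}^{\mathrm{even}}_n \, G_n = G_{n-P}\, \mathcal{M}^{\mathrm{odd}}_n$, up to keeping track of the scalar prefactor $(1-\al\ze)$, which is gauge-invariant. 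Since $G_n$ is invertible generically (its determinant is $\ka-\ze\neq 0$), the even and odd monodromy matrices are conjugate, hence have the same trace; by the remark following (\ref{LaxRed}), $\tr\mathcal{M}_n$ is a monic degree-$P$ polynomial in $\ze$ whose coefficients are first integrals, so \emph{all} these integrals are shared by the even and odd reductions. Because one step of $\varphi$ (equivalently, of (\ref{ushift})) carries the even-index data to the odd-index data (as recorded in the discussion around (\ref{eo})--(\ref{shat})), this shows $\varphi$ preserves every integral coming from $\tr\mathcal{M}_n$.

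Next I would make precise the claim that $\varphi$ is a \textbf{2-valued} correspondence rather than an ordinary map at the level of the $\hat s$-variables. The vertical map $\hat\pi_2$ forgets information: from $N+1$ coordinates $(u_0,\ldots,u_{N-2},\gam_0,\gam_1)$ — equivalently the $u$'s on a window — we pass to the $N$ ratios $\hat s_j = u_j u_{j+1}$ taken along, say, the even sublattice, which is a well-defined rational surjection, but reconstructing a $u$-orbit from an $\hat s$-orbit requires choosing an overall scale (the equation (\ref{shat}) and the quantities $\hat s$ are invariant under $v_n\mapsto\la v_n$, hence under a one-parameter rescaling of the $u$'s compatible with $s_n=u_nu_{n+1}$), and there is a residual discrete ambiguity. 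Concretely, the BT is defined by: given a solution of (\ref{shat}) on the even slots, produce the corresponding reduced-Toda solution $v^{\mathrm{even}}_n$ via (\ref{eot}), apply the gauge transformation $G_n$ to obtain $v^{\mathrm{odd}}_n$ — which by Lemma \ref{glem} again solves the reduced Toda equation (\ref{todared}) — and read off the new $\hat s$-solution $\hat s^{\mathrm{odd}}_n = v^{\mathrm{odd}}_n/v^{\mathrm{odd}}_{n+1}$; the two-valuedness enters because $\ka$ in (\ref{gauge}) depends on $\gam_0\gam_1$, and recovering $(\gam_0,\gam_1)$ from the $\hat s$-data (via, e.g., the $N=5$ formulae or the general inversion of $\hat\pi_1\circ\hat\pi_2$) is a quadratic, two-sheeted operation. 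Choosing one sheet consistently defines the branch $\chi_{BT}$ and makes the square (\ref{cdchibt}) commute by construction: $\hat\pi_2\circ\varphi$ and $\chi_{BT}\circ\hat\pi_2$ are the two routes from $u$-data to the new $\hat s$-data, and they agree because both are computed through the same passage even$\to$odd via $G_n$.

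For the Poisson statement I would invoke Proposition \ref{pbrshat}: the bracket (\ref{sbr})--(\ref{extra}) on the $\hat s$-variables makes the U-system (\ref{susys}) — and hence (\ref{shat}) with $\bar K$ a Casimir — a Poisson system, and I would check that the correspondence $\chi_{BT}$ is a Poisson correspondence, i.e. its graph is coisotropic (or, equivalently on each branch, $\chi_{BT}$ pulls the bracket back to itself). The cleanest way is to use the commutative diagram (\ref{cdchibt}) in reverse: from Theorem \ref{mainodd} the map $\varphi$ preserves the bracket (\ref{nubr}) on $\C^{N+1}$, and $\hat\pi_2$ is (generically) a Poisson submersion onto $(\C^N,\{\,,\,\}_{\hat s})$ — this compatibility is exactly the content of the lifting statement in Proposition \ref{pbrshat} that the $\hat s$-bracket comes from the $u$-bracket by pushforward — so $\varphi$-invariance of the $u$-bracket descends to $\chi_{BT}$-invariance of the $\hat s$-bracket. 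The main obstacle I anticipate is the bookkeeping in this last descent argument: one must verify that the two different vertical maps in (\ref{cdchibt}) (the "lift" $\hat\pi_1$ of Theorem \ref{mainodd} combined with $\hat\pi_2$) are genuinely Poisson with respect to the \emph{same} brackets, and that the one-parameter scaling gauge freedom $v_n\mapsto\la v_n$ is tangent to the symplectic leaves so that quotienting by it is a legitimate Poisson reduction — a fussy but ultimately routine point, since the scaling vector field $\sum_n v_n\partial_{v_n}$ is easily seen to be Hamiltonian (indeed Casimir-like) for the log-canonical bracket. Everything else reduces to the already-proved intertwining identities and the standard fact that conjugate monodromy matrices share their spectral invariants.
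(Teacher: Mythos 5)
Your first step (telescoping (\ref{inter1}) through the product of $L$'s and then using (\ref{inter2}) to absorb the $M^{-1}$ factor) is exactly the paper's route, but your stated conclusion $\mathcal{M}^{\mathrm{even}}_n G_n = G_{n-P}\,\mathcal{M}^{\mathrm{odd}}_n$ is not what that computation gives, and it does not yield conjugacy: with different gauge matrices on the two sides the traces need not agree. The point of invoking (\ref{inter2}) is precisely that $(M_n^{\mathrm{even}})^{-1}G_{n-P}=G_n(M_n^{\mathrm{odd}})^{-1}$, so the $G_{n-P}$ produced by the telescoping is converted back to $G_n$ and one gets $\mathcal{M}_n^{\mathrm{even}}G_n=G_n\mathcal{M}_n^{\mathrm{odd}}$, i.e.\ genuine conjugacy by a single invertible matrix; only then do the spectral invariants coincide. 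More seriously, your construction of the correspondence is circular as stated: ``apply the gauge transformation $G_n$ to obtain $v_n^{\mathrm{odd}}$'' presupposes $G_n$, whose entries (\ref{gauge}) contain $v_n^{\mathrm{odd}}$ and $u_{2n+1}$ --- the very data being constructed. In the paper, the relations (\ref{inter1})--(\ref{inter2}) are treated as \emph{equations} determining the odd data from the even data together with one residual unknown $u_{4P-1}$, and the isospectrality requirement (\ref{spec}) then constrains $u_{4P-1}$.

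This is also where your explanation of the 2-valuedness fails: it does not come from recovering $(\gam_0,\gam_1)$ from symmetric functions. It comes from the spectrality argument of Kuznetsov--Sklyanin: at $\ze=\ka$ the matrix $G_{2P-1}(\ka)$ is singular, the kernel vector of its transpose is a left eigenvector of $\mathcal{M}^{\mathrm{even}}_{2P-1}(\ka)$ with some eigenvalue $\mu$, so $(\ka,\mu)$ lies on the spectral curve $\mu^2-\tr\mathcal{M}^{\mathrm{even}}_{2P-1}(\ka)\,\mu+1-\al\ka=0$; the two roots $\mu$ give the two admissible values of $u_{4P-1}$, hence the two branches ($\varphi$ and $\varphi^{-1}$). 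Without this, you have not shown the correspondence is 2-valued, nor that $\varphi$ realises a branch of it. Finally, your Poisson argument by descent through $\hat{\pi}_2$ rests on the claim that $\hat{\pi}_2$ is a Poisson submersion from $(\C^{N+1},(\ref{nubr}))$ onto $(\C^{N},(\ref{sbr})\hbox{--}(\ref{extra}))$; that is not the content of Proposition \ref{pbrshat} (which derives the $\hat{s}$-bracket from the cluster/presymplectic structure of (\ref{todaKP}), not as a pushforward of (\ref{nubr})), so you would have to verify it by direct computation. The paper instead argues via the bilinear equation (\ref{hir3}): the associated symplectic form splits as $\hat\om=\hat\om^{\mathrm{even}}+\hat\om^{\mathrm{odd}}$ with $\varphi^*$ interchanging the two summands, and preservation of the lifted degenerate bracket then follows from this together with the invariance of $\bar K$.
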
  
\begin{proof} To begin with, suppose that $u_n$ is a solution of (\ref{ushift}), with $\tau_n$ being a corresponding 
tau function, so that $u_n =\tau_{n+3}\tau_n/(\tau_{n+2}\tau_{n+1})$. Then for the gauge matrix $G_n$ given by 
(\ref{gauge}), repeated application of  (\ref{inter1}) shows that 
$$ 
\begin{array}{rcl}
(M_n^{\mathrm{even}})^{-1}L_{n-P+1}^{\mathrm{even}} \dots L_{n-1}^{\mathrm{even}} L_n^{\mathrm{even}}G_n 
& = & (M_n^{\mathrm{even}})^{-1}L_{n-P+1}^{\mathrm{even}} \dots L_{n-1}^{\mathrm{even}} G_{n-1} L_n^{\mathrm{odd}} \\  
& = & \cdots  \\ 
& = & (M_n^{\mathrm{even}})^{-1} G_{n-P}L_{n-P+1}^{\mathrm{odd}} \dots L_{n-1}^{\mathrm{odd}} L_n^{\mathrm{odd}}, 
\end{array} 
$$ 
and then by applying (\ref{inter2}) it follows that the monodromy matrices for the even/odd index solutions of   
(\ref{todared}) are related 
by 
\beq\label{btg} 
\mathcal{M}_{n}^{\mathrm{even}}G_n = G_n \mathcal{M}_{n}^{\mathrm{odd}}, 
\eeq 
proving the claim that the first integrals for these two sets of solutions coincide. 
 
Now suppose instead that an adjacent  set of $2P$ variables $v_j^{\mathrm{even}}$ 
 is given, corresponding to a set of initial data for (\ref{todared}), and define a 
transformation to another set $v_j^{\mathrm{odd}}$, say with $j=0,\ldots, 2P-1$ in each case, by the gauge 
transformation of monodromy matrices (\ref{btg}) for $n=2P-1$. With this choice of indices,  $v_{2P}^{\mathrm{even}}$ 
appearing in $G_{2P-1}$ should be specified in terms of $v_j^{\mathrm{even}}$  for $0\leq j\leq 2P-1$ 
according to (\ref{todared}), 
and the quantities $u_k$ with even/odd  indices can be defined as the ratios 
$$ 
u_{2j}=\frac{v_j^{\mathrm{even}} }{ v_{j}^{\mathrm{odd}}},  \qquad  
u_{2j+1}=\frac{v_j^{\mathrm{odd}}}{ v_{j+1}^{\mathrm{even}}}, \qquad j=0,\ldots, 2P-1.
$$ 
Imposing the condition (\ref{inter1}) for $n=P,\ldots, 2P-1$ implies that $v_j^{\mathrm{odd}}$ for 
$j=0,\ldots, 2P-1$ are determined completely by the initial $v_j^{\mathrm{even}}$, together with $u_{4P-1}$ and 
the B\"acklund parameter $\ka$ appearing in each of the $G_n$. The requirement that (\ref{inter2}) should also hold then ensures 
that (\ref{btg}) is satisfied for $n=2P-1$,  that is 
\beq \label{spec} 
\mathcal{M}_{2P-1}^{\mathrm{even}}(\ze)G_{2P-1} (\ze) = G_{2P-1} (\ze) \mathcal{M}_{2P-1}^{\mathrm{odd}}(\ze ) 
\eeq 
for all $\ze $, so the monodromy matrices    $\mathcal{M}_{2P-1}^{\mathrm{even/odd}}$ 
have the same spectrum, and this  requirement imposes an additional relation on $u_{4P-1}$. In that case,  the correspondence between 
 $v_j^{\mathrm{even/odd}}$ is fixed up to a choice of square root, which can be seen directly by applying 
the notion of spectrality from \cite{kuskly}: upon  noting that, for  $\ze =\ka$,  a vector in the kernel of the transposed gauge matrix   is given by  
$$ {\bf v}= \left(
\begin{array}{c}( v_{2P}^{\mathrm{even}})^{-1} \\ u_{4P-1} \end{array}\right) \implies  G_{2P-1}^T(\ka ) {\bf v} =\mathbf{0}, 
$$  
it follows that  
$$ 
 {\bf v}^T \mathcal{M}_{2P-1}^{\mathrm{even}}(\ka)G_{2P-1} (\ka) =\mathbf{0}^T
\implies  {\bf v}^T \mathcal{M}_{2P-1}^{\mathrm{even}}(\ka)=\mu \,  {\bf v}^T 
$$ 
for some $\mu$, 
so $ {\bf v}^T$ is a left eigenvector of the monodromy matrix for this value of $\ze$. Thus $(\ka , \mu)$ is a point 
on the spectral curve 
$$ 
\mu^2 - \tr  \mathcal{M}_{2P-1}^{\mathrm{even}}(\ka)\, \mu + 1-\al \ka  
=0.
$$ 
So, for a fixed choice of the initial data and the parameter $\ka$, there are two possible values of $\mu$, 
and by writing 
$$ \mathcal{M}_{2P-1}^{\mathrm{even}}(\ka) = \left(
\begin{array}{cc}a & b  \\ c & d  \end{array}\right)
$$ 
this leads to  
two possible values for 
$$ 
u_{4P-1}=\frac{\mu-a}{v_{2P}^{\mathrm{even}}c} 
= \frac{b}{v_{2P}^{\mathrm{even}}(\mu-d)}.
$$
Hence the BT defined in this way is a 2-valued correspondence between $v_j^{\mathrm{even/odd}}$, and 
also provides a 2-valued correspondence between the quantities $\hat{s}_j^{\mathrm{even/odd}}$ for $j=0,\ldots, N-1$.  The iteration of the map $\varphi$ defined by (\ref{ushift}) corresponds to one particular branch of the correspondence, $\chi_{BT}$ say, with the 
other branch corresponding to $\varphi^{-1}$.

It remains to verify that this is a Poisson correspondence, preserving the Poisson structure  for the coordinates $\hat{s}_j$ in Proposition \ref{pbrshat}. To see this, note that  the branch $\chi_{BT}$ is associated with the bilinear equation (\ref{hir3}), 
which arises from a cluster algebra, and takes the form of two copies of  (\ref{todaKP}) for even/odd indices. By applying 
Theorem 4.6 in \cite{FH}, the corresponding presymplectic form reduces to a symplectic form $\hat\omega$ in 
dimension $2N-2$ for the combined U-system 
coordinates  $\hat{s}_j^{\mathrm{even/odd}}$ for $j=0,\ldots, N-2$, being a sum of two identical  
symplectic forms  which are switched under the action of the map $\varphi$, that is 
$$ 
\hat\om = \hat\om^{\mathrm{even}}+\hat\om^{\mathrm{odd}}, \qquad 
\varphi^* \hat\om^{\mathrm{even/odd}} = \hat\om^{\mathrm{odd/even}}. 
$$
Now $\chi_{BT}$  preserves all the first integrals of (\ref{shat}), including $\bar K$, hence it must also preserve 
the lifted bracket in dimension $N$ given by (\ref{sbr}) and (\ref{extra}). 
\end{proof} 

The Liouville integrability of 
the map $\hat\chi$ in (\ref{cdchi}), defined by (\ref{shat}),   is worthy of a more detailed treatment elsewhere, as is the 
connection of the BT with that for the even Mumford systems in \cite{kuvan}.  
In section \ref{odd} below we merely present the details for the particular case 
$N=5$.

\section{An even example: $N=4$}\label{even} 

For  $N=4$ the equation \eqref{equ} becomes 
\begin{equation} \label{EqN4}
(u_n+u_{n+1}+u_{n+2}+u_{n+3}+u_{n+4}+\beta)u_{n+1}u_{n+2}u_{n+3}=\alpha, 
\end{equation}
and its lift (\ref{ushift})  is the map $\varphi$ in  dimension 5 defined by  
\beq\label{sh4} 
\varphi: \qquad 
u_{n+5}-u_n + \frac{\al}{u_{n+2}u_{n+3}}\left( \frac{1}{u_{n+1}}- \frac{1}{u_{n+4}}\right)=0. 
\eeq 
If we set $u_n=\frac{\tau_{n+3} \tau_n}{\tau_{n+2} \tau_{n+1}}$, then the 
tau function $\tau_n$ satisfies (\ref{multi}), which in this case is of degree 6, being given by 
\beq\label{multi4} 
\begin{array} {rcl} 
\tau_{7}\tau_{4}^2\tau_{3}\tau_{2} \tau_{1} & = &   
\al \tau_{5}\tau_{4}^2\tau_{3}^2\tau_{2}  
-\be 
\tau_{6} \tau_{5}  \tau_{4} \tau_{3} \tau_{2} \tau_{1}  
 -  \tau_{6} \tau_{5}  \tau_{4} \tau_{3}^2  \tau_{0} 
 -  \tau_{6} \tau_{5}  \tau_{4}^2 \tau_{1}^2  \\
&& 
 -  \tau_{6} \tau_{5}^2  \tau_{2}^2 \tau_{1} 
 -  \tau_{6}^2 \tau_{3}^2  \tau_{2} \tau_{1} 
\end{array}  \eeq 
(with $n\to 0$ for brevity).
The iterates of (\ref{multi4}) satisfy a Somos-6 relation, namely the  first bilinear equation 
(\ref{hir1}), which  
takes the form 
 \begin{equation} \label{N4bil1}
 \tau_{n+6} \tau_{n}=\gamma_n \tau_{n+5} \tau_{n+1}+\alpha \tau_{n+4} \tau_{n+2},
\qquad \gamma_{n+2}=\gamma_n, 
 \end{equation}
while the second bilinear equation (\ref{hir2})  is 
\begin{equation} \label{N4bil2}
 \tau_{n+9} \tau_{n}=K \tau_{n+5} \tau_{n+4}-\alpha \tau_{n+8} \tau_{n+1},
 \end{equation}
in this case. 

For $N=4$, taking  $w_n=\frac{\tau_{n} \tau_{n+2}}{ \tau_{n+1}^2}$ yields 
the U-system (\ref{usyseven}) for 
(\ref{N4bil1}). Each iteration 
of the U-system is symplectic, and it  lifts to the map %
\beq\label{usys4} 
\psi: \quad 
(w_0,w_1,w_2,w_3,\gam_0,\gam_1)\mapsto 
\left( w_1,w_2,w_3,
\frac{\gam_0 w_{1}w_{2}w_{3}+\al }{w_0 w_{1}^2w_{2}^2w_{3}^2} 
,\gam_1,\gam_0\right)
\eeq  
in six dimensions, 
preserving the 
log-canonical Poisson bracket given by 
\beq\label{ubr4}
\{w_m,w_n\}=c_{mn}w_mw_n, \quad (c_{mn})_{0\leq m,n\leq 3} =\left(\begin{array}{cccc} 
0 & 1 & -1 & 1 \\ 
-1 & 0 & 1 & -1 \\ 
1 & -1 & 0 & 1 \\ 
-1 & 1 & -1 & 0 \end{array}\right), \quad  \{\gam_m,w_n\}=0.
\eeq 
Under the map defined by setting $u_n=w_nw_{n+1}$, that is 
$$ 
\pi_1: \qquad 
(w_0,w_1,w_2,w_3,\gam_0,\gam_1)\mapsto (w_0w_1,w_1w_2,w_2w_3,w_3w_4,w_4w_5)
$$ 
where 
$$
w_4=\psi^*w_3=\frac{\gam_0 w_{1}w_{2}w_{3}+\al }{w_0 w_{1}^2w_{2}^2w_{3}^2} , \qquad
w_5=\psi^*w_4=\frac{\gam_1 w_{2}w_{3}w_{4}+\al }{w_1 w_{2}^2w_{3}^2w_{4}^2},   
$$
the bracket (\ref{ubr4}) yields the bracket (\ref{nubr}) preserved by (\ref{sh4}).

The second U-system (\ref{uvsys}), associated with  \eqref{N4bil2}, is obtained by taking 
$v_n=\frac{\tau_n  \tau_{n+5}}{\tau_{n+1} \tau_{n+4}}=u_n u_{n+2},$ 
producing the  birational map 
\begin{equation} \label{sympmap4}
\hat{\psi}: \qquad (v_0,v_1,v_2,v_3)\mapsto\left(v_1,v_2,v_3,\frac{ K-\alpha v_1 v_2 v_3}{v_0 v_1 v_2 v_3}\right), 
\end{equation} 
which is symplectic with respect to the 2-form 
$$\omega=\sum_{0 \le i<j \le 3}\frac{1}{v_i v_j}\rd v_i \wedge \rd v_j.$$ Up to overall scale, this 
symplectic form corresponds to the nondegenerate 
log-canonical Poisson bracket given by 
$\{v_m,v_n\}=c_{mn}v_mv_n$, with the same coefficients $c_{mn}$ as in (\ref{ubr4}).

The $(4,1)$ periodic reduction of the 
lattice KdV equation, given by setting $N=4$ in (\ref{RedKdV}), 
 is equivalent to the $5$-dimensional birational map
\begin{equation} \label{mapkdv}
\chi: \qquad (v_0,v_1,v_2,v_3,v_4)
\mapsto 
\left(v_1,v_2,v_3,v_4,v_0+\alpha\Big(\frac{1}{v_4}-\frac{1}{v_1}\Big)\right).
\end{equation} 
This arises either by lifting (\ref{sympmap4}) to one dimension higher and  eliminating $K$, which becomes a
 first integral for (\ref{mapkdv}) in the form 
\beq\label{kform} 
K=v_0v_1v_2v_3v_4+\al v_1v_2v_3, 
\eeq or by using $v_n=u_nu_{n+2}$ to obtain the transformation 
$$ 
\pi_2: \qquad (u_0,u_1,u_2,u_3,u_4) 
\mapsto (u_0u_2,u_1u_3,u_2u_4,u_3u_5,u_4u_6). 
$$
In the first case, the nondegenerate bracket for (\ref{sympmap4}) lifts to  the bracket $\{\,,\}_1$ defined by 
\beq\label{4br1}
\{v_0,v_1\}_1 = v_0v_1,\,  \{v_0,v_2\}_1 = -v_0v_2, \, \{v_0,v_3\}_1 = v_0v_3,\, \{v_0,v_4\}_1 = -v_0v_4-\al,  
\eeq 
while the bracket (\ref{nubr}) is pushed forward by $\pi_2$ to the bracket $\{\,,\}_2$ specified by 
\beq\label{4br2}
\begin{array}{ll} 
\{v_0,v_1\}_2 = v_0v_1-\al, & \{v_0,v_2\}_2 = -v_0v_2+{\al^2}{v_1^{-2}},  \\ 
\{v_0,v_3\}_2 = v_0v_3-{\al^3}{(v_1v_2)^{-2}}, & \{v_0,v_4\}_2 = -v_0v_4+{\al^4}{(v_1v_2v_3)^{-2}}. 
\end{array} 
\eeq 
The Poisson brackets $\{\,,\}_{1,2}$ are compatible with each other, and both are preserved by 
(\ref{mapkdv}).

From the Lax representation of the KdV equation we derive the corresponding monodromy matrix for the 
$(4,1)$-reduction, as in  \cite{hkqt}, that is 
$$\mathcal{M}(v_0,v_1,v_2,v_3,v_4, \la )=M(v_4, \la )L(v_3,v_4, \la )L(v_2,v_3, \la )L(v_1,v_2, \la )L(v_0,v_1,\la ),$$
where  $\la$ is a  spectral parameter, and 
\begin{equation} \label{LaxPair2}
L(V,W,\lambda)=\left(
\begin{array}{cc}
V-\frac{\alpha}{W} & \ \lambda \\
 1 & 0
\end{array}
\right), \qquad M(V,\lambda)=\left(
\begin{array}{cc}
V & \ \lambda \\
 1 & \frac{\alpha}{V}
\end{array}
\right).
\end{equation}
The associated discrete Lax  equation for the  map  \eqref{mapkdv} is 
$$L(v_0,v_1,\lambda)\mathcal{M}(v_0,v_1,v_2,v_3,v_4,\la ) =\mathcal{M}(v_1,v_2,v_3,v_4,v_5,\la )
L(v_0,v_1,\lambda), $$
and the trace of the monodromy matrix is given by 
$$\tr \mathcal{M}(v_0,v_1,v_2,v_3,v_4,\la )=I_2 \lambda^2+ I_1 \lambda +I_0,$$
where the coefficients are  three functionally independent integrals, namely  
\begin{eqnarray*}
 I_0 &=& g_0 g_1 g_2 g_3 g_4 , \\ 
I_1 &=&  g_0 g_1 g_2+g_1 g_2 g_3+g_0 g_1 g_4+g_0 g_3 g_4+g_2 g_3 g_4 + \frac{\alpha g_2 g_3}{g_0}, \\ 
I_2 &=& g_0+g_1+g_2+g_3+g_4+\frac{\alpha}{g_0},
\end{eqnarray*}
conveniently expressed in terms of the quantities $g_0=v_0$  and $ g_i=v_i-  \alpha /v_{i-1}$ for $i=1,2,3,4$. 
Comparison with (\ref{kform}) reveals that $K$, 
a Casimir for the bracket $\{\, ,\,\}_1$,  is expressed as 
$$ 
K=I_2 \al^2+ I_1 \al +I_0, 
$$
while $I_0$ is a Casimir for $\{\, ,\,\}_2$, and all of these integrals are 
in involution with respect to both brackets. 

By setting $v_n=u_n u_{n+2}$, the $I_j$ pull back to three integrals for  the map (\ref{sh4}), which commute
with respect to the bracket defined by (\ref{nubr}) with $N=4$. A further pullback provides three independent commuting integrals 
for (\ref{usys4}), with a fourth one being the Casimir $\gam_0+\gam_1$.

\section{An odd example: $N=5$} \label{odd}

For $N=5$ the equation (\ref{equ}) is 
\beq\label{dtkq5} 
(u_n+u_{n+1}+u_{n+2}+u_{n+3}+u_{n+4}+u_{n+5}+\be )u_{n+1}u_{n+2}u_{n+3}u_{n+4}=\al, 
\eeq  
which via (\ref{tausub})  corresponds  to  
the degree 7 equation (\ref{multi5}), whose iterates also satisfy %
a Somos-7 recurrence with a period 2 coefficient, given by 
\beq\label{s7} 
\tau_{n+7}\tau_n =\gam_n\,\tau_{n+6}\tau_{n+1}+\al \tau_{n+5}\tau_{n+2}. 
\eeq  
The  U-system associated with (\ref{s7})  is 
\beq\label{u5} 
u_nu_{n+1}u_{n+2}u_{n+3}u_{n+4}=\gam_n u_{n+1}u_{n+3}+\al, \qquad \gam_n=\gam_{n+2}, 
\eeq 
and 
the nondegenerate log-canonical Poisson bracket in 4 dimensions for 
(\ref{u5}) is given by 
\beq\label{u5br} 
\{u_n,u_{n+1}\}=u_nu_{n+1}, \qquad 
\{u_n,u_{n+2}\}=0 = \{u_n,u_{n+3}\}. 
\eeq 

By eliminating $\beta$ from (\ref{dtkq5}), or eliminating $\gam_n$ from (\ref{u5}), we obtain a lift to the same 
equation in 6 dimensions, namely the $N=5$ case of (\ref{ushift}), which is equivalent to 
\beq\label{lift6} 
u_{n+6}-u_n = \frac{\al}{u_{n+2}u_{n+3}u_{n+4}}\,\left(\frac{1}{u_{n+5}}-\frac{1}{u_{n+1}}\right). 
\eeq 
Upon taking the bracket of both sides of (\ref{u5}) with $u_0$ for $n=0,1$, we see that (\ref{u5br}) 
lifts to a Poisson bracket of rank 4  in 6 dimensions,  with the additional brackets being  
\beq\label{addu5br} 
\{u_n,u_{n+4}\}=-\frac{\al}{u_{n+1}u_{n+2}u_{n+3}}, 
\qquad 
\{u_n,u_{n+5}\}=-u_nu_{n+5}+ \frac{\al^2}{u_{n+1}^2u_{n+2}^2u_{n+3}^2u_{n+4}^2}. 
\eeq 
This 6-dimensional 
bracket is preserved by (\ref{lift6}).

From Proposition 2.1 and Corollary 2.2 in \cite{hkw},  the bilinear equation (\ref{s7}) is the compatibility condition of the scalar Lax pair 
$$ 
Y_n\phi_{n+6} +\al\zet \phi_{n+4}=\xi\phi_n, \qquad \phi_{n+2}=\frac{1}{u_{n+1}}\Big(-\zet\phi_n +\phi_{n+1}\Big), 
$$ 
where $u_n$ is given in terms of the tau function  by (\ref{tausub}), 
$\zet$ and $\xi$ are  spectral parameters, and 
$$ 
Y_n =\frac{\tau_{n+8}\tau_n}{\tau_{n+6}\tau_{n+2}}=u_nu_{n+1}u_{n+2}u_{n+3}u_{n+4}u_{n+5}. 
$$ 
For $n=0$ the scalar Lax pair can be rewritten as a $2\times 2$ matrix system in terms of 
$u_0,u_1,\ldots, u_5$, leading directly to a Lax pair for the map 
$$\varphi: \qquad (u_0,\ldots,u_5)\mapsto (u_1,\ldots,u_6)$$ 
corresponding to (\ref{lift6}), given by 
\beq\label{lax6} 
{\bf L}(\zet )\Phi = \xi \Phi, \qquad  \tilde{\Phi} ={\bf M}(\zet) \Phi, 
\eeq 
with  the tilde denoting the index shift $n\to n+1$, and 
$$ 
{\bf L} (\zet) =\sum_{j=0}^3{\bf L}^{(j)}\zet^j, \qquad {\bf M}(\zet) = \left(\begin{array}{cc} 0 & 1 \\ 
-\frac{\zet}{u_1} & \frac{1}{u_1} \end{array} \right), 
$$ 
where
$$ 
{\bf L}^{(0)}  = \left(\begin{array}{cc} 0 & u_0 \\ 
0 & 1 \end{array} \right), \qquad 
{\bf L}^{(1)}  = \left(\begin{array}{cc}-u_0 &-\frac{\gam_0}{u_2}- u_0(u_1+u_2+u_3) \\ 
-1 & \beta+u_0 \end{array} \right),
$$ 
$$ 
{\bf L}^{(2)}  = \left(\begin{array}{cc}  \frac{\gam_0}{u_2}+u_0(u_2+u_3) & \gam_0+u_0u_1(u_3+u_4)-\frac{\al}{u_2u_3}  \\ 
-\be -u_0-u_1 & \gam_1\left(\frac{1}{u_1}+\frac{1}{u_3} \right) +u_1(u_3+u_4)+u_2(u_4+u_5)-\frac{\al}{u_1u_3u_4} \end{array} \right), 
$$ 
$$ 
{\bf L}^{(3)}  = \left(\begin{array}{cc}-\gam_0 & 0 \\ 
-\frac{\gam_1}{u_1}-u_2(u_4+u_5)+\frac{\al}{u_1u_3u_4}  & -\gam_1 \end{array} \right).
$$ 
In the above formulae, $\be,\gam_0,\gam_1$ stand for the functions of $u_j$ defined by (\ref{dtkq5}) for $n=0$, and by  
(\ref{u5}) for $n=0,1$, respectively.   
 
The compatibility condition for    the system (\ref{lax6}) is the discrete Lax equation 
$$ 
\tilde{{\bf L}}(\zet ){\bf M}(\zet )={\bf M}(\zet ){\bf L}(\zet ). 
$$  
The spectral curve corresponding to  the Lax matrix ${\bf L}(\zet )$  is a  curve of genus 2  in the $(\zet,\xi)$ plane, 
\beq\label{spec5} 
\det ({\bf L} (\zet)-\xi\mathbf{1})\equiv \xi^2 
+(K_3\zet^3-K_2\zet^2+K_1\zet -1)\xi +K_0\zet^6+\al\zet^5
=0, 
\eeq
whose coefficients $K_j$ provide 4 functionally independent first integrals for (\ref{lift6}), namely 
$$ 
K_0=u_0u_1u_2u_3u_4u_5-\al (u_0+u_5)+\frac{\al^2}{u_1u_2u_3u_4}, 
\, K_1= u_0+u_{1}+u_{2}+u_{3}+u_{4}+u_{5}- \frac{\al}{u_1u_2u_3u_4}, 
$$ 
$$ 
K_2= \sum_{j=0}^5 u_ju_{j+2}+\sum_{j=0}^2 u_ju_{j+3}-\al\,\left(\frac{1}{u_1u_2u_3}+ 
\frac{1}{u_1u_2u_4}+\frac{1}{u_1u_3u_4}+\frac{1}{u_2u_3u_4}\right), 
$$ 
$$  
K_3=u_0u_2u_4+u_1u_3u_5 -\al\,\left(\frac{1}{u_1u_3}+\frac{1}{u_2u_4}\right), 
$$ 
with indices read $\bmod \,6$ in the first sum above. 
The first integral in (\ref{zehat}) is 
 $$\hat{\zeta}=u_0 u_1 u_2 u_3 u_4 u_5+\alpha (u_1+u_2+u_3+u_4)= K_0+\al K_1.$$
From (\ref{dtkq5}) and (\ref{u5}) we 
can identify 
$$ 
K_0=\gam_0\gam_1, \qquad  K_3=\gam_0+\gam_1, 
\qquad 
 K_1=-\be= 
u_0+u_1+u_2+u_3+\frac{\gam_0}{u_0u_2}+\frac{\gam_1}{u_1u_3}+\frac{\al}{u_0u_1u_2u_3} ,  
$$ 
$$ 
K_2 = 
u_0u_2+u_0u_3+u_1u_3+\frac{\gam_0(u_0+u_1+u_2)}{u_0u_2}+ \frac{\gam_1(u_1+u_2+u_3)}{u_1u_3}
+\al\left(\frac{1}{u_0u_1u_3}+\frac{1}{u_0u_2u_3}\right). 
$$
By construction, if we consider $\gam_0,\gam_1$ as functions of $u_j$ defined by (\ref{u5}) for $n=0,1$, then these are 
Casimirs of the bracket given by (\ref{u5br}) and (\ref{addu5br}). Hence $K_0,K_3$ are also Casimirs of this bracket, and one can verify directly that $\{K_1,K_2\}=0$, which shows that  (\ref{lift6}) is a Liouville integrable map in 6 dimensions.   

For $N=5$, the other bilinear equation in Theorem \ref{maintau} is (\ref{hir3}), which in this case becomes
\beq\label{todatau} 
\tau_{n+12}\tau_n =\alpha^2 \tau_{n+10}\tau_{n+2} +\bar{K}\tau_{n+6}^2. 
\eeq  
From the substitution (\ref{tausub}), the conserved quantity $\bar K$ in (\ref{todatau}) can be 
written %
in terms of $u_j$ for $0\leq j\leq 9$, as defined in (\ref{intkbaru}), which gives 
$$
\bar{K} = u_0u_1u_2^2u_3^2u_4^3u_5^3u_6^2u_7^2u_8u_9-\al^2 
u_2u_3u_4^2u_5^2u_6u_7, 
$$
and then using (\ref{u5}) the resulting expression can be further rewritten 
as a function of $\gamma_j$ and only four adjacent $u_j$, which reveals that 
it is a polynomial in the 
quantities $K_j$ obtained from the Lax  pair above, that is 
\beq\label{kbarid} 
\bar{K} = K_0^3+\al K_0^2K_1+\al^2 K_0 K_2 +\al^3 K_3. 
\eeq  

For $P=3$, the recurrence \eqref{todared}  corresponds to the 
six-dimensional map 
\beq\label{todam} 
(v_0,v_1,\dots,v_5)\mapsto \left( v_1,v_2,\dots, \frac{v_1 v_3^2}{v_0 v_1+\alpha^2(v_3^2-v_1 v_5)}\right), 
\eeq  
and 
in this case the monodromy matrix (\ref{monod})  is  
 \begin{equation*}
 \mathcal{M}_{5} (\ze )= (1-\alpha \ze) M(v_5,v_{3},\ze)^{-1}L(p_{3},v_{3},\ze)L(p_{4},v_{4},\ze) L(p_5,v_5,\ze), 
 \end{equation*}
where, from (\ref{pred}),  there is dependence on $v_0,v_1,v_2$  via 
$p_3=\alpha v_3/{v_{1}}+{\alpha}^{-1}({v_{0}}/{ v_{3}}-1)$, and similarly for $p_4,p_5$. By taking the ratios 
$\hat{s}_n = v_n/v_{n+1}$, the iterates of  (\ref{todam}) can be reduced to those 
of  (\ref{shat}), which in this case yields the map 
\beq\label{shat5} 
\hat{\chi} : \qquad 
(\hat{s}_0, \hat{s}_1,\hat{s}_2,\hat{s}_3,\hat{s}_4)\mapsto 
\left(\hat{s}_1,\hat{s}_2,\hat{s}_3,\hat{s}_4, 
\frac{\hat{s}_0\hat{s}_1^2\hat{s}_2^2\hat{s}_3\hat{s}_4+\al^2 (\hat{s}_3\hat{s}_4-\hat{s}_1\hat{s}_2)}
{\hat{s}_1\hat{s}_2\hat{s}_3^2\hat{s}_4^2} 
\right) , 
\eeq  
and by Proposition \ref{pbrshat} this preserves the Poisson bracket in 5 dimensions given by 
\beq\label{s5br}\begin{array}{ll} 
\{ \hat{s}_n,\hat{s}_{n+1} \} = \hat{s}_n\hat{s}_{n+1}, \, &  
\{ \hat{s}_n,\hat{s}_{n+2} \} = -2\hat{s}_n\hat{s}_{n+2}, \\
\{ \hat{s}_n,\hat{s}_{n+3} \} = 2\hat{s}_n\hat{s}_{n+3}, & 
\{ \hat{s}_n,\hat{s}_{n+4} \} = -{\al^2}({\hat{s}_{n+1}\hat{s}_{n+2}\hat{s}_{n+3}})^{-1}.
\end{array}    
\eeq 
By construction, the above bracket has 
$$\bar{K}=\hat{s}_1\hat{s}_2^2 \hat{s}_3(\hat{s}_0 \hat{s}_1\hat{s}_2\hat{s}_3 \hat{s}_4-\alpha^2)$$ 
as a Casimir. 
The trace of the monodromy matrix 
is a monic cubic polynomial in $\eta$, 
 $$\tr  \mathcal{M}_{5} (\ze )= \ze^3+H_2 \ze^2+H_1\ze+H_0, $$
 where $H_0$, $H_1$, $H_2$ provide three functionally independent first integrals for the map (\ref{todam}), 
but since they depend only on the ratios $v_n/v_{n+1}$ they are also first integrals for (\ref{shat5}), with the 
explicit expressions 
\begin{eqnarray*}
{H}_2 &=& \frac{1}{\alpha}(\hat{s}_0 \hat{s}_1 \hat{s}_2+\hat{s}_1 \hat{s}_2 \hat{s}_3+\hat{s}_2 \hat{s}_3 \hat{s}_4-3)
+\alpha\left(\frac{1}{\hat{s}_1 \hat{s}_2}+\frac{1}{\hat{s}_2 \hat{s}_3}\right), \\ 
{H}_1 &=& \frac{1}{\alpha^2}(\hat{s}_0 \hat{s}_1^2 \hat{s}_2^2 \hat{s}_3+\hat{s}_0 \hat{s}_1 \hat{s}_2^2 \hat{s}_3 \hat{s}_4+
\hat{s}_1 \hat{s}_2^2 \hat{s}_3^2 \hat{s}_4-3) 
+\frac{\hat{s}_3 \hat{s}_4}{\hat{s}_1}+\frac{\hat{s}_0 \hat{s}_1}{\hat{s}_3}-\hat{s}_2
+\frac{\alpha^2}{\hat{s}_1 \hat{s}_2^2 \hat{s}_3}-\frac{2H_2}{\alpha}. 
\end{eqnarray*}
The formula for $H_0$ has been omitted, since it is related to $H_1,H_2$ and the Casimir $\bar K$ by 
\beq\label{kbid} \bar{K}=\alpha^3 {H}_0+\alpha^2 {H}_1+\alpha {H}_2+1.\eeq 
Then a direct computation of the bracket $\{H_1,H_2\}=0$ using (\ref{s5br}) shows that the map (\ref{shat5}) is Liouville 
integrable.   

By setting $\hat{s}_n=u_{2n} u_{2n+1}$, it follows from Theorem  \ref{todatheorem} that the quantities $H_j$ coming from the monodromy matrix pull back to first integrals 
for  (\ref{lift6}). By a slight abuse of notation, we use the same symbols to denote the pullbacks of these integrals, and 
explain how they can be rewritten as functions of the quantities $K_j$ found previously. The key point is that, for fixed $K_0$, 
the spectral curve in the $(\ze , \mu )$ plane  coming from the monodromy matrix, that is 
$$ 
\det ({\cal M}_5 (\ze ) - \mu \mathbf{1} )\equiv \mu^2 - (\ze^3+H_2 \ze^2+H_1\ze+H_0 )\mu + 1-\al \ze =0, 
$$ 
is isomorphic to (\ref{spec5}) via the change of coordinates
$$ 
\ze = \ka -\zet^{-1}, \quad \mu = -\xi \zet^{-3}, \qquad \mathrm{with} \qquad  K_0=1-\al\ka. 
$$ 
This leads to the relations 
$$ 
 H_0 = K_3 -\ka K_2+\ka^2 K_3-\ka^3,
 \qquad H_1=K_2-2\ka K_1+3\ka^2, \qquad H_2=K_1-3\ka, 
$$ 
so that the identity (\ref{kbid}) for $\bar K$ follows immediately from (\ref{kbarid}).

\section{Conclusions} 

We have shown that the key to understanding the integrability of the  family of maps  considered in \cite{DTKQ} is to introduce an additional 
parameter $\beta$, as in (\ref{equ}), and then  lift  to one dimension higher, eliminating this parameter  to obtain (\ref{ushift}). 
Although the properties of the map differ according to the parity of the dimension $N$, 
the Poisson bracket preserved by (\ref{ushift}),  in  $N+1$ dimensions, is given by the same formulae (\ref{nubr}) for both even and odd $N$.   For the case of even $N$, we have found that the Liouville integrability of (\ref{ushift}) follows from 
the corresponding results for reductions of  Hirota's lattice KdV equation, considered in previous work. 
For odd $N$, the situation is more complicated: 
the connection with  a reduction of the bilinear discrete KP (Hirota-Miwa) equation provides  a Poisson bracket, a Lax pair, 
and a set of first integrals, but showing that these are in involution requires more work, and a general proof is lacking. 
On the other hand, for $N=2P-1$, there 
is an intriguing connection with a B\"acklund transformation (BT) for the $(1,-P)$ reduction of the discrete time Toda equation  
(\ref{Todaeq}). For the general $(Q,-P)$ Toda reductions, considered briefly in \cite{hkw},  it would be interesting to construct a BT and see if there is a natural analogue of (\ref{ushift}) for $Q>1$.  

The  starting point for all of the results in section 3 was the derivation of the Hirota bilinear equations associated with (\ref{equ}). This was achieved in two ways: first of all, in section 2, via an experimental approach involving the singularity 
confinement test (or an arithmetical version of it), followed by the lift to a Laurentification of (\ref{equ}), whose tropical 
analogue yields an exact calculation of degree growth; and secondly, once numerical and symbolic calculations  produced  
bilinear equations for particular (small enough) values of $N$, by proving suitable algebraic identities in the general case. 
This combination of analytical, numerical and algebraic methods appears to be very effective, and we propose to apply it to 
other families of difference equations or maps in the future. 

\vspace{.1in}

\noindent 
{\bf Acknowledgements:} 
ANWH is supported by EPSRC fellowship EP/M004333/1. 
This collaboration was supported by the Australian Research Council. We are grateful to Dinh Tran and Peter van der Kamp for useful discussions on related matters.

\end{document}